\definecolor{mygreen}{RGB}{28,172,0} 
\definecolor{mylilas}{RGB}{170,55,241}
\newtheorem{lemma}{Lemma}
\newtheorem{definition}{Definition}
\newtheorem{theorem}{Theorem}
\newcommand{\X}{\mathcal{X}}
\newcommand{\eps}{\varepsilon}
\newcommand{\1}{\textbf{1}}
\newcommand{\G}{\mathcal{G}}
\newcommand{\Gp}{\mathcal{G}_+}
\newcommand{\Gm}{\mathcal{G}_-}
\newcommand{\Dp}{D_{\mathcal{G}}^+}
\newcommand{\Dm}{D_{\mathcal{G}}^-}
\newcommand{\pp}{\frac{\alpha\log(n)}{n}}
\newcommand{\qq}{\frac{\beta\log(n)}{n}}
\newcommand{\A}{\frac{\alpha \log(n)}{n}}
\newcommand{\B}{\frac{\beta \log(n)}{n}}
\newcommand{\ap}{\alpha^+_{ij}}
\newcommand{\am}{\alpha^-_{ij}}
\newcommand{\E}{\mathbb{E}}
\newcommand{\OOO}{O}
\newcommand{\p}{\mathbb{P}}
\newcommand{\tr}{\operatorname{Tr}}
\begin{document}

\title{Exact Recovery in the Stochastic Block Model}

\author{Emmanuel Abbe~\thanks{Program in Applied and Computational Mathematics (PACM) and Department of Electrical Engineering, Princeton University, Princeton, NJ 08544, USA ({\tt eabbe@princeton.edu}).}
\and
Afonso S.~Bandeira~\thanks{PACM, Princeton University, Princeton, NJ 08544, USA ({\tt ajsb@math.princeton.edu}). ASB was supported by AFOSR Grant No. FA9550-12-1-0317.}
\and
Georgina Hall~\thanks{Department of Operations Research and Financial Engineering, Princeton University, Princeton, NJ 08544, USA ({\tt gh4@princeton.edu}).}
}

%
%

\date{}

\maketitle

\begin{abstract}
The stochastic block model (SBM) with two communities, or equivalently the planted bisection model, is a popular model of random graph exhibiting a cluster behaviour. 
In the symmetric case, the graph has two equally sized clusters and vertices connect with probability $p$ within clusters and $q$ across clusters. 
 In the past two decades, a large body of literature in statistics and computer science has focused on providing lower-bounds on the scaling of $|p-q|$ to ensure exact recovery. 
In this paper, we identify a sharp threshold phenomenon for exact recovery: if $\alpha=pn/\log(n)$ and $\beta=qn/\log(n)$ are constant (with $\alpha>\beta$), recovering the communities with high probability is possible if $\frac{\alpha+\beta}{2} - \sqrt{\alpha \beta}>1$ and impossible if $\frac{\alpha+\beta}{2} - \sqrt{\alpha \beta}<1$. In particular, this improves the existing bounds. 
This also sets a new line of sight for efficient clustering algorithms. While maximum likelihood (ML) achieves the optimal threshold (by definition), it is in the worst-case NP-hard. This paper proposes an efficient algorithm based on a semidefinite programming relaxation of ML, which is proved to succeed in recovering the communities close to the threshold, 
 while numerical experiments suggest it may achieve the threshold. An efficient algorithm which succeeds all the way down to the threshold is also obtained using a partial recovery algorithm combined with a local improvement procedure. 
\end{abstract}

\section{Introduction}
%
%

Learning community structures in graphs is a central problem in machine learning, computer science and complex networks.   
Increasingly, data is available about interactions among agents (e.g., social, biological, computer or image networks), and the goal is to infer from these interactions communities that are alike or complementary.
As the study of community detection grows at the intersections of various fields, in particular computer science, machine learning, statistics and social computing, the notions of clusters, the figure of merits and the models vary significantly, often based on heuristics (see \cite{airoldi} for a survey). As a result, the comparison and validation of clustering algorithms remains a major challenge. Key enablers  to benchmark algorithms and to measure the accuracy of clustering methods are statistical network models. More specifically, the stochastic block model has been at the center of the attention in a large body of literature  \cite{holland,sbm1,sbm3,sbm4,sbm5,dyer,newman2,mcsherry,sbm-book,decelle,Massoulie_SBM,Mossel_SBM1,Mossel_SBM2}, as a testbed for algorithms (see \cite{sbm-algos} for a survey) as well as a scalable model for large data sets (see \cite{prem} and reference therein). On the other hand, the fundamental analysis of the stochastic block model (SBM) is still holding major open problems, as discussed next. 




The SBM can be seen as an extension of the Erd\H{o}sR\'enyi (ER) model \cite{ER-seminal,ER2}. 
In the ER model, edges are placed independently with probability $p$, providing a models described by a single parameter.
This model has been (and still is) a source of intense research activity, in particular due to its phase transition phenomena.  
It is however well known to be too simplistic to model real networks, in particular due to its strong homogeneity and absence of community structure. 
The stochastic block model is based on the assumption that agents in a network connect not independently but based on their profiles, or equivalently, on their community assignment. 
More specifically, each node $v$ in the graph is assigned a label $x_v \in \X$, where $\X$ denotes the set of community labels, and each pair of nodes $u,v \in V$ is connected with probability $p(x_u,x_v)$, where $p(\cdot, \cdot)$ is a fixed probability matrix. Upon observing the graph (without labels), the goal of community detection is to reconstruct the community assignments, with either full or partial recovery.

Of particular interest is the SBM with two communities and symmetric parameters, also known as the planted bisection model, denoted in this paper by $\mathcal{G}(n,p,q)$, with $n$ an even integer denoting the number of vertices. In this model, the graph has two clusters of equal size, and the probabilities of connecting are $p$ within the clusters and $q$ across the clusters (see Figure \ref{cluster-plot}). Of course, one can only hope to recover the communities up to a global flip of the labels, in other words, only the partition can be recovered. Hence we use the terminology {\it exact recovery} or simply {\it recovery} when the partition is recovered correctly with high probability (w.h.p.), i.e., with probability tending to one as $n$ tends to infinity. When $p=q$, it is clearly impossible to recover the communities, whereas for $p>q$ or $p<q$, one may hope to succeed in certain regimes.
While this is a toy model, it captures some of the central challenges for community detection. 

\begin{figure}[h]
\centering
\begin{subfigure}{.5\textwidth}
  \centering
  \includegraphics[width=.5\linewidth]{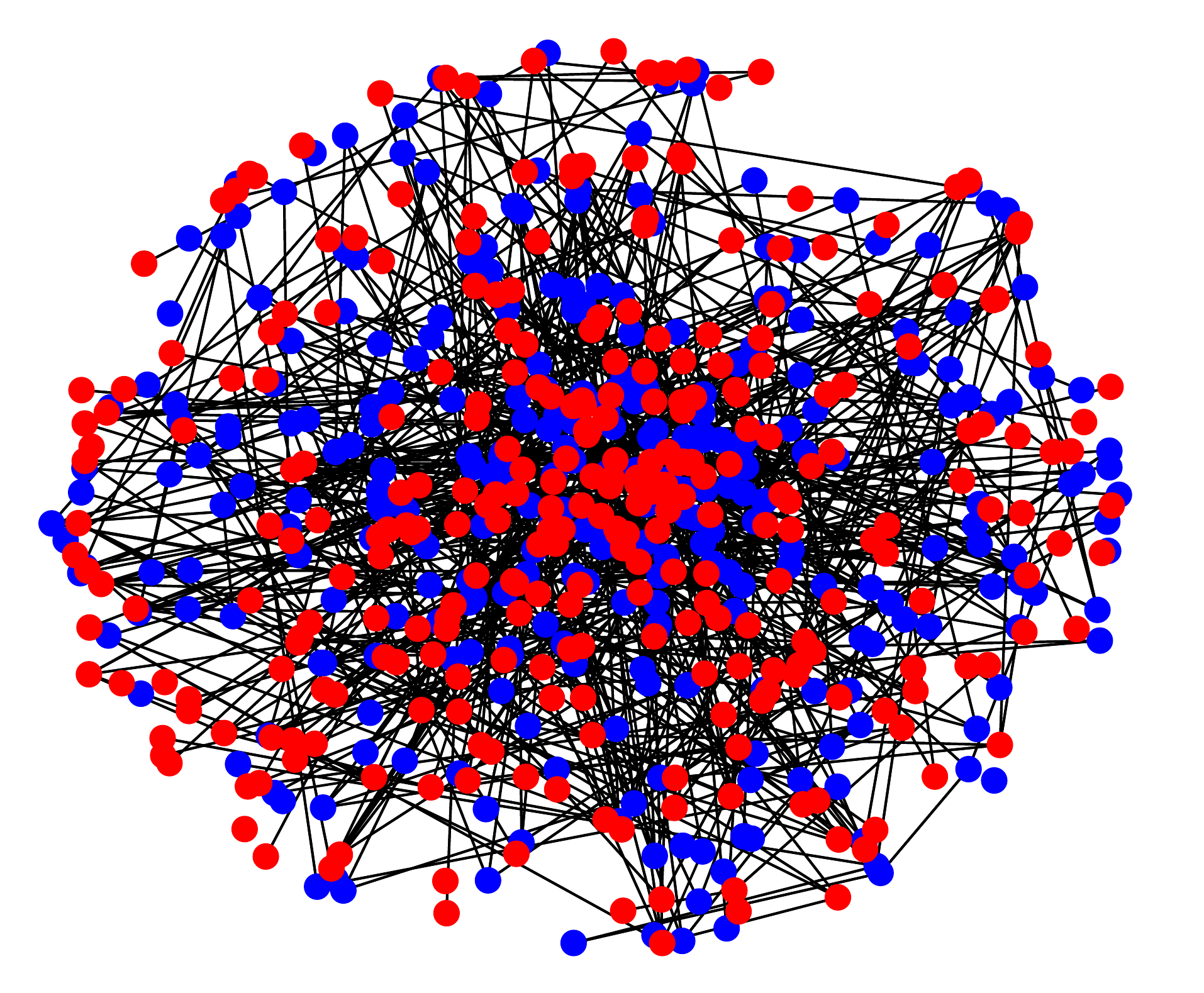}
\end{subfigure}%
\begin{subfigure}{.5\textwidth}
  \centering
  \includegraphics[width=.9\linewidth]{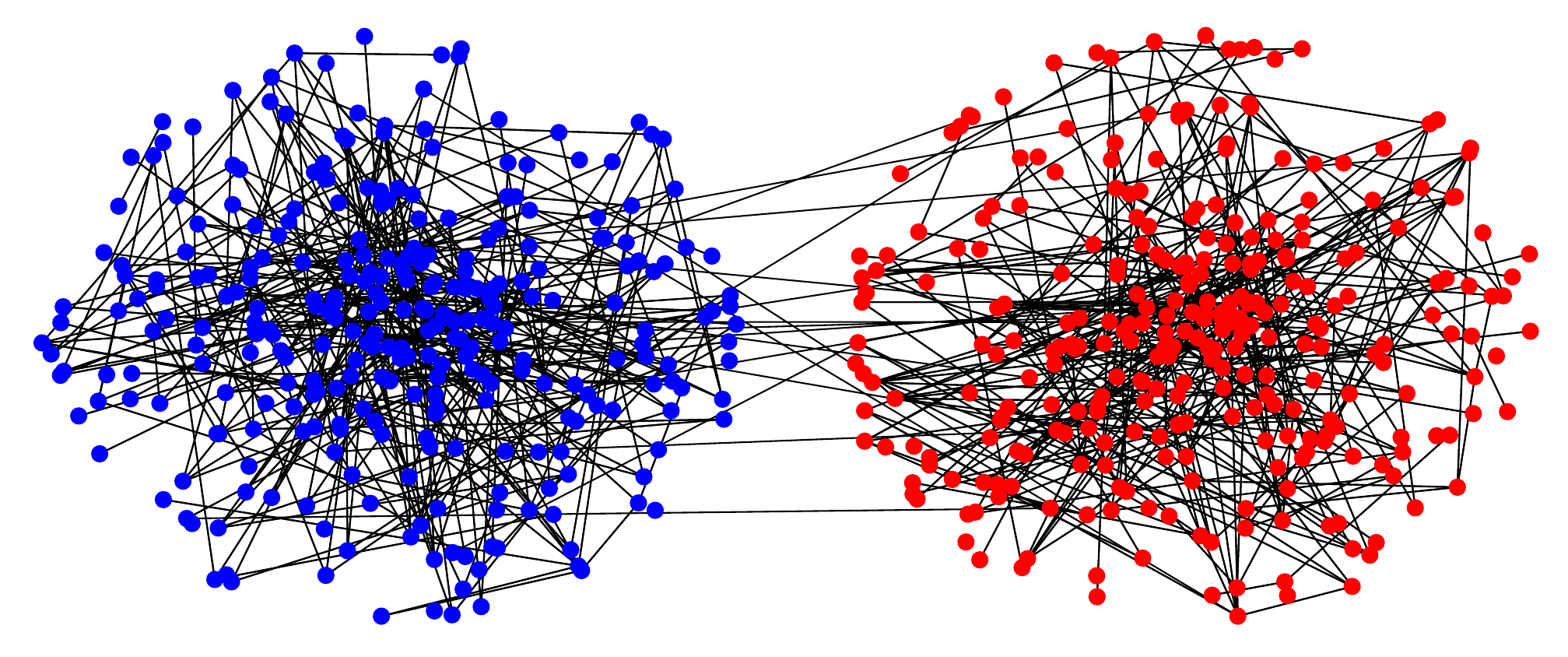}
\end{subfigure}
\caption{A graph generated form the stochastic block model with 600 nodes and 2 communities, scrambled on the left and clustered on the right. Nodes in this graph connect with probability $p=6/600$ within communities and $q=0.1/600$ across communities. 
}
\label{cluster-plot}
\end{figure}

A large body of literature in statistics and computer science \cite{bui,dyer,boppana,snij,jerrum,condon,carson,mcsherry,bickel,rohe,choi} has focused on determining lower-bounds on the scaling of $|p-q|$ for which efficient algorithms succeed in recovering the two communities in $\mathcal{G}(n,p,q)$.  
We overview these results in the next section. The best bound seems to come from \cite{mcsherry}, ensuring recovery for $(p - q)/\sqrt{p} \geq \Omega (\sqrt{\log(n)/n })$, and has not be improved for more than a decade. More recently, a new phenomena has been identified for the SBM in a regime where $p=a/n$ and $q=b/n$ \cite{decelle}. In this regime, exact recovery is not possible, since the graph is, with high probability, not connected.
However, partial recovery is possible, and the focus has been shifted on determining for which regime of $a$ and $b$ it is possible to obtain a reconstruction of the communities which is asymptotically better than a random guess (which gets roughly $50\%$ of accuracy). In other words, to recover only a proportion $1/2 + \eps$ of the vertices correctly, for some $\eps >0$. We refer to this reconstruction requirement as {\it detection}. In \cite{decelle}, it was conjectured that detection is possible if and only if $(a-b)^2> 2 (a+b)$. This is a particularly fascinating and strong conjecture, as it provides a necessary and sufficient condition for detection with a sharp closed-form expression. The study of this regime was initiated with the work of Coja-Oghlan \cite{coja-sbm}, which obtains detection when $(a-b)^2> 2 \log(a+b) (a+b)$ using spectral clustering on a trimmed adjacency matrix. 
The conjecture was recently proved by Massoulie \cite{Massoulie_SBM} and Mossel et al.\ \cite{Mossel_SBM2} using two different efficient algorithms. The impossibility result was first proved in \cite{Mossel_SBM1}.

While the sparse regime with constant degree points out a fascinating threshold phenomena for the detection property, it  also raises a natural question: does exact recovery also admit a similar phase transition? Most of the literature has been focusing on the scaling of the lower-bounds, often up to poly-logarithmic terms, and the answer to this question appears to be currently missing in the literature. In particular, we did not find tight impossibility results, or guarantees of optimality of the proposed algorithms. This paper answers this question, establishing a sharp phase transition for recovery, obtaining a tight bound with an efficient algorithm achieving it. 



\section{Related works}\label{related}

There has been a significant body of literature on the recovery property for the stochastic block model with two communities $\mathcal{G}(n,p,q)$, ranging from computer science and statistics literature to machine learning literature. We provide next a partial\footnote{The approach of McSherry was recently simplified and extended in \cite{Vu-arxiv}.} list of works that obtain bounds on the connectivity parameters to ensure recovery with various algorithms:
\begin{center}
  \begin{tabular}{| l |  c |c   | }
  \hline
\cite{bui} Bui, Chaudhuri, Leighton, Sipser '84  & min-cut method& $p = \Omega(1/n), q=o(n^{-1-4/((p+q)n)})$\\
  \hline
\cite{dyer} Dyer, Frieze '89  & min-cut via degrees & $p -q = \Omega(1)$\\
  \hline
\cite{boppana} Boppana '87  & spectral method & $(p -q)/\sqrt{p+q} = \Omega(\sqrt{\log(n)/n})$\\
  \hline
\cite{snij} Snijders, Nowicki '97  & EM algorithm & $p -q = \Omega(1)$\\
\hline
\cite{jerrum} Jerrum, Sorkin '98  &Metropolis aglorithm & $p -q= \Omega(n^{-1/6+\eps})$\\
   \hline
\cite{condon} Condon, Karp '99 & augmentation algorithm & $p -q= \Omega(n^{-1/2+\eps})$\\
  \hline
\cite{carson} Carson, Impagliazzo '01  & hill-climbing algorithm & $p-q= \Omega(n^{-1/2} \log^4(n))$\\
 \hline
\cite{mcsherry} Mcsherry '01  & spectral method & $(p - q)/\sqrt{p} \geq \Omega (\sqrt{\log(n)/n })$\\
 \hline
\cite{bickel} Bickel, Chen '09  & N-G modularity & $(p -q)/\sqrt{p+q} = \Omega(\log(n)/\sqrt{n})$\\
 \hline
\cite{rohe} Rohe, Chatterjee, Yu '11  & spectral method & $p -q= \Omega(1)$\\
  \hline
  \end{tabular}
\end{center}
While these algorithmic developments are impressive, we next argue how they do not reveal the sharp behavioral transition that takes place in this model. In particular, we will obtain an improved bound that is shown to be tight. 


\section{Information theoretic perspective and main results}
In this paper, rather than starting with a specific algorithmic approach,
we first seek to establish the information-theoretic threshold for recovery irrespective of efficiency requirements.
Obtaining an information-theoretic benchmark, we then seek for an efficient algorithm that achieves it.   
There are several reasons to expect that an information-theoretic phase transition takes place for recovery in the SBM:
\begin{itemize}
\item From a random graph perspective, note that recovery requires the graph to be at least connected (with high probability), hence $(\alpha+\beta)/2>1$, for $p=\alpha \log(n)/n$ and $q=\beta \log(n)/n$ is necessary. In turn, if $\alpha=0$ or $\beta=0$, then $(\alpha+\beta)/2<1$ prohibits recovery (since the model has either two separate Erd\H{o}s-Re\'enyi graphs that are not connected, or a bipartite Erd\H{o}s-Re\'enyi graph which is not connected). So one can expect that recovery take place in the regime $p=\alpha \log(n)/n$ and $q=\beta \log(n)/n$, if and only if $f(\alpha,\beta)>1$ for some function $f$ that satisfies $f(\alpha,0)=\alpha/2$, $f(0,\beta)=\beta/2$ and where $f(\alpha,\beta)>1$ implies $(\alpha+\beta)/2>1$. In particular, such a result has been shown to take place for the detection property  \cite{Massoulie_SBM,Mossel_SBM2}, where a giant component is necessary, i.e., $(a+b)/2 >1$ for $p=a/n$ and $q=b/n$, and where detection is shown to be possible if and only if $(a+b)/2 + 2ab/(a+b)>1$ (which is equivalent to $(a-b)^2 > 2(a+b)$).  
Note also that the regime $p=\alpha \log(n)/n$ and $q=\beta \log(n)/n$ is the bottleneck regime for recovery, as other regimes lead to extremal behaviour of the model (either trivially possible or impossible to recover the communities).  
 
\item From an information theory perspective, note that the SBM can be seen as specific code on a discrete memoryless channel. Namely, the community assignment is a vector $x \in \{0,1\}^n$,  the graph is a vector (or matrix) $y \in \{0,1\}^N$, $N={n \choose 2}$, where $y_{ij}$ is the output of $x_i \oplus x_j$ through the discrete memoryless channel $(\begin{smallmatrix} 1-p & p \\ 1-q & q \end{smallmatrix})$, for $1 \leq i<j \leq n$. The problem is hence to decode $x^n$ from $y^N$ correctly with high probability. 

This information theory model is a specific structured channel: first the channel is memoryless but it is not time homogeneous, since $p=a \log(n)/n$ and $q=b \log(n)/n$ are scaling with $n$. 
Then the code has a specific structure, it has constant right-degree of 2 and constant left-degree of $n-1$, and rate $2/(n-1)$. However, as shown in \cite{abbetoc} for the constant-degree regime, this model can be approximated by another model where the sparsity of the channel (i.e., the fact that $p$ and $q$ tend to 0) can be transferred to the code, which becomes an LDGM code of constant degree 2, and for which maximum-likelihood is expected to have a phase transition \cite{abbetoc,kumar}. It is then legitimate to expect a phase transition, as in coding theory, for the recovery of the input (the community assignment) from the output (the graph). 
\end{itemize}

\begin{figure}[h]
\begin{center}
\includegraphics[scale=.33]{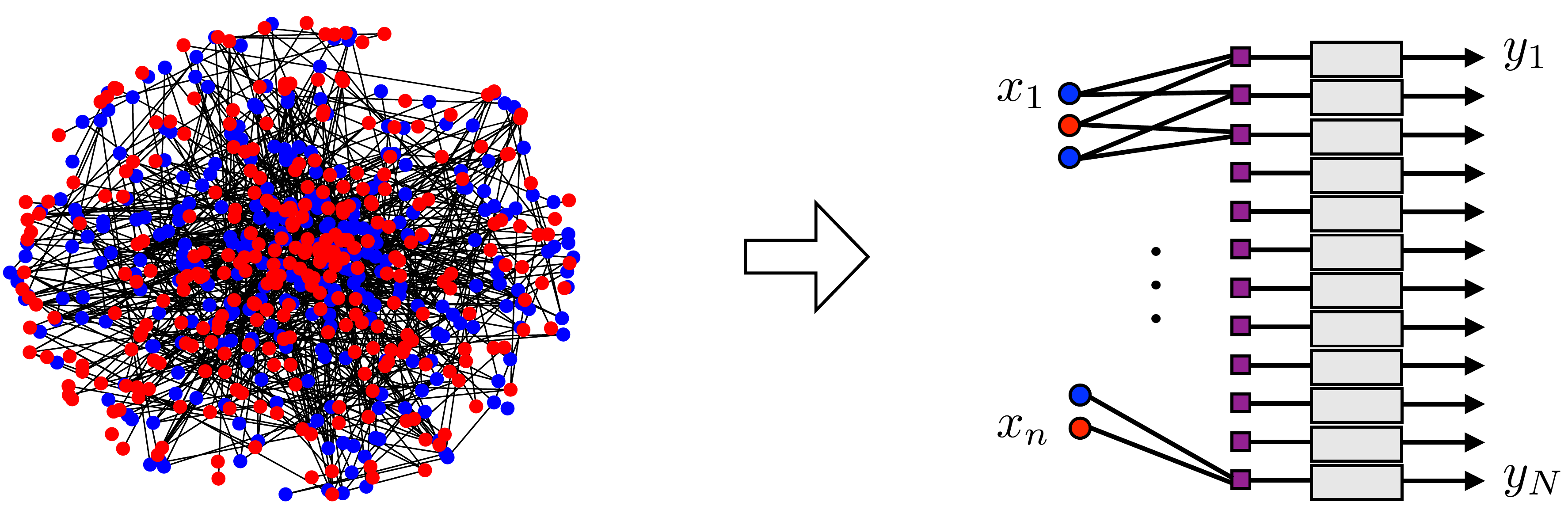}
\caption{Figure 2: A graph model like the stochastic block model where edges are drawn dependent on the node profiles (e.g., binary profiles) can be seen as a special (LDGM) code on a memoryless channel.}
\label{meet}
\end{center}
\end{figure}

To establish the information-theoretic limit, note that, as for channel coding, the algorithm maximizing the probability of reconstructing the communities correctly is the Maximum A Posteriori (MAP) decoding. Since the community assignment is uniform, MAP is in particular equivalent to Maximum Likelihood (ML) decoding. Hence if ML fails in reconstructing the communities with high probability when $n$ diverges, there is no algorithm (efficient or not) which can succeed with high probability. However, ML amounts to finding a balanced cut (a bisection) of the graph which minimizes the number of edges across the cut (in the case $a>b$), i.e., the min-bisection problem, which is well-known to be NP-hard. Hence ML can be used\footnote{ML was also used for the SBM in \cite{choi}, requiring however poly-logarithmic degrees for the nodes.}  to establish the fundamental limit but does not provide an efficient algorithm, which we consider in a second stage.

We now summarize the main results of this paper. Theorem~\ref{theorem:mainlowerbound} and Theorem~\ref{theorem:main_upperbound_2} provide the information-theoretic limit for recovery. 
Theorem~\ref{theorem:mainlowerbound} establishes the converse, showing that the maximum likelihood estimator does not coincide with the planted partition w.h.p.\  if $(\alpha + \beta)/2 - \sqrt{\alpha \beta} < 1$ and Theorem~\ref{theorem:main_upperbound_2} states that ML succeeds w.h.p.\  if $(\alpha + \beta)/2 - \sqrt{\alpha \beta} > 1$. 
One can express the recovery requirement as 
\begin{align}
(\alpha + \beta)/2  > 1 + \sqrt{\alpha \beta}
\end{align}
where $(\alpha + \beta)/2  > 1$ is the requirement for the connectivity threshold (which is necessary), and the oversampling term $\sqrt{\alpha \beta}$ is needed to allow for recovery (this is also equivalent to $\sqrt{\alpha}- \sqrt{\beta} >1$ for $\alpha >\beta$). Analyzing ML requires a sharp analysis of the tail event of the sum of discrete random variables tending to constants with the number of summands. Interestingly, standard estimates \`a la CLT, Chernoff, or Sanov's Theorem do not provide the right answer in our regime due to the slow concentration taking place.

Note that the best bounds from the table of Section \ref{related} are obtained from \cite{boppana} and \cite{mcsherry}, which allow for recovery in the regime where $p=\alpha \log(n)/n$ and $q=\beta \log(n)/n$, obtaining the conditions $(\alpha-\beta)^2 > 64(\alpha+\beta)$ in~\cite{mcsherry} and $(\alpha-\beta)^2 > 72(\alpha+\beta)$ in~\cite{boppana}.  Hence, although these works reach the scaling for $n$ where the threshold takes place, they do not obtain the right threshold behaviour in terms the parameters $\alpha$ and $\beta$.

For efficient algorithms, we propose first an algorithm based on a semidefinite programming relaxation of ML, and show in Theorem~\ref{theorem:main_SDP_provenintheappendix} that it succeeds in recovering the communities w.h.p.\ when $(\alpha - \beta)^2 > 8(\alpha + \beta)+ 8/3(\alpha - \beta)$. This is shown by building a candidate dual certificate and showing that it indeed satisfies all the require properties, using Berstein's matrix inequality.
To compare this expression with the optimal threshold, the latter can be rewritten as $(\alpha - \beta)^2 > 4(\alpha + \beta) - 4$ and $\alpha + \beta >2$. 
The SDP is hence provably successful with a slightly looser threshold. It however already improves on the state of the art for exact recovery in the SBM, since the above condition is implied by $(\alpha-\beta)^2 > 31/3(\alpha+\beta)$, which improves on~\cite{mcsherry}. 
Moreover, numerical simulations suggest that the SDP algorithm works all the way down to the optimal threshold, and the analysis may not be tight. The success of the SDP algorithm under the model of this paper, suggest that it may have robustness properties relevant in practical contexts.     

Finally, we provide in Section \ref{black} an efficient algorithm whose guarantees match the information theoretical threshold, using an efficient partial recovery algorithm, followed by a procedure of local improvements. 

Summary of the regimes and thresholds:
\begin{center}
  \begin{tabular}{| l ||  c | c  | }
   \hline
    & Giant component  & Connectivity \\ 
   ER model $G(n,p)$ &  $p=\frac{c}{n}$, $c > 1$  & $p=\frac{c \log(n)}{n}$, $c>1$ \\ 
   &\cite{ER2} & \cite{ER2}\\
    \hline
    \hline
           & Detection  & Recovery \\ 
    SBM model $G(n,p,q)$ & $p=\frac{a}{n}$, $q=\frac{b}{n}$,  $(a-b)^2 >2(a+b)$  & $p=\frac{a \log(n)}{n}$, $q=\frac{b\log(n)}{n}$,  $\frac{a+b}{2} - \sqrt{ab} >1$ \\ 
    &\cite{Massoulie_SBM,Mossel_SBM2} & (This paper)\\
    \hline
  \end{tabular}
\end{center}

\section{Additional related literature}\label{lit}



From an algorithmic point of view, the censored block model investigated in \cite{Abbe_Z2SynchJournal,Abbe_Z2SynchER} is also related to this paper. It considers the following problem: $G$ is a random graph from the $ER(n,p)$ ensemble, and each node $v$ is assigned an unknown binary label $x_v$. For each edge $(i,j)$ in $G$, the variable $Y_{ij}=x_i+x_j+Z_{ij} \mod 2$ is observed, where $Z_{ij}$ are i.i.d.\ Bernouilli($\epsilon$) variables. The goal is to recover the values of the node variables from the $\{Y_{ij}\}$ variables. Matching bounds are obtained in \cite{Abbe_Z2SynchJournal,Abbe_Z2SynchER} for $\eps$ close to $1/2$, with an efficient algorithm based on SDP, which is related to the algorithm developed in this paper. 

Shortly after the posting of this paper on arXiv, a paper of Mossel, Neeman and Sly \cite{mossel-consist}, fruit of a parallel research effort, was posted for the recovery problem in $\mathcal{G}(n,p,q)$. In \cite{mossel-consist}, the authors obtained a similar type of result as in this paper, slightly more general, allowing in particular for the parameters $a$ and $b$ to depend on $n$ as long as both parameters are $\Theta(1)$. 

%
%

\section{Information theoretic lower bound}
In this section we prove an information theoretic lower bound for exact recovery on the stochastic block model. The techniques are similar to the estimates for decoding a codeword on a memoryless channel with a specific structured codes. 

Recall the $\mathcal{G}(n,p,q)$ stochastic block model: $n$ denotes the number of vertices in the graph, assumed to be even for simplicity, for each vertex $v \in [n]$, a binary label $X_v$ is attached, where $\{X_v\}_{v \in [n]}$ are uniformly drawn such that $|\{v \in [n] : X_v =1\}|=n/2$, and for each pair of distinct nodes $u,v \in [n]$, an edge is placed with probability $p$ if $X_u=X_v$ and $q$ if $X_u \neq X_v$, where edges are placed independently conditionally on the vertex labels. In the sequel, we consider $p=\alpha \log(n)/n$ and $q=\beta \log(n)/n$, and focus on the case $\alpha > \beta$ to simplify the writing.

\begin{theorem}\label{theorem:mainlowerbound}
Let $\alpha>\beta \geq 0$. If $(\alpha + \beta)/2 - \sqrt{\alpha \beta} < 1$, or equivalently, if either $\alpha + \beta < 2$ or $(\alpha-\beta)^2 < 4(\alpha+\beta) - 4$ and $\alpha + \beta \geq 2$, then ML fails in recovering the communities with probability bounded away from zero.
\end{theorem}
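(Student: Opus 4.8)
The plan is to show that, with probability bounded away from zero, there is a balanced partition distinct from the planted one with no more edges across the cut; since for $\alpha>\beta$ maximum likelihood returns a minimum bisection (as recalled in the excerpt), the planted partition would then fail to be the unique minimizer and ML cannot recover it. Write $A,B$ for the two planted communities, and for a vertex $v$ let $d_v^+$ (resp.\ $d_v^-$) be its number of neighbours inside (resp.\ outside) its own community; these are independent binomials with parameters $(n/2-1,p)$ and $(n/2,q)$. A direct edge-by-edge count shows that swapping $u\in A$ with $w\in B$ changes the number of cut edges by exactly $\Delta=-(d_u^- - d_u^+)-(d_w^- - d_w^+)+2\,\I[\{u,w\}\in E]$. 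Hence if I can find $u\in A$ and $w\in B$ that are both \emph{misclassified} in the sense $d_v^-\ge d_v^+ +1$, then $\Delta\le 0$, the swapped partition is balanced, distinct and no worse, and ML fails. The whole statement thus reduces to producing one misclassified vertex on each side.

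The heart of the matter is a sharp two-sided estimate of the single-vertex probability $p_{\mathrm{bad}}:=\p(d_v^-\ge d_v^+ +1)$. The exponential rate comes from optimizing the Chernoff bound $\p(d_v^- - d_v^+\ge 1)\le \min_{t\ge 0}\E[e^{t(d_v^- - d_v^+)}]$: since $(1-q+qe^t)^{n/2}=n^{\beta(e^t-1)/2+o(1)}$ and analogously for the $-d_v^+$ factor, the exponent equals $\tfrac12[\beta(e^t-1)+\alpha(e^{-t}-1)]$, minimized at $e^{t}=\sqrt{\alpha/\beta}$ to give rate $\gamma:=\tfrac{\alpha+\beta}{2}-\sqrt{\alpha\beta}$, so $p_{\mathrm{bad}}\le n^{-\gamma+o(1)}$. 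The delicate and essential step is the matching lower bound, and this is exactly where, as the introduction warns, Chernoff/CLT/Sanov are not enough. Here I would localize at the saddle value $k^\ast\approx\tfrac12\sqrt{\alpha\beta}\log n$, the common location of the tilted means of $d_v^+$ and $d_v^-$, and lower bound $p_{\mathrm{bad}}\ge \p(d_v^+=k^\ast)\,\p(d_v^-=k^\ast+1)$. Evaluating the two binomial point masses by Stirling (equivalently via their Poisson approximations, legitimate since $np=\alpha\log n\to\infty$ while $k^\ast=o(n)$) the two contributions combine to $-\gamma$ with only a polylogarithmic prefactor loss, yielding $p_{\mathrm{bad}}\ge n^{-\gamma}/\mathrm{polylog}(n)$.

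With the tight rate in hand the conclusion follows by a first/second moment argument. Since by hypothesis $\gamma<1$, the expected number of misclassified vertices in $A$ is $\tfrac n2\,p_{\mathrm{bad}}\ge n^{1-\gamma}/\mathrm{polylog}(n)\to\infty$. For distinct $v,v'\in A$ the events $\{d_v^-\ge d_v^+ +1\}$ share only the single potential edge $\{v,v'\}$, so conditioning on it shows the pairwise covariances are $O(p)\,p_{\mathrm{bad}}^2=o(p_{\mathrm{bad}}^2)$, whence $\Var(N_A)=o\big((\E N_A)^2\big)$; Chebyshev gives a misclassified vertex in $A$ w.h.p., and identically in $B$. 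A union bound produces the pair $(u,w)$, so ML fails with high probability, in particular with probability bounded away from zero.

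The main obstacle throughout is the two-sided sharpness of $p_{\mathrm{bad}}$: the constant multiplying $\log n$ in the exponent must be pinned down exactly (any error would shift the threshold off $\gamma=1$), which forces the saddle-point/Stirling analysis rather than off-the-shelf concentration. A final routine remark disposes of the degenerate regime $\alpha+\beta<2$ (in particular $\beta=0$), where the argument above degenerates because $t^\ast$ runs off to infinity: there the graph is disconnected w.h.p.\ (isolated vertices, or components that can be reshuffled between the two sides already produce a tied, non-unique minimum bisection), so ML fails; and one checks that $\gamma\le\tfrac{\alpha+\beta}{2}<1$ holds automatically, consistently with the claimed boundary.
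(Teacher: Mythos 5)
Your proposal is correct, and it follows the paper's skeleton in two of its three steps: the reduction of ML failure to finding one strictly misclassified vertex on each side and swapping them (your increment formula $\Delta=-(d_u^--d_u^+)-(d_w^--d_w^+)+2\,\I[\{u,w\}\in E]\le 0$ is exactly the mechanism behind the paper's events $F_A$ and its Lemma~\ref{lemma_FAF}), and the sharp lower bound on the single-vertex tail via a point mass at the saddle $k^*=\frac{\sqrt{\alpha\beta}}{2}\log n$ with Stirling-type estimates (this is the paper's Lemma~\ref{lemma-conv} together with the appendix Lemma~\ref{lemma1}, whose optimizer $\tau^*=-\eps/2+\sqrt{(\eps/2)^2+\alpha\beta}$ reduces to your $k^*$ as $\eps\to 0$; the paper is equally explicit that Chernoff/CLT/Sanov are too lossy here). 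Where you genuinely diverge is the amplification step. The paper manufactures \emph{exact} independence: it restricts to a small subset $H\subset A$ of size $n/\log^3 n$, defines $F_H^{(j)}$ using only edges from $j$ to $A\setminus H$ and $B$ with a slack $\delta(n)=\log n/\log\log n$ (so the events are i.i.d.\ across $j\in H$, and $\p(F_H)=1-(1-\rho(n))^{|H|}$ is computed exactly), and then uses the high-probability event $\Delta$ that within-$H$ degrees stay below $\delta(n)$ to convert $F_H^{(j)}$ into a true misclassification; this costs three auxiliary lemmas and yields failure with probability $\ge 1/3$. You instead run a second-moment argument over all of $A$: since two events $\{d_v^-\ge d_v^++1\}$, $\{d_{v'}^-\ge d_{v'}^++1\}$ share only the single edge $\{v,v'\}$, conditioning on it bounds each pairwise covariance by $O(p)\,p_{\mathrm{bad}}^2=o(p_{\mathrm{bad}}^2)$, so Chebyshev applies once $\E N_A\ge n^{1-\gamma}/\mathrm{polylog}(n)\to\infty$. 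Your route avoids the set $H$, the slack $\delta(n)$, and the degree event $\Delta$ entirely, needs no slack absorbed into the exponent, and delivers the stronger conclusion that ML fails w.h.p.\ rather than with probability bounded away from zero; the paper's route buys a cleaner, fully explicit probability computation with no covariance estimates. Two small cautions: your parenthetical Poisson justification must be in the pointwise (ratio) sense rather than total variation, since the individual point masses can be smaller than the $O(\log^2 n/n)$ TV error when one of $\alpha,\beta$ is large --- the direct Stirling route you name first (and which the paper uses, via $\log\binom{N}{k}\ge k\log N-\log k!-\log 4$ for $k\le\sqrt N$) is the safe one; and in the degenerate discussion, $\alpha+\beta<2$ with $\beta>0$ is already covered by your main argument, so only $\beta=0$ needs the separate isolated-vertex treatment, matching the paper's remark about the threshold $\alpha=2$.
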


If $\beta = 0$, recovery is possibly if and only if there are no isolated nodes which is known to have a sharp threshold at $\alpha=2$. We will focus on $\alpha>\beta>0$.

Let $A$ and $B$ denote the two communities, each with $\frac{n}2$ nodes. 

Let
$$\gamma(n) = \log^3 n, \ \delta(n) = \frac{\log n}{\log \log n},$$
and let $H$ be a fixed subset of $A$ of size $\frac{n}{\gamma(n)}$.
We define the following events:

\begin{equation}\label{def:events}
\left\{
\begin{array}{ccl}
F & = & \text{maximum likelihood fails} \\
F_A & = & \exists_{i\in A} : \text{ i is connected to more nodes in B than in A} \\
\Delta & = & \text{no node in H is connected to at least }\delta(n) \text{ other nodes in H} \\
F_H^{(j)} & = & \text{node }j\in H \text{ satisfies } E(j,A\setminus H) +\delta(n) \leq E(j,B) \\
F_H & = & \cup_{j\in H} F_H^{(j)},
\end{array}
\right.
\end{equation}
where $E(\cdot,\cdot)$ is the number of edges between two sets. Note that we identify nodes of our graph with integers with a slight abuse of notation when there is no risk of confusion.

We also define
\begin{equation}\label{definitionofrho}
\rho(n) = \p\left( F_H^{(i)} \right)
\end{equation}

\begin{lemma}\label{lemma_FAF}
If $\p\left(F_A\right)\geq \frac23$ then $\p\left( F \right)\geq \frac13$.
\end{lemma}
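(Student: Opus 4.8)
The plan is to combine the exchange symmetry of the model with a single local-swap obstruction to the min-bisection. Recall from the discussion preceding the theorem that, since $\alpha>\beta$, maximum likelihood is exactly the min-bisection problem: it seeks a balanced cut of $\mathcal{G}(n,p,q)$ minimizing the number of crossing edges, and it recovers the communities precisely when the planted partition $(A,B)$ is the \emph{unique} minimum bisection. Hence, to certify the failure event $F$ it suffices to exhibit a balanced partition distinct from $(A,B)$ whose cut is no larger than that of $(A,B)$.

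First I would introduce the event $F_B$ obtained from $F_A$ by interchanging the two communities, namely $F_B=\{\exists\, j\in B:\ E(j,A)>E(j,B)\}$, where $E(v,\cdot)$ counts the neighbors of $v$ in a given set. Since the law of $\mathcal{G}(n,p,q)$ is invariant under swapping $A$ and $B$ (both have size $n/2$ and the within/across probabilities are unchanged), the measure-preserving relabeling carries $F_A$ to $F_B$, so $\p(F_B)=\p(F_A)\ge \tfrac23$. Because $\p(F_A\cup F_B)\le 1$, inclusion--exclusion then yields
\[
\p(F_A\cap F_B)\ \ge\ \p(F_A)+\p(F_B)-1\ \ge\ \tfrac13 .
\]
It therefore remains to prove the deterministic inclusion $F_A\cap F_B\subseteq F$, after which $\p(F)\ge \p(F_A\cap F_B)\ge \tfrac13$ is immediate.

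For the inclusion I would work on $F_A\cap F_B$ and fix $i\in A$ with $E(i,B)>E(i,A)$ and $j\in B$ with $E(j,A)>E(j,B)$. Swapping these nodes produces the balanced partition $A'=(A\setminus\{i\})\cup\{j\}$, $B'=(B\setminus\{j\})\cup\{i\}$, which differs from $(A,B)$. Tracking how each edge incident to $i$ or $j$ changes its crossing status — the edge between $i$ and $j$, if present, stays crossing — one computes, with $a_{ij}\in\{0,1\}$ the indicator that $i$ and $j$ are adjacent and $C,C'$ the cut sizes of $(A,B),(A',B')$,
\[
C'-C=\bigl(E(i,A)-E(i,B)\bigr)+\bigl(E(j,B)-E(j,A)\bigr)+2a_{ij}.
\]
Each parenthesis is at most $-1$ by the defining inequalities of $F_A$ and $F_B$ (the quantities are integers), so $C'-C\le 2a_{ij}-2\le 0$. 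Thus $(A',B')$ is a balanced partition distinct from the planted one with cut no larger than $C$, so $(A,B)$ is not the unique min-bisection and ML fails.

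The only genuinely delicate point is this exchange computation, and within it the bookkeeping of the edge between $i$ and $j$: it is exactly what produces the $+2a_{ij}$ term and hence makes the inequality non-strict. The borderline case is $a_{ij}=1$ together with both parentheses equal to $-1$, giving $C'=C$; here the swap yields a tie rather than a strict improvement, but under the ``unique optimal bisection'' success criterion for ML this still counts as a failure, so $F_A\cap F_B\subseteq F$ holds in every case. The remaining ingredients — the $A\leftrightarrow B$ symmetry and the Bonferroni bound — are routine.
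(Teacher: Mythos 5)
Your proof is correct and takes essentially the same route as the paper's: by the $A\leftrightarrow B$ symmetry $\p(F_B)=\p(F_A)\ge\frac23$, the Bonferroni bound gives $\p(F_A\cap F_B)\ge\frac13$, and simultaneous failures in both communities imply ML fails. The only difference is one of detail: you spell out the single-swap computation $C'-C=\bigl(E(i,A)-E(i,B)\bigr)+\bigl(E(j,B)-E(j,A)\bigr)+2a_{ij}\le 0$ (including the tie case $C'=C$ under the uniqueness criterion), which the paper leaves implicit in the phrase ``both failures will happen simultaneously which implies that ML fails.''
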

\begin{proof}
By symmetry, the probability of a failure in $B$ is also at least $\frac23$ so, by union bound, with probability at least $\frac13$ both failures will happen simultaneously which implies that ML fails.
\end{proof}

\begin{lemma}\label{lemma:reverseunionboundH}
If $\p\left( F_H\right) \geq \frac9{10}$ then $\p\left( F \right)\geq \frac13$.
\end{lemma}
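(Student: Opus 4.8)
The plan is to deduce a lower bound on $\p(F_A)$ from the hypothesis $\p\left(F_H\right)\ge \tfrac9{10}$ and then invoke Lemma~\ref{lemma_FAF}. The bridge between the event $F_H$, which concerns only the special set $H$, and the event $F_A$, which concerns all of $A$, is the auxiliary event $\Delta$. First I would establish the deterministic inclusion
\[
F_H \cap \Delta \subseteq F_A,
\]
and separately show that $\p(\Delta)\to 1$. Granting these two facts, a union bound gives $\p\left(F_A\right)\ge \p\left(F_H\cap\Delta\right)\ge \p\left(F_H\right)+\p(\Delta)-1$, which exceeds $\tfrac23$ for $n$ large since $\p\left(F_H\right)\ge \tfrac9{10}$ and $\p(\Delta)\to 1$; Lemma~\ref{lemma_FAF} then yields $\p\left(F\right)\ge\tfrac13$.

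To prove the inclusion, fix a realization in $F_H\cap\Delta$ and let $j\in H$ be a node witnessing $F_H$, so that $E(j,A\setminus H)+\delta(n)\le E(j,B)$. Splitting the edges from $j$ into $A$ as $E(j,A)=E(j,A\setminus H)+E(j,H\setminus\{j\})$ and using that on $\Delta$ every node of $H$ has fewer than $\delta(n)$ neighbours inside $H$, i.e.\ $E(j,H\setminus\{j\})<\delta(n)$, I obtain
\[
E(j,B)\ \ge\ E(j,A\setminus H)+\delta(n)\ >\ E(j,A\setminus H)+E(j,H\setminus\{j\})\ =\ E(j,A).
\]
Hence $j$ is connected to strictly more nodes in $B$ than in $A$, so the realization lies in $F_A$. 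This is precisely why the margin $\delta(n)$ and the restriction to $A\setminus H$ were built into the definition of $F_H^{(j)}$: they pre-pay for the at most $\delta(n)$ intra-$H$ edges, decoupling the ``failure'' of a node of $H$ from the edges internal to $H$.

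It remains to verify $\p(\Delta)\to 1$, which is the only genuinely probabilistic estimate in the argument. For a fixed $i\in H$, the number of its neighbours in $H$ is distributed as $\mathrm{Binomial}(|H|-1,p)$ with $|H|-1\approx n/\log^3 n$ and $p=\alpha\log(n)/n$, so its mean is of order $\alpha/\log^2 n\to 0$. I would bound the upper tail at level $\delta(n)=\log n/\log\log n$ by the crude estimate $\binom{|H|}{\delta(n)}p^{\delta(n)}$ and then union bound over the at most $n$ nodes of $H$, so that $\p(\Delta^c)\le |H|\cdot\binom{|H|}{\delta(n)}p^{\delta(n)}$. The main (and essentially only) obstacle is checking that the chosen scales $\gamma(n)=\log^3 n$ and $\delta(n)=\log n/\log\log n$ make this quantity tend to $0$: since $|H|p\approx\alpha/\log^2 n$, the tail is super-polynomially small and the extra factor $|H|\le n$ is absorbed precisely because $\delta(n)\cdot 2\log\log n = 2\log n$ dominates $\log n$; everything else is a deterministic inclusion and a union bound. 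Combining $\p\left(F_A\right)\ge\p\left(F_H\right)+\p(\Delta)-1\ge \tfrac23$ for all large $n$ with Lemma~\ref{lemma_FAF} then completes the proof.
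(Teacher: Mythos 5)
Your proof is correct and follows essentially the same route as the paper: the deterministic inclusion $\Delta \cap F_H \subseteq F_A$, the union bound $\p(F_A) \geq \p(F_H) + \p(\Delta) - 1$, and an appeal to Lemma~\ref{lemma_FAF}. The only differences are that you spell out the inclusion (which the paper calls ``easy to see'') and re-derive the bound on $\p(\Delta)$ with a crude binomial tail estimate in place of the paper's multiplicative Chernoff bound in Lemma~\ref{lemma:delta910}, both of which yield the same $n^{-2+o(1)}$ scale and hence the same conclusion.
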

\begin{proof}
It is easy to see that $\Delta \cap F_H \Rightarrow F_A$ and
Lemma~\ref{lemma:delta910} states that
\begin{equation}
\p\left( \Delta\right) \geq \frac9{10}.
\end{equation}
Hence,
\[
\p\left(F_A\right) \geq \p\left(F_H\right) + \p\left(\Delta\right) - 1 \geq \frac8{10} > \frac23,
\]
which together with Lemma \ref{lemma_FAF} concludes the proof.
\end{proof}

\begin{lemma}\label{lemma:controlrhoisenough}
Recall the definitions in (\ref{def:events}) and (\ref{definitionofrho}). If $$\rho(n) > n^{-1}\gamma(n)\log(10)$$ then, for sufficiently large $n$, $\p\left( F \right)\geq \frac13$.
\end{lemma}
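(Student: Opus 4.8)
The plan is to reduce to Lemma~\ref{lemma:reverseunionboundH}: it suffices to show that the hypothesis $\rho(n) > n^{-1}\gamma(n)\log(10)$ forces $\p\left(F_H\right) \geq \tfrac9{10}$, after which $\p\left(F\right)\geq\tfrac13$ follows at once. The delicate point is that $F_H = \cup_{j\in H} F_H^{(j)}$ is a union, and a union bound only gives an \emph{upper} bound on $\p\left(F_H\right)$, whereas we need a \emph{lower} bound. Producing a lower bound on the probability of a union essentially forces us to exploit independence (or at least a lack of positive correlation) among the constituent events, so the heart of the argument is to establish such independence.

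The key observation is that the events $\{F_H^{(j)}\}_{j\in H}$ are mutually independent. Indeed, $F_H^{(j)}$ is determined solely by the edges joining the single vertex $j$ to the sets $A\setminus H$ and $B$ (through $E(j,A\setminus H)$ and $E(j,B)$). For distinct $j,j'\in H$ the relevant edge sets are disjoint: every edge counted in $F_H^{(j)}$ has $j$ as one endpoint and its other endpoint outside $H$, so it can never coincide with an edge incident to $j'$. This is precisely the reason the event is defined via $A\setminus H$ rather than all of $A$: edges internal to $H$ would be shared between two such events and would spoil independence. Since in $\mathcal{G}(n,p,q)$ the edges are drawn independently given the (here fixed) partition $A,B$, independence of the $F_H^{(j)}$ follows, and therefore
\[
\p\left(F_H\right) = 1 - \prod_{j\in H}\bigl(1 - \p\left(F_H^{(j)}\right)\bigr) = 1 - \left(1-\rho(n)\right)^{|H|},
\]
using that $\p\left(F_H^{(j)}\right)=\rho(n)$ for every $j$ by symmetry and that $|H| = n/\gamma(n)$.

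It then remains to bound $\left(1-\rho(n)\right)^{|H|}$. Using $1-x \le e^{-x}$ and substituting $|H|=n/\gamma(n)$,
\[
\left(1-\rho(n)\right)^{|H|} \le e^{-|H|\rho(n)} = e^{-\frac{n}{\gamma(n)}\rho(n)} < e^{-\log(10)} = \tfrac1{10},
\]
where the strict inequality is exactly the hypothesis, rewritten as $\tfrac{n}{\gamma(n)}\rho(n) > \log(10)$. Hence $\p\left(F_H\right) > \tfrac9{10}$ and Lemma~\ref{lemma:reverseunionboundH} gives $\p\left(F\right)\ge\tfrac13$. I do not expect a genuine obstacle: the entire content of the lemma is the independence observation that upgrades the union into an exact product, and the remaining estimate is a single application of $1-x\le e^{-x}$. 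The only bookkeeping is the rounding of $|H|=\lfloor n/\gamma(n)\rfloor$ to an integer, which perturbs the exponent by a multiplicative factor $1-\gamma(n)/n = 1 - \log^3 n / n \to 1$; together with the slack inherited from the requirement $n$ large in Lemma~\ref{lemma:delta910} (used inside Lemma~\ref{lemma:reverseunionboundH}), this is exactly what the phrase ``for sufficiently large $n$'' absorbs.
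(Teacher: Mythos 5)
Your proof is correct and takes essentially the same route as the paper: reduce to Lemma~\ref{lemma:reverseunionboundH}, use independence of the events $F_H^{(j)}$ to get $\p\left(F_H\right)=1-\left(1-\rho(n)\right)^{|H|}$, and bound this using the hypothesis $\rho(n)>n^{-1}\gamma(n)\log(10)$. Your two deviations are minor refinements rather than a different approach: you apply $1-x\le e^{-x}$ directly where the paper splits cases on whether $\rho(n)=o(1)$ and computes a limit, and you spell out why the $F_H^{(j)}$ are independent (the relevant edge sets are pairwise disjoint, which is exactly why the definition uses $A\setminus H$ rather than $A$), a fact the paper asserts without justification.
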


\begin{proof} We will use Lemma \ref{lemma:reverseunionboundH} and show that if $\rho(n) > n^{-1}\gamma(n)\log(10)$ then $\p\left( F_H\right) \geq \frac9{10}$, for sufficiently large $n$.

$F_H^{(i)}$ are independent and identically distributed random variables so
\[
\p\left(F_H\right) = \p\left(\cup_{i\in H} F_H^{(i)}\right) = 1-\p\left(\cap_{i\in H} \left(F_H^{(i)}\right)^c\ \right) = 1-\left( 1 - \p\left(F_H^{(i)}\right) \right)^{|H|} = 1-\left( 1 - \rho(n) \right)^{\frac{n}{\gamma(n)}}
\]
This means that $\p\left( F_H\right) \geq \frac9{10}$ is equivalent to $\left( 1 - \rho(n) \right)^{\frac{n}{\gamma(n)}} \leq \frac1{10}$. If $\rho(n)$ is not $o(1)$ than the inequality is obviously true, if $\rho(n) = o(1)$ then,
\[
\lim_{n\to\infty} \left( 1 - \rho(n) \right)^{\frac{n}{\gamma(n)}} = \lim_{n\to\infty}\left( 1 - \rho(n) \right)^{\frac1{\rho(n)}\rho(n)\frac{n}{\gamma(n)}} = \lim_{n\to\infty}\exp\left( - \rho(n)\frac{n}{\gamma(n)} \right) \leq \frac1{10},
\]
where the last inequality used the hypothesis $\rho(n) > n^{-1}\gamma(n)\log(10)$.
\end{proof}

\begin{definition}\label{def:definitionofT}
Let $N$ be a natural number, $p,q\in [0,1]$, and $\epsilon\geq 0$, we define
\begin{align}
T(N,p, q , \eps) &= \p\left( \sum_{i=1}^N (Z_i - W_i) \geq \eps \right),
\end{align}
where $W_1,\dots, W_N$ are i.i.d.\ Bernoulli$(p)$ and $Z_1,\dots, Z_N$ are i.i.d.\ Bernoulli$(q)$, independent of $W_1,\dots, W_N$.
\end{definition}

%

\begin{lemma}\label{lemma-conv}
Let $\alpha>\beta>0$, then
\begin{align}
 -\log T\left(\frac{n}2,\pp, \qq , \frac{\log(n)}{\log\log(n)}\right)  &\leq   \left( \frac{\alpha + \beta}2 - \sqrt{\alpha \beta} \right) \log(n) +o\left(\log (n)\right). 
\end{align}
\end{lemma}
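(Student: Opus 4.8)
The inequality is a \emph{lower} bound on the tail probability $T$ (equivalently an upper bound on $-\log T$), so the Chernoff- and union-type estimates that the introduction warns against point the wrong way; instead the plan is to exhibit one explicit, sufficiently likely configuration of the summands. Write $N=\tfrac n2$, $p=\pp$, $q=\qq$, $\eps=\tfrac{\log n}{\log\log n}$, and set $L=\tfrac{\log n}{2}$, so that $Np=\alpha L$ and $Nq=\beta L$. Since $\sum_i Z_i\sim\mathrm{Bin}(N,q)$ and $\sum_i W_i\sim\mathrm{Bin}(N,p)$ are independent, for any integers $k_Z\ge k_W+\eps$ the event $\{\sum_i Z_i=k_Z\}\cap\{\sum_i W_i=k_W\}$ forces $\sum_i(Z_i-W_i)\ge\eps$, whence
\[
T\;\ge\;\binom{N}{k_Z}q^{k_Z}(1-q)^{N-k_Z}\;\binom{N}{k_W}p^{k_W}(1-p)^{N-k_W}.
\]
It then suffices to choose $k_Z,k_W$ for which the right-hand side is $n^{-(\frac{\alpha+\beta}{2}-\sqrt{\alpha\beta})+o(1)}$.

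The correct choice of indices is dictated by the exponential tilt that equalizes the two Binomial means. Maximizing the displayed product in $k_W$ subject to the binding constraint $k_Z=k_W+\lceil\eps\rceil$ leads (via $\tfrac{d}{dk}\log(k!)\approx\log k$) to $k_Z k_W\approx\alpha\beta L^2$, i.e. $k_W=\lfloor\sqrt{\alpha\beta}\,L\rfloor$ and $k_Z=k_W+\lceil\eps\rceil$, both $\approx\sqrt{\alpha\beta}\,L$. This is precisely the phenomenon the paper flags: the dominant configuration sits at $\sqrt{\alpha\beta}\,L$, \emph{not} at either unconditional mean $\alpha L$ or $\beta L$, which is why naive second-order (CLT/Chernoff) reasoning misses the constant.

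With these indices I would estimate each factor by Stirling. Using $k=O(\log n)\ll N$ one has $\log\binom{N}{k}=k\log N-k\log k+k+o(\log n)$, while $(1-q)^{N-k}=e^{-\beta L+o(1)}$ and $(1-p)^{N-k}=e^{-\alpha L+o(1)}$, so that $\log\p(\sum_i Z_i=k_Z)=k_Z\log\tfrac{\beta L}{k_Z}+k_Z-\beta L+o(\log n)$, and symmetrically for $W$. Substituting $\log\tfrac{\beta L}{k_Z}=-\tfrac12\log\tfrac{\alpha}{\beta}+o(1)$ and $\log\tfrac{\alpha L}{k_W}=\tfrac12\log\tfrac{\alpha}{\beta}+o(1)$, the two entropy-type terms cancel up to $\tfrac{\lceil\eps\rceil}{2}\log\tfrac{\alpha}{\beta}=o(\log n)$, whereas the remaining linear part gives $(k_Z+k_W)-(\alpha+\beta)L=(2\sqrt{\alpha\beta}-\alpha-\beta)L+o(\log n)=-(\tfrac{\alpha+\beta}{2}-\sqrt{\alpha\beta})\log n+o(\log n)$. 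Summing the two logarithms yields $\log T\ge-(\tfrac{\alpha+\beta}{2}-\sqrt{\alpha\beta})\log n+o(\log n)$, which is the claim.

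The only genuine work is bookkeeping the error terms, and this is where care is required: I must check that \emph{every} correction is truly $o(\log n)$. The delicate ones are (i) the $\eps$-shift, contributing $O(\eps\log\tfrac{\alpha}{\beta})=O(\log n/\log\log n)=o(\log n)$; (ii) the Stirling remainder $O(\log k)=O(\log\log n)$; (iii) the binomial-versus-Poisson corrections $\sum_{j<k}\log(1-j/N)=O(k^2/N)=O((\log n)^2/n)$ and $(N-k)\log(1-q)=-Nq+O(Nq^2)=-\beta L+o(1)$; and (iv) the rounding of $k$ to an integer, contributing $k\cdot O(1/L)=O(1)$. Because the target is stated only up to an additive $o(\log n)$, all of these are comfortably absorbed; in particular there is no need to sum the $\mathrm{poly}(\log n)$ near-optimal terms, since a single term already realizes the exponent. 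The conceptual crux, rather than the computation, is recognizing that the saddle point sits at $\sqrt{\alpha\beta}\,L$.
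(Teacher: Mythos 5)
Your proposal is correct and takes essentially the same route as the paper's proof: both lower-bound $T$ by the dominant single configuration in which the two binomial counts sit near the tilted value $\sqrt{\alpha\beta}\,\log(n)/2$ (the paper's Lemma~\ref{lemma1} finds the same saddle point $\tau^* = -\epsilon/2 + \sqrt{(\epsilon/2)^2+\alpha\beta} \to \sqrt{\alpha\beta}$), and both control the binomial coefficients by Stirling-type estimates with all corrections absorbed into $o(\log n)$. The only difference is organizational: the paper routes the computation through the auxiliary quantity $T^*$ and the function $g$ of Lemma~\ref{lemma1}, whereas you substitute the saddle point directly and verify the error bookkeeping in place.
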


\begin{proof}[Proof of Theorem~\ref{lemma-conv}]
From the definitions in (\ref{def:events}) and (\ref{definitionofrho}) we have 
\begin{align}
\rho(n)=\p\left( \sum_{i=1}^{\frac{n}2} Z_i -  \sum_{i=1}^{\frac{n}2- \frac{n}{\gamma(n)}} W_i \geq \frac{\log(n)}{\log\log(n)} \right) 
\end{align}
where $W_1,\dots, W_N$ are i.i.d.\ Bernoulli$\left(\pp \right)$ and $Z_1,\dots, Z_N$ are i.i.d.\ Bernoulli$\left(\qq \right)$, all independent.
Since
\begin{align}
\p\left( \sum_{i=1}^{\frac{n}2} Z_i -  \sum_{i=1}^{\frac{n}2- \frac{n}{\gamma(n)}} W_i \geq \frac{\log(n)}{\log\log(n)} \right) \geq \p\left( \sum_{i=1}^{\frac{n}2} Z_i -  \sum_{i=1}^{\frac{n}2} W_i \geq \frac{\log(n)}{\log\log(n)} \right) ,
\end{align}
we get
\begin{align}
- \log \rho(n) \leq - \log T \left(n/2, \pp, \qq, \frac{\log(n)}{\log\log(n)}\right),
\end{align}
and Lemma~\ref{lemma-conv} implies 
\begin{align}
- \log \rho(n) \leq  \left( \frac{\alpha + \beta}2 - \sqrt{\alpha \beta} \right) \log(n) + o(\log(n)).
\end{align}
Hence $\rho(n) > n^{-1}\gamma(n)\log(10)$, and the conclusion follows from Lemma \ref{lemma:controlrhoisenough}.
\end{proof}

\section{Information theoretic upper bound}

We present now the main result of this Section.

\begin{theorem}\label{theorem:main_upperbound_2}
If $\frac{\alpha+\beta}{2} -\sqrt{\alpha \beta} > 1$, i.e., if $\alpha + \beta > 2$ and $(\alpha-\beta)^2 > 4(\alpha+\beta) - 4$, then the maximum likelihood estimator exactly recovers the communities (up to a global flip), with high probability.
\end{theorem}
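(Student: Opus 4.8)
The plan is to show that, with high probability, the planted partition $\{A,B\}$ is the \emph{unique} maximum-likelihood bisection. First I would record that, conditioned on the graph being generated by communities $A,B$, the log-likelihood of a balanced partition equals
\[
(\#\text{within-community edges})\log\tfrac{p}{1-p}+(\#\text{across-community edges})\log\tfrac{q}{1-q}+\text{const},
\]
where the additive constant and the total numbers of within/across pairs are the same for every balanced partition. Since the two edge counts sum to the fixed quantity $|E(G)|$ and $p>q$ forces $\log\tfrac{p}{1-p}>\log\tfrac{q}{1-q}$, the log-likelihood is an affine \emph{increasing} function of the number of within-community edges, i.e.\ maximum likelihood coincides with minimum bisection. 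Hence ML recovers $\{A,B\}$ uniquely iff every balanced bipartition other than $\{A,B\}$ has strictly more cut edges, and recovery fails only if some competitor has cut value $\le$ that of $\{A,B\}$.

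Next I would parametrize competitors by the vertices they move: a balanced partition other than $\{A,B\}$ is obtained by picking $S\subseteq A$, $T\subseteq B$ with $|S|=|T|=k\ge 1$ and forming $A'=(A\setminus S)\cup T$, $B'=(B\setminus T)\cup S$. Splitting the vertex set into the blocks $A\setminus S,\,S,\,B\setminus T,\,T$ and cancelling common terms, the increase in the number of cut edges is
\[
\Lambda_{S,T}=E(S,A\setminus S)+E(T,B\setminus T)-E(S,B\setminus T)-E(T,A\setminus S).
\]
The first two counts are within-community (each pair present with probability $p$), the last two across-community (probability $q$), and the four underlying pair-sets are disjoint, hence the counts are independent. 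Writing $M=k(n/2-k)$, the sum of the two within-counts is $\mathrm{Bin}(2M,p)$, the sum of the two across-counts is $\mathrm{Bin}(2M,q)$, and these are independent, so $\p(\Lambda_{S,T}\le 0)=T(2M,p,q,0)$ in the notation of Definition~\ref{def:definitionofT}.

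The analytic core is a one-sided tail bound on $T$. A Chernoff argument with the tilt $e^{\lambda}=\sqrt{p/q}$ and $1+x\le e^{x}$ gives $\E\,e^{\lambda(Z-W)}\le \exp\!\big(-(\sqrt p-\sqrt q)^2\big)$ for a single pair, hence
\[
T(N,p,q,0)\le \exp\!\big(-N(\sqrt p-\sqrt q)^2\big)=\exp\!\Big(-N\tfrac{\log n}{n}\big(\alpha+\beta-2\sqrt{\alpha\beta}\big)\Big).
\]
A union bound over all swaps then controls the failure probability by
\[
\sum_{k=1}^{n/2-1}\binom{n/2}{k}^2\,T\big(2k(n/2-k),p,q,0\big)\le \sum_{k=1}^{n/2-1}\binom{n/2}{k}^2\exp\!\Big(-2k(n/2-k)\tfrac{\log n}{n}c\Big),
\]
where $c=\alpha+\beta-2\sqrt{\alpha\beta}=(\sqrt\alpha-\sqrt\beta)^2$, so that the hypothesis reads $c>2$. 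The $k=1$ term equals $\tfrac14 n^{2}\,e^{-(n-2)\frac{\log n}{n}c}=\Theta\!\big(n^{\,2-c}\big)=o(1)$ precisely because $c>2$.

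The step I expect to be the only delicate point is verifying that $k=1$ dominates the whole sum. Using $\binom{n/2}{k}^2\le \exp\!\big(2k\log\tfrac{en}{2k}\big)$, the logarithm of the $k$-th summand is at most $2k\log\tfrac{en}{2k}-ck\log n\,(1-2k/n)$; for small $k$ this is $\le k\log n\,(2-c+o(1))$, so the small-$k$ contributions form a geometric series summing to $O\!\big(n^{-(c-2)}\big)=o(1)$, while for $k=\Theta(n)$ the quadratic gain $k(n/2-k)$ in the exponent overwhelms the entropy $\binom{n/2}{k}^2\le 2^{n}$, making those terms super-polynomially small; the symmetry $k\leftrightarrow n/2-k$ (a competitor and its complementary description) lets me restrict to $k\le n/4$. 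Summing, the failure probability is $o(1)$, which is recovery. It is worth noting that, unlike the converse, achievability needs only this one-sided Chernoff estimate rather than the sharp rate of Lemma~\ref{lemma-conv}; the compensating factor is the extra power of $n$ from choosing a vertex in each community, which is exactly why the threshold $c=2$ is matched on both sides.
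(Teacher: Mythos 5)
Your proposal is correct, and its combinatorial skeleton coincides exactly with the paper's: the same reduction of an ML failure to a swap of equal-sized sets $S\subset A$, $T\subset B$ (the paper's first lemma, including the same complementary-description symmetry that permits restricting to $k\le n/4$), the same union bound (\ref{eq:pF_unionbound_ITub}) with $\binom{n/2}{k}^2$ competitors per $k$, and essentially the same final summation over $k$. Where you genuinely diverge is in the key analytic step, the upper tail of $T\left(2k(n/2-k),p,q,0\right)$. The paper establishes (\ref{eq:comesfromBernstein}) via Lemma~\ref{lemma:pnk_fromCLT_ITub}, which truncates the tail with Bernstein's inequality, reduces to the dominant term $T^*$, and invokes the delicate binomial-coefficient estimates of Lemma~\ref{lemma1}; you replace all of this with a one-line exponential-tilting bound at $e^{\lambda}=\sqrt{p/q}$ combined with $1+x\le e^x$, giving $T(N,p,q,0)\le \exp\left(-N(\sqrt{p}-\sqrt{q})^2\right)$. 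This is valid (any positive tilt yields an upper bound by Markov), and since $(\sqrt{p}-\sqrt{q})^2=\frac{\log n}{n}\left(\alpha+\beta-2\sqrt{\alpha\beta}\right)$, your exponent agrees exactly with that of (\ref{eq:comesfromBernstein}) --- in fact you obtain it without even the $o(1)$ slack carried by the appendix lemma. Your closing observation is the right diagnosis of why the paper's heavier machinery exists at all: the two-sided estimates of Lemmas~\ref{lemma1} and~\ref{lemma:pnk_fromCLT_ITub} are needed for the converse (Lemma~\ref{lemma-conv} requires a matching \emph{lower} bound on $T$, which no Chernoff argument can supply) and for Lemma~\ref{failupperbound} (where $T$ is needed at a nonzero shift); the paper's remark that Chernoff-type bounds ``do not provide the right answer'' pertains to that sharp two-sided analysis, not to the achievability direction, where the upper tail alone suffices and the tilt is exponent-tight. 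One detail to firm up when writing this out: your claim that the $k$-th log-summand is at most $k\log n\,(2-c+o(1))$ requires a uniform cutoff rather than an informal ``small $k$'' --- e.g.\ split at $k\le \epsilon' n$ with $\epsilon'$ chosen so that $c(1-2\epsilon')>2$ (possible since $c>2$), which makes the small-$k$ terms a genuine geometric series, and for $\epsilon' n<k\le n/4$ note that the decay $\Theta(n\log n)$ overwhelms the entropy $\le n\log 2$; this is the same maneuver as the paper's inequality $\log 2k - \frac{2k}{n}\log n \ge \frac13\log(2k)$, so it is a routine repair, not a gap.
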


The case $\beta = 0$ follows directly from the connectivity threshold phenomenon on Erd\H{o}s-R\'enyi graphs so we will restrict our attention to $\alpha> \beta > 0$. 

We will prove this theorem through a series of lemmas. The techniques are similar to the estimates for decoding a codeword on a memoryless channel with a specific structured codes.  In what follows we refer to the true community partition as the ground truth.

\begin{lemma}
If the maximum likelihood estimator does not coincide with the ground truth, then there exists $1\leq k\leq \frac{n}4$ and a set $A_w\subset A$ and $B_w\subset B$ with $|A_w| = |B_w| = k$ such that
\[
E(A_w,B\setminus B_w)+E(B_w,A\setminus A_w) \geq E(A_w,A\setminus A_w)+E(B_w,B\setminus B_w).
\]
\end{lemma}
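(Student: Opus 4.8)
The plan is to reduce maximum likelihood to min-bisection and then read off the claimed inequality from a direct edge-count comparison between the ground-truth partition and a competing one.

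First I would make the equivalence between ML and min-bisection precise in this balanced, symmetric regime. For any balanced partition of $[n]$ into two halves, the number of within-pairs and the number of across-pairs are fixed, and the total number of edges is a function of the graph only. Writing the log-likelihood as a function of the number of crossing edges $m_{\mathrm{cross}}$, every term is constant across balanced partitions except the one proportional to $m_{\mathrm{cross}}$, whose coefficient equals $\log\frac{q(1-p)}{p(1-q)}$. Since $\alpha>\beta$ gives $p>q$, this coefficient is strictly negative, so ML selects a balanced partition minimizing $m_{\mathrm{cross}}$. In particular, if the ML output does not coincide with the ground truth $(A,B)$, there exists a balanced partition, distinct from $(A,B)$ as an unordered partition, whose number of crossing edges is at most $E(A,B)$.

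Second I would parametrize any such competing partition by its symmetric difference with the ground truth. Writing one side of the competing partition as $S=(A\setminus A_w)\cup B_w$ with $A_w\subset A$ and $B_w\subset B$, balance forces $|A_w|=|B_w|=:k$. Since a partition coincides with its global flip, replacing $S$ by $S^c$ interchanges $k$ and $\frac{n}{2}-k$, so I may assume $1\le k\le \frac n4$; here $k\ge 1$ records that the partition differs from the ground truth. The core computation then expands $E(A,B)$ and $E(S,S^c)$ over the four blocks $A\setminus A_w$, $A_w$, $B\setminus B_w$, $B_w$. The term $E(A\setminus A_w,B\setminus B_w)$ appears in both counts, and the swapped-pair term $E(A_w,B_w)$ appears once in each (as $E(A_w,B_w)=E(B_w,A_w)$); both cancel in the difference, leaving
\[
E(A,B)-E(S,S^c)=E(A_w,B\setminus B_w)+E(B_w,A\setminus A_w)-E(A_w,A\setminus A_w)-E(B_w,B\setminus B_w).
\]
The min-bisection property $E(S,S^c)\le E(A,B)$ makes the left-hand side nonnegative, which is exactly the asserted inequality.

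The steps are elementary; the only points requiring care are the sign of the $m_{\mathrm{cross}}$ coefficient, which is where $p>q$ enters and fixes the direction of the final inequality, and the clean cancellation of the $E(A_w,B_w)$ and $E(A\setminus A_w,B\setminus B_w)$ blocks. The $k\le \frac n4$ normalization via the global flip is the one genuinely combinatorial observation. I expect the most delicate bookkeeping to be ensuring that ``ML does not coincide with the ground truth'' genuinely produces a strictly different balanced partition with no larger cut (i.e.\ correctly handling a possible tie in the min-bisection), although this subtlety affects only the existence of the competing $S$ and not the final inequality.
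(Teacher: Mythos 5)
Your proof is correct and takes essentially the same route as the paper's: reduce ML to min-bisection, parametrize the competing balanced partition by its symmetric difference with the ground truth so that $|A_w|=|B_w|=k$, normalize to $1\leq k\leq \frac{n}{4}$ via the global flip (the paper's ``WLOG $Z_A\cap A\geq \frac{n}{4}$''), and obtain the inequality from the block-wise cut comparison. The paper's proof is terser --- it simply picks $A_w=Z_B\cap A$, $B_w=Z_A\cap B$ and asserts the result --- whereas you explicitly verify the sign of the log-likelihood coefficient and the cancellation of the $E(A_w,B_w)$ and $E(A\setminus A_w,B\setminus B_w)$ terms, both correctly.
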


\begin{proof}
Recall that the maximum likelihood estimator finds two equally sized communities (of size $\frac{n}2$ each) that have the minimum number of edges between them, thus for it to fail there must exist another balanced partition of the graph with a smaller cut, let us call it $Z_A$ and $Z_B$. Without loss of generality $Z_A\cap A \geq \frac{n}4$ and $Z_B \cap B \geq \frac{n}4$. Picking $A_w = Z_B\cap A$ and $B_w = Z_A \cap B$ gives the result.
\end{proof}

Let $F$ be the event of the maximum likelihood estimator not coinciding with the ground truth. Given $A_w$ and $B_w$ both of size $k$, define $P_n^{(k)}$ as
\begin{equation}\label{eq:def:Pnk_1}
P_n^{(k)} := \p\left( E(A_w,B\setminus B_w)+E(B_w,A\setminus A_w) \geq E(A_w,A\setminus A_w)+E(B_w,B\setminus B_w) \right).
\end{equation}

We have, by a simple union bound argument,
\begin{equation}\label{eq:pF_unionbound_ITub}
\p(F) \leq \sum_{k=1}^{n/4} {n/2\choose k}^2 P_n^{(k)}.
\end{equation}

Let $W_i$ be a sequence of i.i.d. Bernoulli$\left(\frac{\alpha \log n}n \right)$ random variables and $Z_i$ an independent sequence of i.i.d. Bernoulli$\left(\frac{\beta \log n}n \right)$ random variables, note that (cf.\ Definition~\ref{def:definitionofT}),
\[
P_n^{(k)} = \p\left( \sum_{i=1}^{2k\left(\frac{n}2-k\right)}Z_i \geq \sum_{i=1}^{2k\left(\frac{n}2-k\right)}W_i  \right) = T\left( 2k\left(\frac{n}2-k\right),\frac{\alpha\log n}{n},\frac{\beta\log n}{n},0\right).
\]

Lemma \ref{lemma:pnk_fromCLT_ITub} in the Appendix shows that:
\begin{equation}\label{eq:comesfromBernstein}
P_n^{(k)} \leq  \exp\left( - \frac{\log(n)}{n}\cdot 4k \left( \frac{n}{2}-k \right) \left( \frac{\alpha+\beta}{2} -\sqrt{\alpha \beta}\right) \right).
\end{equation}

We thus have, combining (\ref{eq:pF_unionbound_ITub}) and (\ref{eq:comesfromBernstein}), and using ${n\choose k} \leq (ne/k)^k$,
\begin{eqnarray}
\p(F) & \leq & \sum_{k=1}^{n/4} {n/2\choose k}^2 \exp\left(  - \frac{\log(n)}{n}\cdot 4k \left( \frac{n}{2}-k \right) \left( \frac{\alpha+\beta}{2} -\sqrt{\alpha \beta}\right)  \right)\nonumber\\
& \leq & \sum_{k=1}^{n/4}  \exp\left( 2k\left(\log\left(\frac{n}{2k} \right)+1\right)  - \frac{\log(n)}{n}\cdot 4k \left( \frac{n}{2}-k \right) \left( \frac{\alpha+\beta}{2} -\sqrt{\alpha \beta}\right)  \right)\nonumber\\
& = & \sum_{k=1}^{n/4}  \exp\left[k\left( 2 \log n - 2\log 2k + 2 - \left(\frac{1}2-\frac{k}n\right) \cdot 4\left( \frac{\alpha+\beta}{2} -\sqrt{\alpha \beta}\right) \log(n) \right)\right]. \label{eq:outsidelemmaboundpf5_IT_ub}
\end{eqnarray}

\begin{proof}[Proof of Theorem~\ref{theorem:main_upperbound_2}]
Recall that $F$ is the event of the maximum likelihood estimator not coinciding with the ground truth. We next show that for $\epsilon>0$, if,
\[
\frac{\alpha+\beta}{2} -\sqrt{\alpha \beta} \geq 1+\epsilon
\]
then there exists a constant $c>0$ such that
\begin{equation}\label{eq:with14epsilon_14}
\p(F) \leq cn^{-\frac14\epsilon}.
\end{equation}

Combining (\ref{eq:outsidelemmaboundpf5_IT_ub}) and (\ref{eq:with14epsilon_14}), we have
\begin{eqnarray}
\p(F) & \leq &  \sum_{k=1}^{n/4}  \exp\left[k\left( 2 \log n - 2\log 2k - \left(\frac{1}2-\frac{k}n\right) \left( 4+\epsilon \right) \log n +2\right)\right]\nonumber\\
& = &  \sum_{k=1}^{n/4}  \exp\left[k\left(  - 2\log 2k +  4\frac{k}n\log n - \left(\frac{1}2-\frac{k}n\right) \epsilon \log n+2 \right)\right]\nonumber\\
& \leq &  \sum_{k=1}^{n/4}  \exp\left[k\left(  - 2\log 2k +  4\frac{k}n\log n - \frac{1}4\epsilon \log n+2 \right)\right]\\
& = &  \sum_{k=1}^{n/4} n^{-\frac{k}4\epsilon} \exp\left[-2k\left( \log 2k -  \frac{2k}n\log n +1 \right)\right].\nonumber
\end{eqnarray}
Note that, for sufficiently large $n$, $1\leq k\leq \frac{n}4$ we have
\[
\log 2k -  \frac{2k}n\log n \geq \frac13 \log(2k),
\]
and $ n^{-\frac{k}4\epsilon} \leq n^{-\frac{1}4\epsilon}$.
Hence, for sufficiently large $n$,
\[
\p(F) \leq n^{-\frac{1}4\epsilon}\sum_{k=1}^{n/4} \exp\left[-\frac23k\left(\log 2k-3\right)\right],
\]
which, together with the observation that $\sum_{k=1}^{n/4} \exp\left[-\frac23k\left(\log 2k-3\right)\right] = O(1)$, concludes the proof of the theorem.
\end{proof}

\section{Efficient algorithms}

\subsection{A semidefinite programming based relaxation}

We propose and analyze an algorithm, based in semidefinite programming (SDP), to efficiently reconstruct the two communities. Let $\G=(V,E(\G))$ be the observed graph, where edges are independently present, with probability $\frac{\alpha \log(n)}{n}$ if they connect two nodes in the same community and with probability $\frac{\beta \log(n)}{n}$ if they connect two nodes in different communities, with $\alpha>\beta$. Recall that there are n nodes in this graph and that with a slight abuse of notation, we will identify nodes in the graph by an integer in $[n]$. Our goal is to recover the two communities in $\G$.\\

The proposed reconstruction algorithm will try to find two communities such that the number of within-community edges minus the across-community edges is largest. We will identify a choice of communities by a vector $x\in\mathbb{R}^n$ with $\pm1$ entries such that the $i^{th}$ component will correspond to $+1$ if node $i$ is in one community and $-1$ if it is in the other. We will also define $B$ as the $n\times n$ matrix with zero diagonal whose non diagonal entries are given by
\[
B_{ij} = \left\{ \begin{array}{l} 1 \text{ if } (i,j)\in E(\G) \\  -1 \text{ if } (i,j)\notin E(\G), \end{array}  \right.
\] 
The proposed algorithm will attempt to maximize the following
\begin{align}
\max\ &x^TBx\\
\text{s.t.}\ &x_i=\pm1.
\end{align}

Our approach will be to consider a simple SDP relaxation to this combinatorial problem. The SDP relaxation considered here dates back to the seminal work of Goemans and Williamson~\cite{MXGoemans_DPWilliamson_1995} on the \texttt{Max-Cut} problem. The techniques behind our analysis are similar to the ones used by the first two authors on a recent publication~\cite{Abbe_Z2SynchJournal,Abbe_Z2SynchER}:

\begin{align}
\max\ & \tr (BX)\nonumber\\
\text{s.t.}\ &X_{ii}=1\label{SDP:mainSDP}\\
& X\succeq 0.\nonumber
\end{align}

\begin{theorem}\label{theorem:main_SDP_provenintheappendix}
If $(\alpha-\beta)^2 > 8(\alpha+\beta) + \frac83(\alpha-\beta)$, the following holds with high probability:
(\ref{SDP:mainSDP}) has a unique solution which is given by the outer-product of $g\in\left\{\pm1\right\}^n$ whose entries corresponding to community $A$ are $1$ and community $B$ are $-1$. 
Hence, if $(\alpha-\beta)^2 > 8(\alpha+\beta) + \frac83(\alpha-\beta)$, full recovery of the communities is possible in polynomial time.
\end{theorem}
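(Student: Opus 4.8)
The plan is to prove the theorem by exhibiting a \emph{dual certificate}, i.e.\ to show that the candidate $gg^T$ is the unique optimizer of (\ref{SDP:mainSDP}) by producing a suitable dual variable. Since the only constraints are $X_{ii}=1$ and $X\succeq 0$, the natural certificate is a diagonal matrix $D$, and I would first establish the reduction: if there exists a diagonal $D$ such that $S := D - B$ satisfies $Sg = 0$, $S\succeq 0$, and $v^T S v > 0$ for every nonzero $v \perp g$, then $gg^T$ is the unique solution of (\ref{SDP:mainSDP}). Indeed, for any feasible $X$ we have $\tr(BX) = \tr(DX) - \tr(SX) = \sum_i D_{ii} - \tr(SX) \le \sum_i D_{ii}$, since $S\succeq 0$ and $X\succeq 0$ force $\tr(SX)\ge 0$. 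The bound $\sum_i D_{ii}$ is attained by $gg^T$ because $g^T S g = 0$, and any other optimal $X^\star$ must satisfy $\tr(SX^\star)=0$; as $S\succeq 0$ this forces $\operatorname{range}(X^\star)\subseteq \ker S = \operatorname{span}(g)$, whence $X^\star = c\,gg^T$ and $X^\star_{ii}=1$ gives $c=1$. This also immediately yields the efficiency claim, since the SDP is solvable in polynomial time and $g$ can be read off from its rank-one solution.

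Next I would pin down $D$. The requirement $Sg=0$, i.e.\ $Dg = Bg$, forces $D_{ii} = g_i (Bg)_i$, and a short computation identifies this as $D = 2(\Dp - \Dm) + I$, where $\Dp$ and $\Dm$ are the diagonal matrices of within-community and across-community degrees. Thus the certificate is fully determined by the graph, and the theorem reduces to the spectral statement $v^T S v > 0$ on $g^\perp$. To analyze it I would write $B = 2A - (J-I)$, with $A$ the adjacency matrix and $J$ the all-ones matrix, and use $\E A = \tfrac{p+q}{2}(J-I) + \tfrac{p-q}{2}(gg^T - I)$. For a unit $v \perp g$ this gives $v^T \E[B] v = [(p+q)-1](\1^T v)^2 + (1-2p)$, whose leading term is $\le 0$ since $p+q\to 0$; hence $v^T B v \le 2\,v^T(A-\E A)v + O(1) \le 2\|A - \E A\| + O(1)$, and therefore $v^T S v \ge \min_i D_{ii} - 2\|A-\E A\| - O(1)$.

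The proof then splits into two concentration estimates. First, a matrix Bernstein inequality applied to $A - \E A = \sum_{i<j}(\text{independent edge terms})$ yields $\|A - \E A\| = O(\sqrt{\log n})$ with high probability, which is of lower order than the $\Theta(\log n)$ diagonal signal and hence harmless. Second, and this is the crux, I must show $\min_i D_{ii}$ stays positive (indeed exceeds $2\|A-\E A\|+O(1)$) with high probability. Since, up to scaling, $(\Dp)_{ii} - (\Dm)_{ii}$ is a centred sum of $\Theta(n)$ Bernoulli variables with mean of order $(\alpha-\beta)\log n$ and variance of order $(\alpha+\beta)\log n$, a scalar Bernstein bound controls $\p(D_{ii}\le 0)$ by $\exp(-c\log n)$, and a union bound over the $n$ nodes succeeds precisely when $c>1$. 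The (somewhat conservative) Bernstein constants turn the requirement $c>1$ into exactly $(\alpha-\beta)^2 > 8(\alpha+\beta)+\tfrac83(\alpha-\beta)$, the stated hypothesis.

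The main obstacle is this last step, and the reason it resists a naive perturbation argument: the diagonal fluctuation $\|D - \E D\| = \max_i |D_{ii} - \E D_{ii}|$ is itself of order $\log n$, the same order as the signal $\E D_{ii}$, so $D$ cannot be absorbed into a spectral-norm error term and its minimum must be controlled directly via the tail bound above. I expect the Bernstein exponent to be the source of the non-tight constant: a sharper tilted large-deviation estimate for the minimum degree-difference (of the kind used for Lemma~\ref{lemma-conv}) would replace $8(\alpha+\beta)+\tfrac83(\alpha-\beta)$ by the information-theoretic $4(\alpha+\beta)-4$, matching the numerical observation that the SDP appears to succeed all the way down to the optimal threshold.
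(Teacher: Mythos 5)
Your certificate construction is exactly the paper's Lemma~\ref{lemma:lemma1SDP}: the dual variable $Y=2(\Dp-\Dm)+I_n$, the condition $S=Y-B\succeq 0$ with $Sg=0$ and $\lambda_2(S)>0$, and the uniqueness argument via complementary slackness all match. Where you diverge is the concentration step, and this is where there is a genuine gap. You split $S$ into the diagonal part $D$ (handled by scalar Bernstein plus a union bound) and the off-diagonal part (handled by $\|A-\E A\|$), claiming matrix Bernstein gives $\|A-\E A\|=O(\sqrt{\log n})$, ``of lower order than the $\Theta(\log n)$ diagonal signal and hence harmless.'' That claim is false in this regime: for $p,q=\Theta(\log n/n)$ one has $\sigma^2=\bigl\|\sum_{i<j}\E X_{ij}^2\bigr\|=\Theta(\log n)$, and the dimensional factor $n$ in Theorem~\ref{theorem:MatrixBernstein} costs another $\log n$ in the exponent, so matrix Bernstein only yields $\|A-\E A\|\leq \tau\log n$ with $\tau$ of order $\sqrt{\alpha+\beta}$ --- the \emph{same} order as the signal $(\alpha-\beta)\log n$. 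The sharper bound $\|A-\E A\|=O(\sqrt{np})=O(\sqrt{\log n})$ is true but requires Feige--Ofek/Vu-type combinatorial arguments that you do not supply and that matrix Bernstein cannot reproduce.

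The quantitative bookkeeping then fails too. Your scalar-Bernstein computation for $\p(D_{ii}\leq 0)$ with $t=(\alpha-\beta)\log n$, $\sigma^2\approx\frac{\alpha+\beta}{2}\log n$ and range $2$ actually gives the union-bound condition $(\alpha-\beta)^2>4(\alpha+\beta)+\frac43(\alpha-\beta)$, not the theorem's $8(\alpha+\beta)+\frac83(\alpha-\beta)$; and once the spectral term is honestly $\tau\log n$, the decoupled inequality $v^TSv\geq \min_i D_{ii}-2\|A-\E A\|-O(1)$ requires roughly $(\alpha-\beta)>s+\tau$ with $s,\tau\approx 2\sqrt{\alpha+\beta}$ at high SNR, i.e.\ $(\alpha-\beta)^2\gtrsim 16(\alpha+\beta)$, which is \emph{weaker} than the statement you are proving. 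The paper avoids this loss by never decoupling: it writes $2L_{SBM}+I_n-\1\1^T=C-\Gamma$ with $\Gamma$ the full zero-mean fluctuation (diagonal and off-diagonal together, as a sum of rank-one terms $\ap\Delta^+_{ij}$, $\am\Delta^-_{ij}$), projects onto $\1$ and $\1^\perp$ (Lemmas~\ref{lemma:proj1} and~\ref{lemma:projorth1}), and applies matrix Bernstein once to $\Gamma^{\perp\1}$ with $\sigma^2=4(\alpha+\beta)\log n\,(1+o(1))$ and $R=4$, which is exactly what produces the constant $8(\alpha+\beta)+\frac83(\alpha-\beta)$. Your last remark is nonetheless on target: with a genuine $O(\sqrt{\log n})$ spectral bound in hand, your diagonal-driven route would give a condition even better than the paper's --- but that is a different proof requiring tools beyond Bernstein, and as written your argument does not establish the theorem.
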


We will prove this result through a series of lemmas. Recall that $\G$ is the observed graph and that the vector $g$ corresponds to the correct choice of communities. As stated above, the optimization problem (\ref{SDP:mainSDP}) is an SDP (Semidefinite Program) and any SDP can be solved in polynomial time using methods such as the Interior Point Method. Hence if we can prove that the solution of (\ref{SDP:mainSDP}) is $g$, then we will have proved that the algorithm can recover the correct choice of communities in polynomial time.\\

Recall that the degree matrix $D$ of a graph $G$ is a diagonal matrix where each diagonal coefficient $D_{ii}$ corresponds to the number of neighbours of vertex $i$ and that $\lambda_2(M)$ is the second smallest eigenvalue of a symmetric matrix $M$. 

\begin{definition}
Let $\Gp$ (resp. $\Gm$) be a subgraph of $\G$ that includes the edges that link two nodes in the same community (resp. in different communities) and $A$ the adjacency matrix of $\G$. We denote by $\Dp$ (resp. $\Dm$) the degree matrix of $\Gp$ (resp. $\Gm$) and define the Stochastic Block Model Laplacian to be 
\begin{align*}
L_{SBM}=\Dp-\Dm-A
\end{align*}
\end{definition}

\begin{lemma} \label{lemma:lemma1SDP}
If
\begin{equation}
2L_{SBM} +I_n-\1\1^T \succeq 0 \text{ and } \lambda_2\left(2L_{SBM} +I_n-\1\1^T\right)>0 \label{SDP:condition}
\end{equation} 
then $gg^T$ is the unique solution to the SDP (\ref{SDP:mainSDP}).
\end{lemma}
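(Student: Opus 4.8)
The plan is to certify that $gg^{T}$ is the unique optimum of the SDP (\ref{SDP:mainSDP}) by the standard dual-certificate (complementary-slackness) argument for semidefinite programs. The Lagrangian dual of (\ref{SDP:mainSDP}) is $\min_{y\in\mathbb{R}^{n}} \1^{T}y$ subject to $\mathrm{diag}(y)-B\succeq 0$, so a dual feasible point is any diagonal $\mathrm{diag}(y)$ whose slack matrix $S:=\mathrm{diag}(y)-B$ is positive semidefinite; if moreover $Sg=0$, then complementary slackness is met at $X=gg^{T}$ and $gg^{T}$ is primal optimal. Before choosing $y$, I would record the two elementary identities driving everything: since the planted partition is balanced, $\1^{T}g=0$ and hence $\1\1^{T}g=0$; and since each vertex $i$ has $(\Dp)_{ii}$ same-community neighbours (all with $g_{j}=g_{i}$) and $(\Dm)_{ii}$ cross-community neighbours (all with $g_{j}=-g_{i}$), one gets $Ag=(\Dp-\Dm)g$ and therefore $L_{SBM}\,g=0$, i.e.\ $g\in\ker L_{SBM}$.

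Next I would pin down the certificate, which is in fact forced. Writing $B=2A+I_{n}-\1\1^{T}$ (checked entrywise from $B_{ii}=0$ and $B_{ij}=2A_{ij}-1$), the identities above give $Bg=(2\Dp-2\Dm+I_{n})g$. The slackness requirement $Sg=0$ then forces $y_{i}=g_{i}(Bg)_{i}=2(\Dp)_{ii}-2(\Dm)_{ii}+1$, so $\mathrm{diag}(y)=2\Dp-2\Dm+I_{n}$ and the slack matrix is completely determined: $S=2\Dp-2\Dm-2A+\1\1^{T}=2L_{SBM}+\1\1^{T}$, and indeed $Sg=2L_{SBM}g+\1\1^{T}g=0$, so $g\in\ker S$. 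The two hypotheses in (\ref{SDP:condition}) are precisely the two requirements on this slack matrix: $S\succeq 0$ certifies that $gg^{T}$ is optimal, while $\lambda_{2}(S)>0$, combined with $Sg=0$, says exactly that the kernel is one-dimensional, $\ker S=\mathrm{span}(g)$.

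Finally I would upgrade optimality to uniqueness, which is the only genuinely substantive step of this lemma. Let $X$ be any optimizer of (\ref{SDP:mainSDP}). Since $X_{ii}=1$ we have $\tr(\mathrm{diag}(y)X)=\sum_{i}y_{i}=\tr(\mathrm{diag}(y)gg^{T})$, and as $gg^{T}$ is optimal, $\tr(BX)=\tr(Bgg^{T})$; subtracting gives $\tr(SX)=0$. Diagonalising $X=\sum_{\ell}\lambda_{\ell}v_{\ell}v_{\ell}^{T}$ with $\lambda_{\ell}\ge 0$ and using $S\succeq 0$, every term $\lambda_{\ell}\,v_{\ell}^{T}Sv_{\ell}\ge 0$ must vanish, so $SX=0$ and each column of $X$ lies in $\ker S=\mathrm{span}(g)$; hence $X=c\,gg^{T}$, and the constraint $X_{ii}=c\,g_{i}^{2}=c=1$ forces $X=gg^{T}$. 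The main obstacle is not this conditional argument, which is routine convex duality once the slack matrix is identified, but rather verifying the spectral hypothesis (\ref{SDP:condition}) with high probability in the stated parameter regime; that quantitative step---showing $2L_{SBM}+\1\1^{T}$ concentrates closely enough that its smallest nontrivial eigenvalue stays positive---is where a matrix-Bernstein estimate enters and is treated separately.
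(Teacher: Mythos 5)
Your proof is correct and follows essentially the same route as the paper's: the identical dual program ($\min \tr(Y)$ over diagonal $Y \succeq B$), the identical certificate $Y = 2\left(\Dp-\Dm\right)+I_n$ (you derive it from the vector identity $Bg=(2\Dp-2\Dm+I_n)g$, the paper from the entrywise count of $(Bgg^T)_{ii}$, which is the same computation), and the identical uniqueness step via $\tr(SX)=0$, $S\succeq 0$, $X\succeq 0$ and the one-dimensional kernel, a step you actually spell out more carefully than the paper does. The one discrepancy is in your favor: your slack matrix $S=2(\Dp-\Dm)+I_n-B=2L_{SBM}+\1\1^T$ is the correct one, whereas the lemma as printed says $2L_{SBM}+I_n-\1\1^T$; since $\left(2L_{SBM}+I_n-\1\1^T\right)g=g\neq 0$ while the paper's own proof requires $g$ to span the null space of the certificate matrix (and its deterministic matrix $C$ indeed has eigenvalue $0$ at $g$), the "$I_n-\1\1^T$" in the statement is evidently a typo for "$\1\1^T$" that your computation silently corrects.
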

\begin{proof}
We can suppose that $g=(1,...,1,-1,...,-1)^T$ WLOG.
First of all, we obtain a sufficient condition for $gg^T$ to be a solution to SDP (\ref{SDP:mainSDP}) by using the KKT conditions. This will give us the first part of condition (\ref{SDP:condition}).
The primal problem of SDP (\ref{SDP:mainSDP}) is
\begin{align}
\max\ & \tr (BX)\nonumber\\
\text{s.t.}\ &X_{ii}=1 \nonumber \\
& X\succeq 0.\nonumber
\end{align}

The dual problem of SDP (\ref{SDP:mainSDP}) is
\begin{align}
\min\ & \tr (Y)\nonumber\\
\text{s.t.}\ &Y \succeq B \label{SDP:dual1}\\
& Y \text{ diagonal}. \nonumber
\end{align}

$gg^T$ is guaranteed to be an optimal solution to SDP (\ref{SDP:mainSDP}) under the following conditions:
\begin{itemize}
\item $gg^T$ is a feasible solution for the primal problem
\item There exists a matrix $Y$ feasible for the dual problem such that $\tr(Bgg^T)= \tr(Y)$.
\end{itemize}
The first point being trivially verified, it remains to find such a $Y$ (known as a dual certificate). Generally, one can also use complementary slackness to help find such a certificate but, in this case, it is equivalent to strong duality.

Define a correct (resp. incorrect) edge to be an edge between two nodes in the same (resp. different) community and a correct (resp. incorrect) non-edge to be the absence of an edge between two nodes in different (resp. same) communities. Notice that $(Bgg^T)_{ii}$ counts positively the correct edges and non-edges incident from node $i$ and negatively incorrect edges and incorrect non edges incident from node $i$. In other words
\begin{align}
(Bgg^T)_{ii} &= \text{correct edges + correct non edges - incorrect edges - incorrect non edges}\\
&= (\Dp)_{ii}+\left(\frac{n}{2}-(\Dm)_{ii}\right) - \left(\frac{n}{2}-1-(\Dp)_{ii}\right) - (\Dm)_{ii}\\
&=2\left((\Dp)_{ii}-(\Dm)_{ii}\right)+1
\end{align}
Hence: $\tr\left(Bgg^T\right)=\tr\left(2\left(\Dp-\Dm\right)+I_n\right)$ so $Y=2\left(\Dp-\Dm\right)+I_n$ verifies $\tr(Bgg^T)=\tr(Y)$ and, thus defined, is diagonal. As long as $2\left(\Dp-\Dm\right)+I_n \succeq B$, or in other words, $2L_{SBM} +I_n-\1\1^T \succeq 0$, we can then conclude that $gg^T$ is an optimal solution for SDP (\ref{SDP:mainSDP}). \\

The second part of condition (\ref{SDP:condition}) ensures that $gg^T$ is the unique solution to SDP (\ref{SDP:mainSDP}). Suppose that $X^*$ is another optimal solution to SDP (\ref{SDP:mainSDP}), then $\tr\left(X^*\left(2\left(\Dp-\Dm\right)+I_n-B\right)\right)=Tr \left( X^*\left(2L_{SBM} +I_n-\1\1^T\right)\right)=0$ from complementary slackness and $X^* \succeq 0$.  By assumption, the second smallest eigenvalue of  $2L_{SBM} +I_n-\1\1^T$ is non-zero. This entails that $g$ spans all of its null space. Combining this with complementary slackness, the fact that $X^* \succeq 0$ and $2L_{SBM} +I_n-\1\1^T \succeq 0$, we obtain that $X^*$ needs to be a multiple of $gg^T$. Since $X^*_{ii}=1$ we must have $X^*=gg^T$.
\end{proof}

\begin{proof}[Proof of Theorem~\ref{theorem:main_SDP_provenintheappendix}]

Given Lemma (\ref{lemma:lemma1SDP}), the next natural step would be to control the eigenvalues of $ 2L_{SBM} +I_n-\1\1^T$  when $n \rightarrow \infty$. We want to use Benstein's inequality to do this; to make its application easier, we rewrite $2L_{SBM} +I_n-\1\1^T$ as a linear combination of elementary deterministic matrices with random coefficients.
Define
\begin{align}
\alpha^+_{ij}=\begin{cases} 1 & \text{wp }\A \\ -1 & \text{wp } 1-\A \end{cases}\\
\alpha^-_{ij}=\begin{cases} 1 & \text{wp }\B \\ -1 & \text{wp } 1-\B \end{cases}
\end{align}
where the $(\alpha_{ij}^+)_{i,j}$, $(\alpha_{ij}^-)_{i,j}$  are independent and independent of each other.
Define
\begin{align}
\Delta^+_{ij}&= (e_i-e_j)(e_i-e_j)^T\\
\Delta^-_{ij}&=-(e_i+e_j)(e_i+e_j)^T
\end{align}
where $e_i$ (resp. $e_j$) is the vector of all zeros except the $i^{th}$ (resp. $j^{th}$) coefficient which is 1. 
Using these definitions, we can then write $2L_{SBM} +I_n-\1\1^T$ as the difference of two matrices $C$ and $\Gamma$ where $\Gamma$ is a zero-expectation matrix and $C$, a deterministic matrix that corresponds to the expectation, ie
\begin{align}
2L_{SBM} +I_n-\1\1^T &= \sum_{i<j, j \in S(i)} \alpha^+_{ij} \Delta^+_{ij}+\sum_{i<j, j \notin S(i)} \alpha^-_{ij} \Delta^-_{ij}\\
&= C - \Gamma
\end{align}
where
\begin{align}
C &=  \sum_{i<j, j \in S(i)}\left(2\A-1\right)\Delta^+_{ij}+\sum_{i<j, j \notin S(i)} \left(2\B-1\right) \Delta^-_{ij}\\
\Gamma &=  \sum_{i<j, j \in S(i)}\left(\left(2\A-1\right) - \alpha^{+}_{ij}\right) \cdot \Delta^+_{ij} + \sum_{i<j, j \notin S(i)} \left(\left(2\B-1\right) - \alpha^{-}_{ij}\right) \cdot \Delta^-_{ij}
\end{align}
Notice that $\E[\ap]=2\A-1$ and $\E[\am]=2\B-1$, hence $\E[ \Gamma]=0$.\\

Condition (\ref{SDP:condition}) is then equivalent to
\begin{align} \label{SDP:condition2}
C-\Gamma \succeq 0 \text{ and } \lambda_{\min}\left( C^{\perp g}-\Gamma^{\perp g}\right)>0 \text{ w.h.p.\ }
\end{align}
where $\Gamma^S$ (resp. $C^S$) represents the projection of $\Gamma$ (resp. $C$) onto the space $S$. Typically, if we want to project $\Gamma$ onto the space spanned by the vector $v$, then the projection matrix would be $\Pi=\frac{vv^T}{\|v\|_{2}^2}$ and $\Gamma^{v}=\Pi^T \Gamma \Pi$. $C$ being determinstic, condition (\ref{SDP:condition2}) amounts to controlling the spectral norm of $\Gamma$. This is what is exploited in Lemma \ref{lemma:lemma2SDP} in the appendix where it is shown that condition (\ref{SDP:condition2}) is verified if $\p\left(\lambda_{\max}(\Gamma^\1) \geq n-2\beta\log(n)\right)<n^{-\epsilon}$ and $\p\left(\lambda_{\max}(\Gamma^{\perp \1}) \geq (\alpha-\beta) \log(n)\right)< n^{-\epsilon}$ for some $\epsilon>0$.\\

Using Bernstein to conclude, Lemma \ref{lemma:proj1} in the appendix shows that $\p\left(\lambda_{\max}(\Gamma^\1) \geq n-2\beta\log(n)\right)<n^{-\epsilon}$  for some $\epsilon >0$ when n is big enough and Lemma \ref{lemma:projorth1} in the appendix shows that  $\p\left(\lambda_{\max}(\Gamma^{\perp \1}) \geq (\alpha-\beta) \log(n)\right)< n^{-\epsilon}$ for some $\epsilon>0$ if $(\alpha-\beta)^2>8(\alpha+\beta)+\frac{8}{3}(\alpha-\beta)$. This concludes the proof of the theorem.

\end{proof}

\subsection{Efficient full recovery from efficient partial recovery}\label{black}

In this section we show how to leverage state of the art algorithms for partial recovery in the sparse case in order to construct an efficient algorithm that achieves exact recovery down to the optimal information theoretical threshold.\\

The algorithm proceeds by splitting the information obtained in the graph into a part that is used by the partial recovery algorithm and a part that is used for the local steps. In order to make the two steps (almost) independent, we propose the following procedure: First take a random partition of the edges of complete graph on the $n$ nodes into 2 graphs $H_1$ and $H_2$ (done independently of the observed graph $\G$). $H_1$ is an Erdos-Renyi graph on n nodes with edge probability $C/\log(n)$, $H_2$ is the complement of $H_1$. We then define $G_1$ and $G_2$ subgraphs of $\G$ as $G_1=H_1 \cap \G$ and $G_2=H_2 \cap \G$. In the second step, we apply Massoulie's~\cite{Massoulie_SBM} algorithm for partial recovery to $G_1$. As $G_1$ is an SBM graph with parameters $(C\alpha,C\beta)$, this algorithm is guaranteed~\cite{Massoulie_SBM} to output, with high probability, a partition of the n nodes into two communities $A'$ and $B'$, such that the partition is correct for at least $(1-\delta(C))n$ nodes, where $\delta(C)\to 0$ as $C\to\infty$. In other words, $A'$ and $B'$ coincide with $A$ and $B$ (the correct communities) on at least $(1-\delta(C))n$ nodes. 
Lastly, we flip some of the nodes' memberships depending on the edges they have in $G_2$. Using the communities $A'$ and $B'$ obtained in the previous step, we flip the membership of a given node if it has more edges in $G_2$  going to the opposite community than it has to its own. If the the number of flips in each cluster is not the same, keep the clusters unchanged. 

\begin{figure}
        \centering
        \begin{subfigure}[b]{0.33\textwidth}
                \includegraphics[width=0.85\textwidth]{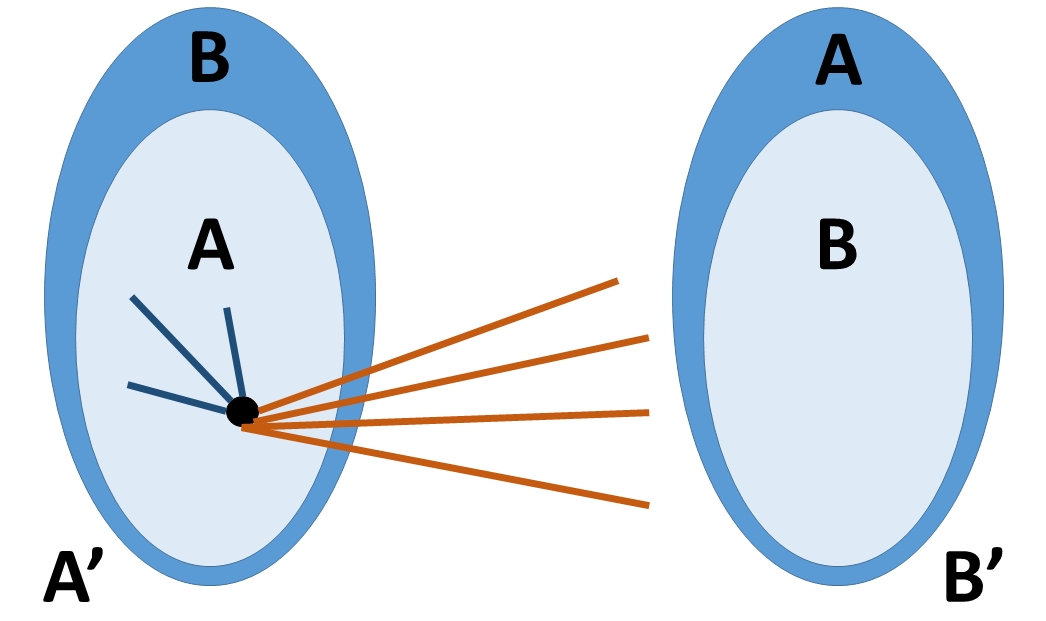}
                \caption{Correct node that will be flipped}
                \label{fig:Correctl}
        \end{subfigure}%
\qquad
        ~ 
        \begin{subfigure}[b]{0.33\textwidth}
                \includegraphics[width=0.85\textwidth]{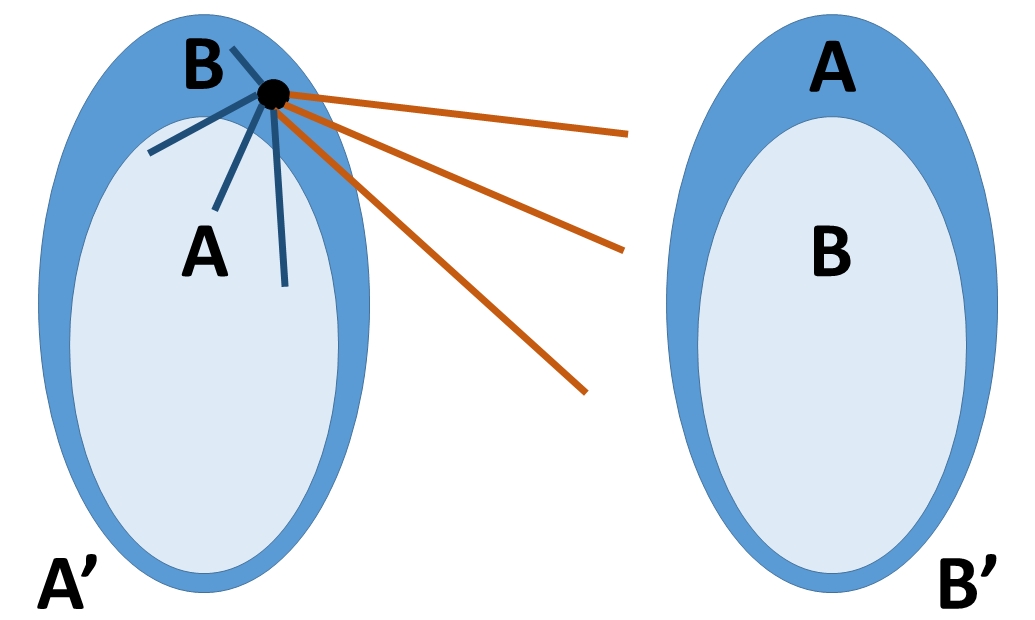}
                \caption{Incorrect node that will not be flipped}
                \label{fig:Incorrect}
        \end{subfigure}
        \caption{Two cases where a node in the graph will be mislabeled}\label{fig:Failures}
\end{figure}

\begin{theorem}
If $\frac{\alpha+\beta}{2}-\sqrt{\alpha \beta}>1$, then, there exists large enough $C$ (depending only on $\alpha$ and $\beta$) such that, with high probability, the algorithm described above will successfully recover the communities from the observed graph.
\end{theorem}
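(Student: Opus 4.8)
The plan is to exploit the conditional independence of $G_1$ and $G_2$ created by the random edge-splitting, using $G_1$ only to produce a rough but almost-correct labelling and $G_2$ only to polish it one vertex at a time. First I would condition on the partition $(H_1,H_2)$ of the complete graph, which is drawn independently of $\G$. Since the edges of $\G$ are independent across pairs and $H_1,H_2$ partition those pairs, conditionally on $(H_1,H_2)$ the graphs $G_1=H_1\cap\G$ and $G_2=H_2\cap\G$ are built from disjoint collections of independent Bernoulli variables and are therefore independent. I would then record that $G_1$ is an SBM graph with parameters $(C\alpha,C\beta)$ in the constant-degree regime, while $G_2$ is a lightly thinned copy of $\G$: a within-community pair surviving into $H_2$ still carries an edge with probability $\A$, and only a vanishing fraction $C/\log(n)$ of pairs is removed.

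The second step is partial recovery. Applying Massoulie's~\cite{Massoulie_SBM} algorithm to $G_1$, with $C$ chosen so large that $C(\alpha-\beta)^2>2(\alpha+\beta)$, yields w.h.p.\ a partition $(A',B')$ which, after fixing the global flip and rebalancing to $|A'|=|B'|=n/2$ (at a cost of at most $O(\delta(C)n)$ extra relabellings), disagrees with the ground truth on a set $M$ with $|M|\le\delta(C)n$, where $\delta(C)\to0$ as $C\to\infty$. Writing $a_-=|A\cap B'|$ and $b_+=|B\cap A'|$, exact balance forces $a_-=b_+$, so the ``equal number of flips'' safeguard of the algorithm is not triggered on this good event, and it suffices to show that the per-vertex vote computed from $G_2$ returns every vertex to its true community.

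The heart of the argument, and where I expect the main obstacle, is the per-vertex tail bound. Fix $v\in A$ and condition on $(A',B')$; by the independence above, $G_2$ is still fresh, so the vote $V=E_{G_2}(v,A')-E_{G_2}(v,B')$ is a signed sum of independent Bernoulli variables. The dominant contribution, from the $\approx n/2$ correctly labelled vertices on each side, is exactly the variable controlled in Definition~\ref{def:definitionofT} and Lemma~\ref{lemma-conv}: a Chernoff bound with the symmetric tilt $\lambda^\ast=\tfrac12\log(\alpha/\beta)$ produces, up to the negligible $C/\log(n)$ thinning, the factor $n^{-(\frac{\alpha+\beta}{2}-\sqrt{\alpha\beta})(1+o(1))}$. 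The difficulty is the perturbation from the $\le\delta(C)n$ misclassified vertices: those of $A\cap B'$ enter $V$ with the wrong sign and, under the tilt $\lambda^\ast$, inflate the bound by a factor $\exp\bigl(a_-\log(1+\A(e^{\lambda^\ast}-1))\bigr)\le n^{O(\delta(C))}$, whereas the $B\cap A'$ vertices only help. Hence
\[
\p\left(V\le 0 \,\middle|\, A',B'\right)\le n^{-\left(\frac{\alpha+\beta}{2}-\sqrt{\alpha\beta}\right)+O(\delta(C))+o(1)}.
\]

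The scheme closes by choosing $C$ large enough that the correction $O(\delta(C))$ is less than half the margin $\frac{\alpha+\beta}{2}-\sqrt{\alpha\beta}-1>0$; for such $C$ and all large $n$ the exponent exceeds $1+\eps$ for some $\eps>0$, uniformly over every balanced $(A',B')$ with at most $\delta(C)n$ errors. A union bound over the $n$ vertices (treating $v\in A$ and $v\in B$ symmetrically) then gives $\p(\text{some vote is wrong}\mid\text{partial recovery succeeded})\le n\cdot n^{-1-\eps}=n^{-\eps}\to0$, and combining this with the high-probability success of Massoulie's step shows that on an event of probability $1-o(1)$ every vertex is returned to its true side, so the refined partition equals $(A,B)$. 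The two places demanding care are the bookkeeping verifying that the thinning factor $C/\log(n)$ and the effective parameters $\alpha(1-C/\log n),\beta(1-C/\log n)$ do not move the leading exponent, and the confirmation that the tilted moment generating function of the $O(\delta(C)n)$ wrong-signed terms is genuinely only an $n^{O(\delta(C))}$ correction.
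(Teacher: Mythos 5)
Your proposal is correct, and it follows the paper's overall architecture exactly: split the edges via $H_1,H_2$, run Massoulie's algorithm~\cite{Massoulie_SBM} on $G_1$ to get a $(1-\delta(C))$-accurate partition, flip each vertex according to its $G_2$-degrees, and union bound the per-vertex failure probability over $n$ vertices, choosing $C$ large so the exponent exceeds $1$. Where you genuinely diverge is in the per-vertex tail estimate. The paper routes this through Lemma~\ref{failupperbound}: it splits the failure event at a threshold $\gamma\,\delta(C)\log(n)$ with $\gamma=1/(\delta\sqrt{\log(1/\delta)})$, kills the contribution of the $O(\delta(C)n)$ misclassified vertices by multiplicative Chernoff (yielding the $n^{-(1+\Omega(1))}$ term), and controls the shifted main term $T\left(\frac{n}{2},p,q,-\gamma\delta(C)\log n\right)$ via the bespoke binomial estimates $T^*$ of Lemma~\ref{lemma1}, producing the explicit exponent $g(\alpha,\beta,-\gamma\delta(C))\to\frac{\alpha+\beta}{2}-\sqrt{\alpha\beta}$. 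You instead apply a single exponential tilt $\lambda^*=\frac12\log(\alpha/\beta)$ to the entire signed sum and absorb the misclassified vertices' moment generating function as a multiplicative $n^{O(\delta(C))}$ correction; your computation is right, since $\E\, e^{\lambda^* Z_i}\E\, e^{-\lambda^* W_i}\le\exp\left(q(e^{\lambda^*}-1)+p(e^{-\lambda^*}-1)\right)$ gives exactly $n^{-\left(\frac{\alpha+\beta}{2}-\sqrt{\alpha\beta}\right)}$ for the dominant $n/2$ pairs, and the $2Cn/\log(n)$ thinning costs only an $e^{O(C)}=O(1)$ factor. Note that this does not contradict the paper's warning that Chernoff bounds ``do not provide the right answer'': that caveat concerns the matching \emph{lower} bound on the tail needed for the converse (Lemma~\ref{lemma-conv}), whereas here only an upper bound is required, for which tilting is sharp. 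Your route is thus shorter and self-contained; the paper's buys reuse of machinery it needs anyway for the impossibility result, plus an explicit rate function $g$. You also patch two details the paper glosses over: rebalancing $(A',B')$ to exact sizes $n/2$ (at $O(\delta(C)n)$ extra errors) and observing that balance forces $|A\cap B'|=|B\cap A'|$, so the algorithm's equal-number-of-flips safeguard is not triggered on the good event, and insisting that the tail bound hold uniformly over all partitions measurable with respect to $G_1$, justified by the conditional independence of $G_2$.
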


\begin{proof}
In the following, we will suppose that the partial recovery algorithm succeeds as described above w.h.p.\  and we want to show that when $\frac{\alpha+\beta}{2}-\sqrt{\alpha \beta} > 1$ and $\delta$ small enough, the probability that there exists a node that doesn't belong to the correct community, after the local improvements, goes to 0 when $n \rightarrow \infty$. Our goal is to union bound over all possible nodes. We are thus interested in the probability that a node is mislabeled at the end of the algorithm.\\

Recall the random variables $(W_i)_i$ and $(Z_i)_i$ iid and mutually independent Bernoulli random variables with expectations respectively $\alpha \log(n)/n$ and $\beta \log(n)/n$. $W_i$ represents if there is an edge between two nodes in the same community and $Z_i$ if there is an edge between two nodes in different communities. Define $(W'_i)_i$ and $(Z'_i)_i$ iid copies of $(W_i)_i$ and $(Z_i)_i$. For simplicity, we start by assuming that $H_2$ is the complete graph. In this case we have at most $\delta(C)n$ incorrectly labelled nodes (ie $\delta(C) \frac{n}{2}$ nodes that are in A but belong to B' and $\delta(C) \frac{n}{2}$ nodes that are in B but belong to A'). A node in the graph is mislabeled only if it has at least as many connections to the wrong cluster as connections to the right one. This is illustrated in Figure \ref{fig:Failures}. We can express the event with the random variables $(Z_i)_i$, $(W_i)_i$, their copies, and $\delta(C)$.
\begin{align}
P_e = \p(\text{node e is mislabeled}) = \p \left( \sum_{i=1}^{(1-\delta(C))\frac{n}{2}} Z_i +\sum_{i=1}^{\delta(C)\frac{n}{2}}W_i \geq  \sum_{i=1}^{(1-\delta(C))\frac{n}{2}} W'_i +\sum_{i=1}^{\delta(C)\frac{n}{2}}Z'_i \right) \label{probfail}
\end{align}
Recall that we assumed that $H_2$ was a complete graph. In reality, using Lemma \ref{prop:degH1}, it can be shown that the degree of any node in $H_2$ is at least $n \left( 1-2 \frac{C}{\log(n)} \right)$ w.h.p.\ . Taking this into consideration, we will loosely upperbound (\ref{probfail}) by removing $2\frac{C}{\log(n)}n$ on both the rhs terms. Notice that the removal of edges is independent of the outcome of the random variables and
\begin{align}
P_e \leq \p \left( \sum_{i=1}^{(1-\delta(C))\frac{n}{2}} Z_i +\sum_{i=1}^{\delta(C)\frac{n}{2}}W_i \geq  \sum_{i=1}^{(1-\delta(C))\frac{n}{2} - 2 \frac{C}{\log(n)}n} W'_i +\sum_{i=1}^{\delta(C)\frac{n}{2} - 2 \frac{C}{\log(n)}n}Z'_i \right) \label{looseprobfail}
\end{align}
Lemma \ref{failupperbound} shows that (\ref{looseprobfail}) can be upperbounded as follows
\begin{align}
P_e \leq n^{- (g(\alpha,\beta,- \gamma \delta(C) )+o(1))} + n^{-(1+\Omega(1))}.
\end{align}
where 
\begin{align}
&\text{ C is a constant depending only on $\alpha$ and $\beta$ } \\
 &\gamma= \frac{1}{\delta(C) \sqrt{\log(1/\delta(C))}}\\
 &g(\alpha, \beta, \delta')= \frac{\alpha+\beta}{2} - \sqrt{\delta'^2+\alpha \beta} -\delta' \log(\beta) +\frac{\delta'}{2} \log\left(\alpha \beta \cdot \frac{\sqrt{\delta'^2+\alpha \beta }+\delta'}{\sqrt{\delta'^2+\alpha \beta }-\delta'} \right).
\end{align}
Notice that $g(\alpha,\beta,\delta')$ is a function that converges continuously to $f(\alpha,\beta)$ when $\delta' \rightarrow 0$. In this particular case, this is verified as $-\gamma \delta(C) \rightarrow 0$ when $C \rightarrow \infty$. Using a union bound on all nodes
\begin{align}
\p(\exists \text{ mislabeled node }) \leq \sum_{e \in [n]} P_e \leq n^{1- g(\alpha,\beta, -\gamma \delta(C))-o(1)} +n^{-\Omega(1)} \label{fail:last}
\end{align}
For $\delta(C)$ small enough (ie C large enough) and $1-f(\alpha,\beta)<0$, (\ref{fail:last}) goes to 0 when $n \rightarrow \infty$.
\end{proof}

\begin{figure}[h!]
\begin{center}
 \includegraphics[width=0.45\textwidth]{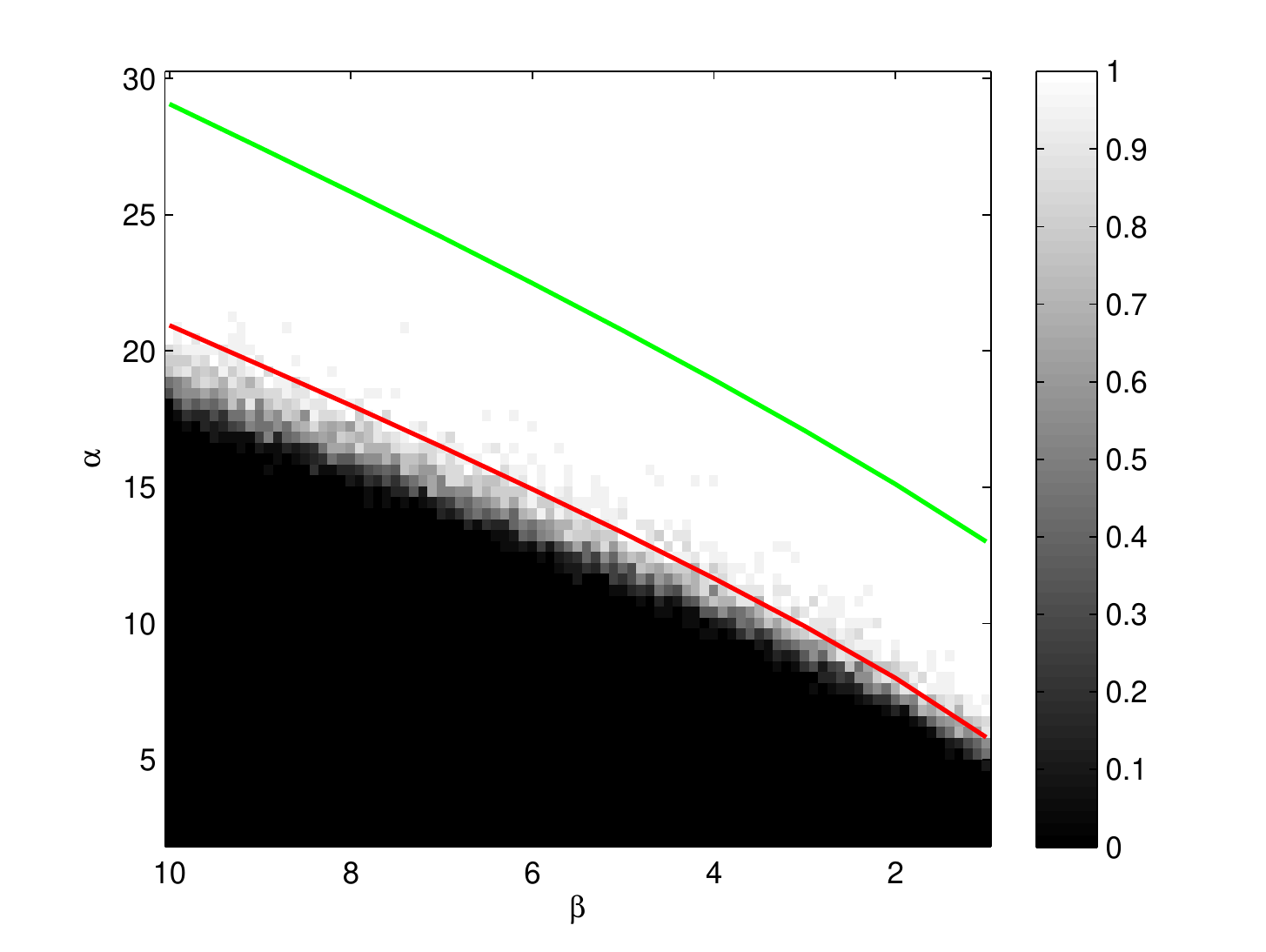}  
\caption{
\label{Figure_Dualcertificate}
{\footnotesize{
This plot shows that the empirical probability of success of the SDP based algorithm essentially matches the optimal threshold of Theorem (\ref{theorem:mainlowerbound}) in red, which is provably achieved with the efficient algorithm of Section \ref{black}. 
We fix $n=300$ and the number of trials to be 20. Then, at each trial and for fixed $\alpha$ and $\beta$, we check how many times each method succeeds. Dividing by the number of trials, we obtain the empirical probability of success by generating the random matrix $C-\Gamma$ corresponding to the correct choice of communities $g=(1,..,1,-1,..,-1)$ and check if condition (\ref{SDP:condition2}) holds (while this implies that the SDP achieves exact recovery it is not necessary). 
In green, we plot the curve corresponding to the threshold given in Theorem (\ref{theorem:main_SDP_provenintheappendix}) ie $(\alpha-\beta)^2 -8(\alpha+\beta)-\frac83(\alpha-\beta)=0$. 
In red, we plot the curve corresponding to the threshold given in Theorems (\ref{theorem:mainlowerbound}) and (\ref{theorem:main_upperbound_2}) ie $(\alpha-\beta)^2 -4(\alpha+\beta)-4=0$ as $\alpha+\beta>2$ in our graph. 
}}\normalsize}
\end{center}
\end{figure}

\section{Conclusion and open problems}
Note that at high SNR (large $\alpha-\beta$), the SDP based algorithm succeeds in the regime of the optimal threshold obtained with ML, up to a factor $2$. When running numerical simulations however, it would seem that the SDP based method achieves exact recovery all the way down to the optimal threshold. As a consequence, the additional factor $2$ is likely a limitation of the analysis, in particular the matrix Bernstein inequality, rather than the algorithm itself.
It remains open to show that this algorithm (or a spectral algorithm) achieves the optimal bound $\frac{\alpha + \beta}2 - \sqrt{\alpha \beta}>1$.  
While we obtain that there is no gap between what can be achieved with an efficient algorithm and the maximum likelihood, as shown in Section \ref{black} using black-box algorithms for partial recovery and local improvement, obtaining direct algorithms would still be interesting. 
It would also be interesting to understand if efficient algorithms achieving the detection threshold can be used to achieve the recovery threshold and vice versa, or whether targeting the two different thresholds leads to different algorithmic developments.

Finally, it is natural to expect that the results obtained in this paper extend to a much more general family of network models, with multiple clusters,  overlapping communities \cite{mixed} and labelled edges \cite{jiaming}. 
   
\subsection*{Acknowledgements}
The authors are grateful to Dustin G. Mixon for thoroughly reading a preliminary version of this manuscript and providing useful comments, as well as to Philippe Rigollet and Van Vu for stimulating discussions on the efficient recovery via partial recovery.

\newpage
\bibliographystyle{plain}
{\small
\bibliography{gen,gen2}}

\begin{thebibliography}{10}

\bibitem{Abbe_Z2SynchJournal}
E.~Abbe, A.~S. Bandeira, A.~Bracher, and A.~Singer.
\newblock Decoding binary node labels from censored edge measurements: Phase
  transition and efficient recovery.
\newblock {\em available at arXiv:1404.4749 [cs.IT]}, 2014.

\bibitem{Abbe_Z2SynchER}
E.~Abbe, A.~S. Bandeira, A.~Bracher, and A.~Singer.
\newblock Linear inverse problems on {E}rd{\H{o}}s-{R}\'enyi graphs:
  Information-theoretic limits and efficient recovery.
\newblock {\em IEEE International Symposium on Information Theory (ISIT2014)},
  to appear, 2014.

\bibitem{abbetoc}
E.~Abbe and A.~Montanari.
\newblock Conditional random fields, planted constraint satisfaction and
  entropy concentration.
\newblock {\em to appear in the journal {\it Theory of Computing}, available at
  arXiv:1305.4274v2}.

\bibitem{mixed}
E.M. Airoldi, D.M. Blei, S.E. Fienberg, and E.P. Xing.
\newblock Mixed membership stochastic blockmodels.
\newblock {\em J Mach Learn Res}, 9:1981--2014, 2008.

\bibitem{bickel}
P.~J. Bickel and A.~Chen.
\newblock A nonparametric view of network models and newman�girvan and other
  modularities.
\newblock {\em Proceedings of the National Academy of Sciences}, 2009.

\bibitem{boppana}
R.B. Boppana.
\newblock Eigenvalues and graph bisection: An average-case analysis.
\newblock {\em In 28th Annual Symposium on Foundations of Computer Science},
  pages 280--285, 1987.

\bibitem{bui}
T.N. Bui, S.~Chaudhuri, F.T. Leighton, and M.~Sipser.
\newblock Graph bisection algorithms with good average case behavior.
\newblock {\em Combinatorica}, 7(2):171--191, 1987.

\bibitem{carson}
T.~Carson and R.~l~Impagliazzo.
\newblock Hill-climbing finds random planted bisections.
\newblock In {\em Proc. 12th Symposium on Discrete Algorithms (SODA 01), ACM
  press, 2001}, pages 903--909, 2001.

\bibitem{sbm-algos}
Y.~Chen, S.~Sanghavi, and H.~Xu.
\newblock {Clustering Sparse Graphs}.
\newblock {\em arXiv:1210.3335}, 2012.

\bibitem{choi}
D.~S. Choi, P.~J. Wolfe, and E.~M. Airoldi.
\newblock Stochastic blockmodels with a growing number of classes.
\newblock {\em Biometrika}, 2012.

\bibitem{coja-sbm}
A.~Coja-oghlan.
\newblock Graph partitioning via adaptive spectral techniques.
\newblock {\em Comb. Probab. Comput.}, 19(2):227--284, March 2010.

\bibitem{condon}
A.~Condon and R.~M. Karp.
\newblock Algorithms for graph partitioning on the planted partition model.
\newblock {\em Lecture Notes in Computer Science}, 1671:221--232, 1999.

\bibitem{decelle}
A.~Decelle, F.~Krzakala, C.~Moore, and L.~Zdeborov\'a.
\newblock Asymptotic analysis of the stochastic block model for modular
  networks and its algorithmic applications.
\newblock {\em Phys.\ Rev.\ E 84, 066106}, 2011.

\bibitem{sbm-book}
P.~Doreian, V.~Batagelj, and A.~Ferligoj.
\newblock {\em {Generalized Blockmodeling (Structural Analysis in the Social
  Sciences)}}.
\newblock Cambridge University Press, November 2004.

\bibitem{dyer}
M.E. Dyer and A.M. Frieze.
\newblock The solution of some random {NP}-hard problems in polynomial expected
  time.
\newblock {\em Journal of Algorithms}, 10(4):451 -- 489, 1989.

\bibitem{ER-seminal}
P.~Erd\H{o}s and A.~R\'{e}nyi.
\newblock {On random graphs, I}.
\newblock {\em Publicationes Mathematicae (Debrecen)}, 6:290--297, 1959.

\bibitem{ER2}
P.~Erd\H{o}s and A.~R\'enyi.
\newblock On the evolution of random graphs.
\newblock In {\em Publication of the Mathematical Institute of the Hungarian
  Academy of Sciences}, pages 17--61, 1960.

\bibitem{sbm3}
S.~E. Fienberg, M.~M. Meyer, and S.~S. Wasserman.
\newblock Statistical analysis of multiple sociometric relations, 1985.

\bibitem{MXGoemans_DPWilliamson_1995}
M.~X. Goemans and D.~P. Williamson.
\newblock Improved apprximation algorithms for maximum cut and satisfiability
  problems using semidefine programming.
\newblock {\em Journal of the Association for Computing Machinery},
  42:1115--1145, 1995.

\bibitem{airoldi}
A.~Goldenberg, A.~X. Zheng, S.~E. Fienberg, and E.~M. Airoldi.
\newblock A survey of statistical network models.
\newblock {\em Foundations and Trends in Machine Learning}, 2(2):129--233,
  2010.

\bibitem{prem}
P.~K. Gopalan and D.~M. Blei.
\newblock Efficient discovery of overlapping communities in massive networks.
\newblock {\em Proceedings of the National Academy of Sciences}, 2013.

\bibitem{holland}
P.~W. Holland, K.~Laskey, and S.~Leinhardt.
\newblock {Stochastic blockmodels: First steps}.
\newblock {\em Social Networks}, 5(2):109--137, 1983.

\bibitem{jerrum}
M.~Jerrum and G.~B. Sorkin.
\newblock The metropolis algorithm for graph bisection.
\newblock {\em Discrete Applied Mathematics}, 82(1�3):155 -- 175, 1998.

\bibitem{newman2}
B.~Karrer and M.~E.~J. Newman.
\newblock Stochastic blockmodels and community structure in networks.
\newblock {\em Phys. Rev. E}, 83:016107, Jan 2011.

\bibitem{kumar}
K.~R. Kumar, P.~Pakzad, A.H. Salavati, and A.~Shokrollahi.
\newblock Phase transitions for mutual information.
\newblock In {\em Turbo Codes and Iterative Information Processing (ISTC), 2010
  6th International Symposium on}, pages 137--141, 2010.

\bibitem{blog:binomial}
D.~Lipton and K.~Regan.
\newblock G\"odel�s lost letter and p=np: Bounds on binomial coefficents.
\newblock 2009.

\bibitem{Massoulie_SBM}
L.~Massoulie.
\newblock Community detection thresholds and the weak ramanujan property.
\newblock {\em Available online at arXiv:1311.3085 [cs.SI]}, December 2013.

\bibitem{mcsherry}
F.~McSherry.
\newblock Spectral partitioning of random graphs.
\newblock In {\em FOCS '01: Proceedings of the 42nd IEEE symposium on
  Foundations of Computer Science}, page 529, 2001.

\bibitem{mossel-consist}
E.~Mossel, J.~Neeman, and A.~Sly.
\newblock Consistency thresholds for binary symmetric block models.
\newblock {\em Arxiv:arXiv:1407.1591}.

\bibitem{Mossel_SBM1}
E.~Mossel, J.~Neeman, and A.~Sly.
\newblock Stochastic block models and reconstruction.
\newblock {\em Available online at arXiv:1202.1499 [math.PR]}, 2012.

\bibitem{Mossel_SBM2}
E.~Mossel, J.~Neeman, and A.~Sly.
\newblock A proof of the block model threshold conjecture.
\newblock {\em Available online at arXiv:1311.4115 [math.PR]}, January 2014.

\bibitem{rohe}
K.l Rohe, S.~Chatterjee, and B.~Yu.
\newblock Spectral clustering and the high-dimensional stochastic blockmodel.
\newblock {\em The Annals of Statistics}, 39(4):1878--1915, 08 2011.

\bibitem{sbm5}
T.~A.~B. Snijders and K.~Nowicki.
\newblock {Estimation and Prediction for Stochastic Blockmodels for Graphs with
  Latent Block Structure}.
\newblock {\em Journal of Classification}, 14(1):75--100, January 1997.

\bibitem{snij}
T.~A.~B. Snijders and K.~Nowicki.
\newblock {Estimation and Prediction for Stochastic Blockmodels for Graphs with
  Latent Block Structure}.
\newblock {\em Journal of Classification}, 14(1):75--100, January 1997.

\bibitem{Tropp:TailBoundsRM}
J.~A. Tropp.
\newblock User-friendly tail bounds for sums of random matrices.
\newblock {\em Found. Comput. Math.}, 12(4):389--434, 2012.

\bibitem{Vu-arxiv}
V.~Vu.
\newblock A simple svd algorithm for finding hidden partitions.
\newblock {\em Available online at arXiv:1404.3918}, April 2014.

\bibitem{sbm4}
Y.~J. Wang and G.~Y. Wong.
\newblock {Stochastic blockmodels for directed graphs}.
\newblock {\em Journal of the American Statistical Association}, pages 8--19,
  1987.

\bibitem{sbm1}
H.~C. White, S.~A. Boorman, and R.~L. Breiger.
\newblock Social structure from multiple networks.
\newblock {\em American Journal of Sociology}, 81:730--780, 1976.

\bibitem{jiaming}
J.~Xu, M.~Lelarge, and L.~Massoulie.
\newblock Edge label inference in generalized stochastic block models: from
  spectral theory to impossibility results.
\newblock {\em to appear in Proceedings of COLT 2014}, 2014.

\end{thebibliography}


\newpage
\appendix

\section{Proof of technical lemmas}

\subsection{Tail of the difference between two independent binomials of different parameters}

Recall the following definition:
\begin{definition}\label{def:definitionofT}
Let $m$ be a natural number, $p,q\in [0,1]$, and $\delta\geq 0$, we define
\begin{align}
T(m, p, q , \delta) &= \p\left( \sum_{i=1}^m (Z_i - W_i) \geq \delta \right),
\end{align}
where $W_1,\dots, W_m$ are i.i.d.\ Bernoulli$(p)$ and $Z_1,\dots, Z_m$ are i.i.d.\ Bernoulli$(q)$, independent of $W_1,\dots, W_m$.
\end{definition}
For a better understanding of some of the proofs that follow, it is important to consider the behavior of $T \left( n/2, \alpha \log(n)/n, \beta \log(n)/n, 0 \right)$ when $n \rightarrow \infty$. It can be shown that
\begin{align}
T \left( \frac{n}{2}, \frac{\alpha \log(n)}{n}, \frac{\beta \log(n)}{n}, 0 \right) =\exp \left( - \left( \frac{\alpha + \beta}{2} -\sqrt{\alpha \beta} +o(1) \right) \log(n) \right) \label{behaviorT}
\end{align}
This result is particularly interesting as one can't hope to obtain this bound using standard techniques such as Central Limit Theorem approximations or Chernoff bounds. This comes from the fact that when using these bounds, the error on the exponent is of order $O(\log(n))$ which is relevant here. In the same way, when using an approximation of the binomial coefficients to prove (\ref{behaviorT}), one has to rely on tight estimates. Equation (\ref{behaviorT}) has been extended  to other values of the parameters (typically $T \left(n/2, \alpha \log(n)/n, \beta \log(n)/n,  \epsilon \right)$ where $\epsilon$ is given) but the main idea is contained in equation (\ref{behaviorT}).\\

The idea in the subsequent proofs will be to bound $T(m,p,q,\eps \log(n))$ with its dominant term $T^*(m,p,q,\epsilon)$ that we define below. As a consequence, it is particularly important to bound this dominant term as well, which is what is done in the following lemma. 
\begin{lemma}\label{lemma1}
We recall that $p=\frac{\alpha \log(n)}{n}$ and $q=\frac{\beta \log(n)}{n}$ and we define:
\begin{align*}
V(m,p,q,\tau,\epsilon)=\binom{m}{(\tau+\epsilon) \frac{m}{n} \log(n)} \binom{m}{\tau \frac{m}{n} \log(n)} p^{\frac{m}{n}\tau \log(n)} q^{\frac{m}{n}(\tau +\epsilon) \log(n)}(1-p)^{m-\tau \frac{m}{n}\log(n)} (1-q)^{m-(\tau +\epsilon)\frac{m}{n}\log(n)}
\end{align*}
where $\epsilon=O(1)$. We also define the function
\begin{align} \label{defg}
g(\alpha,\beta,\epsilon)= (\alpha+\beta)-\epsilon \log(\beta) - 2\sqrt{\left( \frac{\epsilon}{2}\right)^2 +\alpha \beta} +\frac{\epsilon}{2} \log \left( \alpha \beta \frac{\sqrt{(\epsilon/2)^2 +\alpha \beta} +\epsilon/2}{\sqrt{(\epsilon/2)^2 +\alpha \beta} -\epsilon/2} \right)
\end{align}
Then we have the following results for $T^*(m,p,q,\epsilon)=\max_{\tau>0} V(m,p,q,\tau,\epsilon)$ :\\

\noindent For $m \in \mathbb{N}$ and $\forall \tau>0$:
\begin{align}
-\log(T^*(m,p,q,\epsilon)) &\geq \frac{m}{n} \log(n) \cdot g(m,n,\epsilon)- o\left(\frac{m}{n}\log(n)\right)  \forall m \in \mathbb{N} \label{lemma:result1} 
\end{align}
For $cn\leq m <c'n^{3/2}$ and $\forall \tau>0$:
\begin{align}
-\log(T^*(m,p,q, \epsilon)) &\leq \frac{m}{n} \log(n) \cdot g(m,n,\epsilon)+o\left(\frac{m}{n}\log(n)\right)  \forall cn\leq m<c'n^{\frac32}  \label{lemma:result2} 
\end{align}
\end{lemma}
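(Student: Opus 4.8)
The plan is to work directly with $-\log V(m,p,q,\tau,\epsilon)$, optimize over $\tau$, and then convert the resulting asymptotics into the two stated one-sided bounds. Writing $k_W=\tau\frac{m}{n}\log n$ and $k_Z=(\tau+\epsilon)\frac{m}{n}\log n$, observe that $V$ is exactly the product $\p\bigl(\mathrm{Bin}(m,p)=k_W\bigr)\,\p\bigl(\mathrm{Bin}(m,q)=k_Z\bigr)$, i.e.\ the joint probability that the two binomials in Definition~\ref{def:definitionofT} hit the values $k_W,k_Z$ whose difference is $\epsilon\frac{m}{n}\log n$. Taking logarithms,
\begin{align*}
-\log V &= -\log\binom{m}{k_W} - k_W\log p - (m-k_W)\log(1-p)\\
&\quad -\log\binom{m}{k_Z} - k_Z\log q - (m-k_Z)\log(1-q).
\end{align*}
A naive application of Stirling to $\binom{m}{k_W}$ and of $-k_W\log p$ separately produces terms of order $\frac{m}{n}(\log n)^2$; the crucial point is that these cancel. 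Indeed, grouping each binomial coefficient with its companion Bernoulli factor and using $\log\binom{m}{k}=k\log\frac{em}{k}+o(k)$ for $1\le k=o(m)$ gives $-\log\binom{m}{k_W}-k_W\log p=-k_W\log\frac{emp}{k_W}+o(k_W)$, and a direct computation shows $\frac{emp}{k_W}=\frac{e\alpha}{\tau}=\Theta(1)$. The surviving scale is therefore $k_W=\Theta(\frac{m}{n}\log n)$, as required.

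Carrying this out for both blocks and using $-(m-k_W)\log(1-p)=\alpha\frac{m}{n}\log n\,(1+o(1))$ (since $k_Wp$ and $mp^2$ are negligible against $mp$), I obtain
\begin{align*}
-\log V = \frac{m}{n}\log n\,\bigl(h(\tau)+o(1)\bigr),\quad h(\tau):=\alpha+\beta-\tau\Bigl(1+\log\tfrac{\alpha}{\tau}\Bigr)-(\tau+\epsilon)\Bigl(1+\log\tfrac{\beta}{\tau+\epsilon}\Bigr),
\end{align*}
so that $-\log T^*=\min_{\tau>0}(-\log V)=\frac{m}{n}\log n\,\bigl(\min_{\tau>0}h(\tau)+o(1)\bigr)$. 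Differentiating, $h'(\tau)=\log\frac{\tau(\tau+\epsilon)}{\alpha\beta}$, so the unique minimizer solves $\tau(\tau+\epsilon)=\alpha\beta$, giving $\tau^\ast=-\frac{\epsilon}{2}+\sqrt{(\epsilon/2)^2+\alpha\beta}$. Substituting $\tau^\ast$ and simplifying with the relations $\tau^\ast(\tau^\ast+\epsilon)=\alpha\beta$ and $2\tau^\ast+\epsilon=2\sqrt{(\epsilon/2)^2+\alpha\beta}$ collapses all logarithmic terms, and after using $\frac{\epsilon}{2}\log\frac{\alpha}{\beta}=\frac{\epsilon}{2}\log(\alpha\beta)-\epsilon\log\beta$ one checks that $h(\tau^\ast)=g(\alpha,\beta,\epsilon)$ exactly.

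To turn these asymptotics into the two inequalities I would treat the directions separately. For the lower bound (\ref{lemma:result1}), valid for all $m$, I upper bound $V$ using only the universal estimates $\binom{m}{k}\le(em/k)^k$ and $1-p\le e^{-p}$; this yields $-\log V\ge\frac{m}{n}\log n\,(h(\tau)-o(1))\ge\frac{m}{n}\log n\,(g(\alpha,\beta,\epsilon)-o(1))$ for every $\tau$, since $g=\min_\tau h$, hence the same bound for $-\log T^\ast$. For the matching upper bound (\ref{lemma:result2}) I evaluate $V$ at the integer rounding of $\tau=\tau^\ast$ and lower bound the two binomial coefficients by the two-sided form of Stirling's formula; this is where $k_W,k_Z\to\infty$ is needed, forcing $m\ge cn$, and where keeping the correction terms below $\frac{m}{n}\log n$ uses the range $m<c'n^{3/2}$. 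Since $-\log T^\ast=\min_\tau(-\log V)\le -\log V(\tau^\ast)$, this gives the upper bound.

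The main obstacle I anticipate is the uniform control of error terms rather than the optimization, which is elementary once the correct scale is identified. The two delicate points are (i) verifying that the a priori dominant $\frac{m}{n}(\log n)^2$ contributions from $\log\binom{m}{k}$ and from $k\log p$ cancel to leave only the $\frac{m}{n}\log n$ scale, and (ii) showing that every remaining correction---Stirling's $\tfrac12\log(2\pi k)$ and $-\tfrac{k^2}{2m}$ terms, the replacement $\log(1-p)=-p(1+o(1))$, and the rounding of $\tau^\ast\frac{m}{n}\log n$ to an integer---is $o(\frac{m}{n}\log n)$ uniformly over the relevant range of $m$. As the surrounding discussion emphasizes, standard Chernoff or CLT estimates fail precisely because they misjudge this $\log n$-scale exponent, so the argument must rely on sharp two-sided Stirling bounds throughout.
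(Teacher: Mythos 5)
Your proposal is correct and follows essentially the same route as the paper's proof: you reduce $-\log V$ to the same exponent function $h(\tau)$ (after the same cancellation of the $\frac{m}{n}(\log n)^2$-scale terms), minimize at the same $\tau^\ast=-\frac{\epsilon}{2}+\sqrt{(\epsilon/2)^2+\alpha\beta}$ with $h(\tau^\ast)=g(\alpha,\beta,\epsilon)$, and split the two directions exactly as the paper does, via the universal bound $\binom{m}{k}\leq(em/k)^k$ for (\ref{lemma:result1}) and Stirling-type lower bounds on the binomial coefficients for (\ref{lemma:result2}), with the range $cn\leq m<c'n^{3/2}$ playing the same role (forcing $k\to\infty$ and keeping $k$ small enough relative to $m$ for the lower bound on $\binom{m}{k}$ to apply). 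The only differences are presentational, e.g.\ your framing of $V$ as a product of two binomial point probabilities and your explicit grouping of each binomial coefficient with its Bernoulli factor, which the paper carries out implicitly by writing all terms of $\log V$ at once.
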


\begin{proof}
The proof is mainly computational and the main difficulty comes from upperbounding or lowerbounding the binomial coefficients. We start by writing out $\log(V(m,p,q,\tau,\epsilon))$.

\begin{align*}
\log(V(m,p,q,\tau, \epsilon)=\log \binom{m}{(\tau+\epsilon)\frac{m}{n}\log(n)}+\log \binom{m}{\tau\frac{m}{n}\log(n)} +\frac{m}{n}\tau \log(n) \log(pq)\\
+\frac{m}{n} \epsilon \log(n) \log \left( \frac{q}{1-q}\right)+\left(m -\tau \frac{m}{n}\log(n \right) \log\left((1-p)(1-q)\right)
\end{align*}

In this expression, we replace $p,q$ by their expressions given above and obtain
\begin{align} \label{expressionV1}
\log(V(m,p,q,\tau, \epsilon)&=\log \binom{m}{(\tau+\epsilon)\frac{m}{n}\log(n)}+\log \binom{m}{\tau\frac{m}{n}\log(n)} \nonumber\\
&+\tau \frac{m}{n} \log(n)   \left( \log(\alpha \beta)+2\log \log(n) -2\log(n) \right)\\
&+\epsilon \frac{m}{n} \log(n) \left( \log(\beta)+\log \log(n) -\log(n) +\beta \frac{\log(n)}{n} \right) \nonumber \\
& - \frac{m}{n} \log(n) (\alpha +\beta) + o \left( \frac{m}{n} \log(n) \right) \nonumber
\end{align}

To prove (\ref{lemma:result1}) we upperbound the binomial coefficients using the following result: if $k\leq n$, then $\binom{n}{k} \leq  \left( \frac{ne}{k} \right)^k$ and get
\begin{align}
\log \binom{m}{(\tau +\epsilon) \frac{m}{n}\log(n)} &\leq (\tau +\epsilon) \frac{m}{n} \log(n) \left( \log(n) -\log \log(n) -\log \left( \frac{\tau +\epsilon}{e}\right)\right) \label{bin1}\\
\log \binom{m}{\tau \frac{m}{n} \log(n)} &\leq \tau \frac{m}{n} \log(n) \left( \log(n) -\log \log(n) -\log \left( \frac{\tau}{e}\right)\right) \label{bin2}
\end{align}
We use (\ref{bin1}) and (\ref{bin2}) in (\ref{expressionV1}) and obtain
\begin{align} \label{expressionV2}
-\log(V(m,p,q,\tau, \epsilon)) \geq \frac{m}{n}\log(n) \left( (\alpha +\beta) +(\tau+\epsilon)\log \left( \frac{\tau +\epsilon}{e}\right) +\tau \log \left( \frac{\tau}{e}\right) -\tau \log(\alpha \beta) -\epsilon \log(\beta) \right)\\
 - o \left( \frac{m}{n} \log(n) \right) \nonumber\\ \nonumber
\end{align} 

To prove (\ref{lemma:result2}) we lowerbound the binomial coefficients using the following bound for the binomial coefficient, for $k\leq \sqrt{N}$ (see~\cite{blog:binomial} for a nice presentation) 
\begin{align}
\log \binom{N}{k} \geq k \log(N) -\log(k!) -\log(4)
\end{align}
Merging this inequality with
\begin{align}
\log(k!)\leq (k+1)\log((k+1)/e),
\end{align}
gives
\begin{align}
\log \binom{N}{k} \geq k \log(N) -(k+1)\log(k+1)+k -\log(4)
\end{align}
Given the condition on $m$, we can use the previous inequality and we obtain
\begin{align} \label{binom3}
\log \binom{m}{\tau \frac{m}{n} \log(n)} \geq \tau \frac{m}{n} \log(n) \log(m) - \left(\tau \frac{m}{n} \log(n) +1\right) \log \left( 1+ \tau \frac{m}{n} \log(n) \right) +\tau \frac{m}{n} \log(n) -\log(4)
\end{align}
We expand $\log(1+\tau \frac{1}{n} \log(n))$ as $m\geq c n$ to get
\begin{align} \label{DL1}
\log(1+\tau \frac{m}{n} \log(n)) &=\log(\tau) +\log \left( \frac{m}{n}\right) + \log \log(n) +\frac{1}{\tau \frac{m}{n} \log(n)}+ o\left( \frac{1}{\tau \frac{m}{n} \log(n)} \right)
\end{align}
and replacing (\ref{DL1}) in (\ref{binom3}) we get 
\begin{align} \label{binom4}
\log \binom{m}{\tau \frac{m}{n} \log(n)} \geq \tau \frac{m}{n} \log(n) \left( \log(n)+\log \left( \frac{e}{\tau}\right) -\log \log(n) \right) - o \left( \frac{m}{n} \log(n) \right) 
\end{align}
In the same way
\begin{align} \label{binom5}
\log \binom{m}{(\tau+\epsilon) \frac{m}{n} \log(n)} \geq \tau \frac{m}{n} \log(n) \left( \log(n)+\log \left( \frac{e}{\tau+\epsilon}\right) -\log \log(n) \right) - o \left( \frac{m}{n} \log(n) \right) 
\end{align}
Now using (\ref{binom4}) and (\ref{binom5}) in (\ref{expressionV1}) we get
\begin{align} \label{expressionV3}
-\log(V(m,n,p,q,\tau, \epsilon) \leq \frac{m}{n} \log(n) \left( (\tau +\epsilon) \log \left( \frac{\tau +\epsilon}{e}\right) +\tau \log \left( \frac{\tau}{e}\right) - \tau \log(\alpha \beta) -\epsilon \log(\beta) +(\alpha +\beta) \right)\\
 + o \left( \frac{m}{n}\log(n) \right) \nonumber
\end{align}

\noindent We now consider
\begin{align}
h(\alpha, \beta,\tau,\epsilon)=(\tau +\epsilon) \log \left( \frac{\tau +\epsilon}{e}\right) +\tau \log \left( \frac{\tau}{e}\right) - \tau \log(\alpha \beta) -\epsilon \log(\beta) +(\alpha +\beta)
\end{align}
We minimize $h(\alpha,\beta,\tau,\epsilon)$ with respect to $\tau$. We obtain
\begin{align}
\tau^*= -\frac{\epsilon}{2} +\sqrt{\left(\frac{\epsilon}{2} \right)^2+\alpha \beta}
\end{align}
We replace $\tau$ by $\tau^*$ in (\ref{expressionV2}) and (\ref{expressionV3}) and obtain the results given in the lemma.
\end{proof}

\noindent
{\bf Lemma \ref{lemma-conv}.}
{\it Let $\alpha>\beta>0$, then
\begin{align}
 -\log T\left(\frac{n}2,\pp, \qq , \frac{\log(n)}{\log\log(n)}\right)  &\leq   \left( \frac{\alpha + \beta}2 - \sqrt{\alpha \beta} \right) \log(n) +o\left(\log (n)\right). 
\end{align}}

\begin{proof}
Let $\delta=\delta(n)=\lceil \log(n)/\log\log(n) \rceil$. For sake of brevity we take $p = \frac{\alpha \log n}{n}$ and $q = \frac{\alpha \log n}{n}$. 
 
By definition, $T(n/2,p, q , \delta)$ is larger than the probability that $\sum_{i=1}^{n/2} (Z_i - W_i)$ is equal to $\delta$, hence  
\begin{align}
T(n/2,p, q , \delta) &\geq \sum_{k =0}^{n/2-\delta} {n/2 \choose k}{n/2 \choose k+\delta} p^k (1-p)^{n/2-k} q^{k+\delta} (1-q)^{n/2-k-\delta}.
\end{align}
Choosing $k = \tau \log(n)$, for $\tau >0$, $k$ is in the range $[0,n/2-\delta]$ for $n$ sufficiently large
\begin{align}
&   T\left(\frac{n}2,\pp, \qq , \delta \right) \\
& \geq   \max_{\tau >0} \binom{n/2}{\tau \log(n)} \binom{n/2}{\tau \log(n) +\delta}\left(\pp \right)^{\tau \log(n)}\left( \qq \right)^{\tau \log(n) +\delta} \\
&  \quad \cdot \left(1-\pp\right)^{n/2 - \tau \log(n)} \left(1-\qq\right)^{n/2 - \tau \log(n)-\delta} \label{start}  \\
&= T^* \left( \frac{n}{2}, p,q,\epsilon \right)
\end{align}
where ${n/2 \choose  \tau \log(n)}$ is defined as ${n/2 \choose  \lfloor \tau \log(n) \rfloor }$ if $\tau \log(n)$ is not an integer and $\eps=1/\log \log(n)$.\\

We use the result from Lemma 1 with $m=\frac{n}{2}$ and $\epsilon=1/\log \log(n)$ and notice that
\begin{align*}
&\epsilon \log(\beta) = o(1)\\
&2\sqrt{ \left(\frac{\epsilon}{2} \right)^2+\alpha \beta} =2\sqrt{\alpha \beta} + o(1)\\
&\frac{\epsilon}{2} \log \left( \alpha \beta \frac{\sqrt{(\epsilon/2)^2 +\alpha \beta} +\epsilon/2}{\sqrt{(\epsilon/2)^2 +\alpha \beta} -\epsilon/2} \right) = o(1)
\end{align*}
Hence 
\begin{align*}
-\log(T^* \left( \frac{n}{2}, p,q, \epsilon \right) \leq \left(\frac{\alpha +\beta}{2}-\sqrt{\alpha \beta} \right) \log(n) +o(\log(n))
\end{align*}
and we conclude.
\end{proof}

\begin{lemma}\label{lemma:pnk_fromCLT_ITub} 
Let $W_i$ be a sequence of i.i.d. Bernoulli$\left(\frac{\alpha \log n}n \right)$ random variables and $Z_i$ an independent sequence of i.i.d. Bernoulli$\left(\frac{\beta \log n}n \right)$ random variables. Recall (Definition~\ref{def:definitionofT}) that
\[
T\left( 2k\left(\frac{n}2-k\right),\frac{\alpha\log n}{n},\frac{\beta\log n}{n},0\right) = \p\left( \sum_{i=1}^{2k\left(\frac{n}2-k\right)}Z_i \geq \sum_{i=1}^{2k\left(\frac{n}2-k\right)}W_i  \right).
\]
The following bound holds for $n$ sufficiently large:
\begin{align}
T\left( m,\frac{\alpha\log n}{n},\frac{\beta\log n}{n},0\right)  \leq \exp \left(-\frac{2m}{n} \left( \frac{\alpha+\beta}{2} -\sqrt{\alpha \beta} + o(1)\right) \log(n)\right)
\end{align}
where $m=2k \left( \frac{n}{2}-k\right)$.
\end{lemma}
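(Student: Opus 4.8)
The plan is to use an exponential-moment (Chernoff) bound with the \emph{optimal} exponential tilt, which is precisely what produces the $\sqrt{\alpha\beta}$ term in the exponent. Write $S = \sum_{i=1}^m (Z_i - W_i)$, so that $T(m,p,q,0) = \p(S \geq 0)$. Since $\mathbb{E}[Z_i - W_i] = q - p < 0$ (as $\alpha > \beta$), the event $\{S \geq 0\}$ is a genuine large-deviation event, and for every $t \geq 0$ Markov's inequality applied to $e^{tS}$ gives, using independence and the product form of the moment generating function,
\[
\p(S \geq 0) \leq \mathbb{E}\!\left[e^{tS}\right] = \left(\mathbb{E}\!\left[e^{t(Z_1 - W_1)}\right]\right)^m = \left( (qe^{t} + 1 - q)(p e^{-t} + 1 - p) \right)^m .
\]

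The next step is to optimize the scalar function $\phi(t) = (qe^t + 1 - q)(pe^{-t} + 1 - p)$ over $t$. Setting $\phi'(t) = 0$ yields $e^{t} = \sqrt{p(1-q)/\big(q(1-p)\big)}$, which exceeds $1$ exactly because $p > q$, so the minimizer is admissible; substituting it back collapses $\phi$ to the clean closed form
\[
\min_{t \geq 0} \phi(t) = \left( \sqrt{pq} + \sqrt{(1-p)(1-q)} \right)^{2},
\]
and hence $T(m,p,q,0) \leq \big( \sqrt{pq} + \sqrt{(1-p)(1-q)} \big)^{2m}$. This is the exact Cram\'er rate; the cross term $\sqrt{pq}$ is the source of the $\sqrt{\alpha\beta}$ correction, and this is why one must keep the exact tilt rather than a second-order (Gaussian/Bernstein) approximation, which in this moderate-deviation regime would return a strictly smaller exponent.

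It then remains to substitute $p = \alpha\log(n)/n$ and $q = \beta\log(n)/n$ and expand. As $p, q = O(\log(n)/n) \to 0$ one has $\sqrt{pq} = \sqrt{\alpha\beta}\,\log(n)/n$ and $\sqrt{(1-p)(1-q)} = 1 - (p+q)/2 + O\big((\log(n)/n)^2\big)$, so that
\[
\sqrt{pq} + \sqrt{(1-p)(1-q)} = 1 - \frac{\log(n)}{n}\left( \frac{\alpha+\beta}{2} - \sqrt{\alpha\beta} \right) + O\!\left( \left(\frac{\log(n)}{n}\right)^{2} \right).
\]
Taking logarithms, bounding $\log(1+x) \leq x$, and multiplying by $2m$ gives
\[
\log T(m,p,q,0) \leq -\frac{2m\log(n)}{n}\left( \frac{\alpha+\beta}{2} - \sqrt{\alpha\beta} \right) + O\!\left( \frac{m (\log n)^2}{n^{2}} \right),
\]
and since the error equals $\frac{2m\log(n)}{n}\cdot O(\log(n)/n) = \frac{2m\log(n)}{n}\,o(1)$ with a constant independent of $m$, this is exactly the claimed bound.

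The one delicate point, and the main thing to get right, is this uniformity: the $o(1)$ must be controlled by the \emph{same} vanishing quantity $O(\log(n)/n)$ for every admissible $m = 2k(n/2 - k)$, including the smallest case $k = 1$ (where $m \sim n$ and the exponent is only $\Theta(\log n)$). This is precisely why I would run the exact Chernoff argument rather than bound $T$ by its dominant term $T^{*}$ from Lemma~\ref{lemma1}: the latter route loses a factor equal to the number of summands in the tail, and for small $k$ even a polynomial loss would corrupt the leading constant and shift the threshold. Exact exponential tilting incurs no such loss. If one does prefer to route through Lemma~\ref{lemma1} with $\epsilon = 0$, where $g(\alpha,\beta,0) = \alpha+\beta - 2\sqrt{\alpha\beta}$ already matches the target exponent, then the work shifts entirely to a CLT-type estimate showing that only $n^{o(1)}$ terms contribute significantly to the tail sum.
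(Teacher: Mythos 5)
Your proof is correct, and it takes a genuinely different route from the paper's. You apply Markov's inequality to $e^{tS}$ with the exactly optimized tilt: the minimizer $e^{t^*}=\sqrt{p(1-q)/(q(1-p))}$ is indeed admissible ($t^*>0$ since $p>q$), the function $\phi$ is convex so the critical point is the global minimum, and the closed form $\min_{t\ge 0}\phi(t)=\bigl(\sqrt{pq}+\sqrt{(1-p)(1-q)}\bigr)^2$ checks out, giving the non-asymptotic bound $T(m,p,q,0)\le \bigl(\sqrt{pq}+\sqrt{(1-p)(1-q)}\bigr)^{2m}$ valid for every $m$ and $n$; the expansion $\sqrt{pq}+\sqrt{(1-p)(1-q)}=1-\frac{\log n}{n}\bigl(\frac{\alpha+\beta}{2}-\sqrt{\alpha\beta}\bigr)+O\bigl((\log n/n)^2\bigr)$ then yields the claimed exponent with an error term independent of $m$, so your $o(1)$ is uniform over all $k$, which is exactly what the union bound in Theorem~\ref{theorem:main_upperbound_2} requires. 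The paper instead argues combinatorially: it bounds $T$ by polynomially many copies of the dominant term $T^*$ of the binomial sum, splitting into the regime $m\ge n\log\log n$ (where the $m^2$ counting loss is negligible against the exponent) and the regime $m<n\log\log n$ (where it truncates the tail at $\log^2 n$ via Bernstein's inequality, so that only a $\log^4 n$ factor is lost), and then invokes Lemma~\ref{lemma1} with $\epsilon=0$, whose $g(\alpha,\beta,0)=\alpha+\beta-2\sqrt{\alpha\beta}$ matches your exponent; your worry about small $k$ is precisely what this case split handles. What each approach buys: your tilting argument is shorter, loses no polynomial factors, and needs no case analysis; the paper's heavier term-counting machinery is two-sided, since Lemma~\ref{lemma1} also provides the matching \emph{lower} bound on $T$ needed for the converse (Lemma~\ref{lemma-conv}), which no Chernoff-type bound can supply, so the combinatorial route is amortized across both directions. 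One clarification worth recording: the paper's remark that Chernoff bounds "do not provide the right answer" refers to second-order (Gaussian/Bernstein-style) relaxations of the moment generating function and to the lower-bound direction, not to the fully optimized tilt you use, which, as you correctly observe, retains the cross term $\sqrt{pq}$ and hence the $\sqrt{\alpha\beta}$ in the exponent.
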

\begin{proof}
In the following, for clarity of notation, we have omitted the floor/ceiling symbols for numbers that are not integers but should be. 
Recall that
\begin{align}
T(m,p,q,0)=\p[Z-W \geq 0]
\end{align}
where $Z$ is a Binomial$(m,q)$, $W$ is a Binomial$(m,p)$ and $p=\frac{\alpha \log(n)}{n}$, $q=\frac{\alpha \log(n)}{n}$.\\

The idea behind the proof is to bound $\log(T(m,p,q,0))$ with the dominant term $\log(T^*(m,p,q,0)$ when $n$ is large and then use Lemma \ref{lemma1}. Notice that $n-2 \leq m \leq \frac{n^2}{4}$, we split the proof into 2 parts based on the regime of $m$.\\

The first case corresponds to $m$ such that $m \geq n\log\log n$. What is important is that $n = o(m)$. We have
\begin{align}
T(m,p,q,0) &= \sum_{k_1=0}^m \left(\sum_{k_2=k_1}^m \p(Z=k_2) \right) \p(W=k_1)
\end{align}
Notice that each term in the double-sum can be upper-bounded by $T^*(m,p,q,0)$ as defined in (\ref{lemma1}). 
Hence
\begin{align}
T(m,p,q,0) \leq m^2 T^*(m,p,q,0)
\end{align}
and using (\ref{lemma:result1}) for $\epsilon=0$
\begin{align}
-\log(T(m,p,q,0)) &\geq -2\log(m) -\log(T^*(m,p,q,0))\\
& \geq -2\log(m) + \frac{2m}{n} \left( \frac{\alpha+\beta}{2} -\sqrt{\alpha \beta}\right) \log(n)
\end{align}
As $\frac{m}{n}\geq \log \log n$ and $m \leq n^2/4$, notice that $\log(m) = o\left( \frac{m}{n} \log(n) \right)$ and
\begin{align}
-\log(T(m,p,q,0))  \geq \frac{2m}{n} \left( \frac{\alpha+\beta}{2} -\sqrt{\alpha \beta}\right) \log(n) -  o\left( \frac{m}{n} \log(n) \right).
\end{align}

The second case corresponds to $m< n\log \log n $. We define $\delta = \frac{m}{n}$ and note that $\delta < \log \log n$. Notice that the same idea as in the proof above does not work when $m$ is $O(n)$. Nevertheless a similar idea gives valid results by restricting ourselves to the first $\log^2(n)$ terms of the sum over $m$, breaking $T$ as 
\begin{align} \label{Tdef}
T(m,p,q,0) &= \p \left(  0 \leq Z-W \leq \log^2(n) \right)+ \p \left( Z-W \geq \log^2(n) \right).
\end{align}
We want to control both terms in the above sum. We start off by upperbounding $\p \left( Z-W \geq \log^2(n) \right)$ using Bernstein. Let us consider a sequence $X_j$ of $2m$ centered random variables, 
the first $m$ given by $X_j = Z_j -\frac{\beta{\log n}}n$ and the last $m$ by $X_{j+m} = -W_j + \frac{\alpha{\log n}}n$.
Then $Z-W =\sum_{j=1}^{2m} X_j -m(\alpha -\beta) \frac{\log(n)}{n}$ and 
\[
\sum_{i=1}^{2m}\mathbb{E}X_i^2 = m\left[(\alpha+\beta)\frac{\log n}{n} + \OOO\left( \frac{(\log n)^2}{n^2} \right)\right],
\]
Also, $$\left|X_i\right| \leq 1 + \OOO\left( \frac{\log n}{n} \right).$$
We can hence apply Bernestein's inequality and get, for any $t\geq 0$,
\[
\p\left(\sum_{i=1}^{2m} X_i > t \right) \leq \exp\left( - \frac{\frac12t^2}{m(\alpha+\beta)\frac{\log n}{n} + m\,\OOO\left( \frac{(\log n)^2}{n^2} \right)+\frac13t\left(1+\OOO \left( \frac{\log n}{n} \right)\right)} \right).
\]
Here we take $t= m(\alpha-\beta) \frac{\log(n)}{n} + \log^2(n)$
\begin{align} \label{Bst1}
\p \left( Z-W \geq \log^2(n) \right) &=\p \left( \sum_{i=1}^{2m}X_i > m(\alpha-\beta) \frac{\log(n)}{n} + \log^2(n)\right) \nonumber \\
&\leq \exp\left( - \frac{\frac12 \left( \delta \log(n) (\alpha-\beta) +\log^2(n) \right)^2}{\delta(\alpha+\beta)\log(n) + \delta\,\OOO\left( \frac{(\log n)^2}{n} \right)+\frac13 \left( \delta \log(n) (\alpha-\beta) +\log^2(n) \right)\left(1+\OOO \left( \frac{\log n}{n} \right)\right)} \right) \nonumber\\
& \leq \exp\left( - \frac{\frac12 \log^2(n) \left( \delta \frac{ \alpha-\beta}{\log(n)}+ 1\right)^2}{\delta\frac{\alpha+\beta}{\log(n)} + \delta\,\OOO\left( \frac{1}{n} \right)+\frac13 \left( \delta \frac{\alpha-\beta}{\log(n)} + 1\right)\left(1+\OOO \left( \frac{\log n}{n} \right)\right)} \right) \nonumber\\
& \leq \exp \left(- \Omega (1) \frac{\log^2(n)}{\delta}\right)\nonumber\\
& \leq \exp \left(- \Omega (1) \frac{\log^2(n)}{\log \log n}\right).
\end{align}
We now want to control $\p \left(  0 \leq Z-W \leq \log^2(n) \right)$, note that 
\begin{align} \label{smallsum}
\p \left(  0 \leq Z-W \leq \log^2(n) \right) &=\sum_{k_1=0}^{\log^2(n)} \sum_{k_2=0}^{m-k_1} \p(Z=k_1+k_2) \p(W=k_2) \nonumber\\
& \leq \log^2(n)  \left( \sum_{k_2=0}^{\log^2(n)} \p(Z=k_2) \p(W=k_2) + \sum_{k_2= \log^2(n)}^{m} \p(Z=k_2) \p(W=k_2) \right) \nonumber \\
& \leq \log^4(n) T^*(m,p,q,0) + \log^2(n) \p \left( Z \geq \log^2(n) \right) \p \left(W \geq \log^2(n) \right). 
\end{align}
Much as before we use Bernstein inequality to upperbound $\p \left( Z \geq \log^2(n) \right) $ and $\p \left( W \geq \log^2(n) \right) $. Recall that $Z=\sum_{i=1}^m Z_i$ where $Z_i \sim$ Ber$\left(\frac{\beta \log(n)}{n} \right)$. Define $X_i=Z_i -\frac{\beta \log(n)}{n}.$ We have $$\E \left( X_i^2\right) =\frac{\beta \log(n)}{n} + \OOO \left(\frac{\beta \log^2(n)}{n^2}\right) \quad \text{and} \quad |X_i| \leq 1+\OOO\left( \frac{\log(n)}{n} \right).$$
Hence in the same way as in (\ref{Bst1})
\begin{align} \label{Bst2}
\p \left( Z \geq \log^2(n) \right) &= \p \left( \sum_{i=0}^{m} X_i \geq \log^2(n) + m \beta\frac{\log(n)}{n}\right)\\ \nonumber 
&\leq \exp \left(- \Omega (1) \frac{\log^2(n)}{\log \log n}\right), \nonumber
\end{align}
similarly 
\begin{align} \label{Bst3}
\p \left( W \geq \log^2(n) \right) \leq  \exp \left(- \Omega (1) \frac{\log^2(n)}{\log \log n}\right).
\end{align}
Plugging  (\ref{Bst2}), (\ref{Bst3}) into (\ref{smallsum}) we get
\begin{align} \label{laststep}
\p \left(  0 \leq Z-W \leq \log^2(n) \right) \leq \log^4(n) T^*(m,p,q,0) + \log^2(n) \exp \left(- \Omega (1) \frac{\log^2(n)}{\log \log n}\right)
\end{align}
And plugging  (\ref{Bst1}) and (\ref{laststep}) into (\ref{Tdef}) we obtain
\begin{align}
T(m,p,q,0) \leq  \log^4(n) T^*(m,p,q,0) + \log^2(n) e^{ \left( - \Omega(1) \frac{\log^2(n)}{\log \log n}\right)} + e^{ \left( - \Omega(1) \frac{\log^2(n)}{\log \log n}\right)}
\end{align}
From (\ref{lemma:result1}) and $e^{ \left( - \Omega(1) \frac{\log^2(n)}{\log \log n}\right)} = o\left(e^{\log(n)}\right)$  we get
\begin{align}
- \log( T(m,p,q,0)) &\geq  -4\log \log(n) + \frac{2m}{n}\left( \frac{\alpha+\beta}{2} -\sqrt{\alpha \beta}\right) \log(n) - o(\log(n)) \\
& \geq \frac{2m}{n}\left( \frac{\alpha+\beta}{2} -\sqrt{\alpha \beta}\right) \log(n) - o \left(\frac{m}{n} \log(n) \right).
\end{align}
\end{proof}

\begin{lemma}\label{failupperbound}
Let $(W_i)_i$ and $(Z_i)_i$ be iid and mutually independent Bernouillis with expectations respectively $\frac{\alpha \log(n)}{n}$ and $\frac{\beta \log(n)}{n}$. Define $(W'_i)_i$ and $(Z'_i)_i$ iid copies of $(W_i)_i$ and $(Z_i)_i$. Then:
\begin{align*}
 \p \left( \sum_{i=1}^{(1-\delta(C))\frac{n}{2}} Z_i +\sum_{i=1}^{\delta(C)\frac{n}{2}}W_i \geq  \sum_{i=1}^{(1-\delta(C))\frac{n}{2} - 2\frac{C}{\log(n)}n} W'_i +\sum_{i=1}^{\delta(C)\frac{n}{2} - 2 \frac{C}{\log(n)}n}Z'_i \right) \leq n^{- (g(\alpha,\beta, \delta)+o(1))} + n^{-(1+\Omega(1))}
\end{align*}
where $g(\alpha,\beta,\delta)$ is defined in (\ref{defg}).
\end{lemma}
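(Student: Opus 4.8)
The plan is to treat the probability in the statement exactly like the tail quantities $T$ analyzed by the dominant-term method, reducing it to the estimate of Lemma~\ref{lemma1}. First I would regroup the four Bernoulli sums into the ``wrong-cluster'' side $L := \sum_{i=1}^{(1-\delta)\frac n2}Z_i + \sum_{i=1}^{\delta\frac n2}W_i$ and the ``right-cluster'' side $R := \sum_{i=1}^{(1-\delta)\frac n2 - 2\frac{C}{\log n}n}W_i' + \sum_{i=1}^{\delta\frac n2 - 2\frac{C}{\log n}n}Z_i'$, where $\delta = \delta(C)$, so that we must bound $\p(L\geq R)$. The two sums of length $(1-\delta)\frac n2$ (the $\beta$-sum $\sum Z_i$ on the left and the $\alpha$-sum $\sum W_i'$ on the right) are the dominant contributions, exactly mirroring the comparison $\sum Z_i \geq \sum W_i$ inside $T$, while the two sums of length $\delta\frac n2$ are perturbations coming from the $\delta(C)n$ mislabeled nodes. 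The $-2\frac{C}{\log n}n$ reductions only account for $H_2$ not being the complete graph; they shift each mean by $O(C)=o(\log n)$, so I expect them to be absorbed into the $o(1)$ in the exponent (note they make $R$ smaller, hence the bad event more likely, so this direction must be tracked but is lower order).

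Since $L$ and $R$ are independent, I would write $\p(L\geq R)$ as a sum over the four integer counts $(i_1,i_2,j_1,j_2)$ of a product of four binomial point masses, subject to $i_1+i_2 \geq j_1+j_2$. Restricting to the bulk where all counts are $O(\log n)$, the constrained sum has only $\mathrm{polylog}(n)$-many terms up to an overall polynomial factor, and is therefore bounded by a polynomial factor times the maximal term. That maximal term is a four-factor analogue of the quantity $V(m,p,q,\tau,\epsilon)$ of Lemma~\ref{lemma1}: matching the counts of each type, the product of binomial coefficients and success/failure probabilities reproduces $V$ with $m=\frac n2$ and an effective imbalance governed by $\delta$. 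Maximizing the exponent over the free counting parameter $\tau$ (the Lagrangian optimization whose optimizer $\tau^\ast = -\tfrac\epsilon2 + \sqrt{(\epsilon/2)^2+\alpha\beta}$ is computed in Lemma~\ref{lemma1}) then yields, via (\ref{lemma:result1}) with the normalization $\frac mn\log n$ at $m=\frac n2$, the bound $n^{-(g(\alpha,\beta,\delta)+o(1))}$ with $g$ as in (\ref{defg}); the polynomial prefactor and the mean-shifts are all $o(\log n)$ in the exponent.

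To legitimize the restriction to the bulk I would, as in the proof of Lemma~\ref{lemma:pnk_fromCLT_ITub}, peel off the event that some count is anomalously large and control it with Bernstein's inequality applied to the centred variables $Z_i-\E Z_i$, $W_i-\E W_i$ and their primed copies (whose second moments are $O(\log n / n)$ and whose magnitudes are $1+O(\log n/n)$). Choosing the cutoff so that the Bernstein exponent dominates, this tail contributes at most $n^{-(1+\Omega(1))}$, which is precisely the second term of the claimed bound and is harmless for the subsequent union bound over the $n$ nodes in (\ref{fail:last}).

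The main obstacle is that the two length-$\delta\frac n2$ perturbation sums break the clean two-binomial structure of Lemma~\ref{lemma1}, so the dominant-term optimization is genuinely a four-variable constrained problem rather than the single-parameter maximization handled there. I expect the real work to be in showing that at the optimum the perturbation counts sit at their natural (exponentially tilted) values, so that the whole exponent collapses to $\frac n2\cdot\frac{\log n}{n}$ times the rate of Lemma~\ref{lemma1}, i.e. to $g(\alpha,\beta,\delta)\log n$ up to $o(\log n)$ — being careful about the factor-of-two convention relating the third argument of $g$ to the effective imbalance $\epsilon$. Verifying in addition that the $-2\frac{C}{\log n}n$ reductions and the polynomial term-count perturb the exponent only at the $o(\log n)$ level is what ultimately makes the continuous dependence of $g$ on its third argument (needed later to send the argument to $0$ as $C\to\infty$) usable.
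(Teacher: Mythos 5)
There is a genuine gap at exactly the point you flag as ``the real work'': your plan reduces the bound to a four-variable constrained optimization over the counts of the two main sums and the two perturbation sums, and then merely asserts that at the optimum the perturbation counts sit at their tilted values so that the exponent ``collapses'' to $g(\alpha,\beta,\delta)\log n$. Nothing in the proposal substantiates this, and it cannot be waved away: the length-$\delta\frac n2$ sums have means of order $\delta\alpha\log n$, i.e.\ $\Theta(\log n)$, which is precisely the scale of the exponent, so they genuinely shift the rate function and do not disappear into the $o(\log n)$ error. The paper avoids your four-variable problem entirely by a decoupling device that is absent from your proposal: after enlarging the $Z$-sum to length $\frac n2$ and discarding the $Z'$-sum, it splits the event by union bound at a \emph{tuned} threshold $\gamma\,\delta(C)\log n$ with $\gamma = \frac{1}{\delta\sqrt{\log(1/\delta)}}$. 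The perturbation sums exceed this threshold with probability at most $n^{-(1+\Omega(1))}$ by a multiplicative Chernoff bound (since $\gamma/\alpha\to\infty$ as $\delta\to 0$), and the main comparison becomes exactly the two-binomial quantity $T\bigl(\frac n2,p,q,-\gamma\delta(C)\log(n)\bigr)$ with \emph{negative} slack, which the existing machinery (the adaptation of Lemma~\ref{lemma:pnk_fromCLT_ITub} together with Lemma~\ref{lemma1} at $\epsilon=-\gamma\delta(C)$) bounds by $n^{-g(\alpha,\beta,-\gamma\delta(C))+o(1)}$. The choice of $\gamma$ is the crux: it must be large enough that the Chernoff exponent beats $1$, yet satisfy $\gamma\delta\to 0$ as $\delta\to 0$ so that $g(\alpha,\beta,-\gamma\delta(C))\to f(\alpha,\beta)$, which is what the downstream argument in Section~\ref{black} needs. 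Your proposal contains no analogue of this tuning, and without it (or an actual execution of the four-variable tilted optimization, which is substantially harder than Lemma~\ref{lemma1}) the claimed exponent is unproved.

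Two secondary inaccuracies are worth noting. First, you attribute the $n^{-(1+\Omega(1))}$ term to a Bernstein-based restriction to the bulk; in the paper that term comes from the Chernoff bound on the perturbation sums, while the Bernstein steps (capping $Z-W$ at $\log^2(n)$, inherited from Lemma~\ref{lemma:pnk_fromCLT_ITub}) contribute only superpolynomially small errors of size $\exp\bigl(-\Omega(1)\log^2(n)/\log\log(n)\bigr)$. Second, you treat the $-2\frac{C}{\log(n)}n$ reductions as mean shifts of order $O(C)$ to be absorbed into $o(1)$; in the paper they are instead moved onto the perturbation side of the inequality and swallowed by the same Chernoff bound, which is the correct way to track the unfavorable direction you rightly note.
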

\begin{proof}
Trivially we have
\begin{align}
&\p \left( \sum_{i=1}^{(1-\delta(C))\frac{n}{2}} Z_i +\sum_{i=1}^{\delta(C)\frac{n}{2}}W_i \geq  \sum_{i=1}^{(1-\delta(C))\frac{n}{2} - 2\frac{C}{\log(n)}n} W'_i +\sum_{i=1}^{\delta(C)\frac{n}{2} - 2 \frac{C}{\log(n)}n}Z'_i \right) \\
&\leq \p \left( \sum_{i=1}^{\frac{n}{2}} Z_i +\sum_{i=1}^{\delta(C)\frac{n}{2}}W_i \geq  \sum_{i=1}^{(1-\delta(C))\frac{n}{2} - 2 \frac{C}{\log(n)}n} W'_i \right) \\
&\leq  \p \left( \sum_{i=1}^{\frac{n}{2}} Z_i  - \sum_{i=1}^{\frac{n}{2}}W'_i + \sum_{i=1}^{\delta(C) \frac{n}{2}} W_i + \sum_{i=1}^{\delta(C) \frac{n}{2}+2 \frac{C}{\log(n)}n} W'_i \geq 0 \right) \\
&\leq \p \left( \sum_{i=1}^{\frac{n}{2}} Z_i  - \sum_{i=1}^{\frac{n}{2}}W'_i \geq -\gamma \cdot \delta(C) \log(n) \right) + \p \left( \sum_{i=1}^{\delta(C) \frac{n}{2}} W_i + \sum_{i=1}^{\delta(C) \frac{n}{2}+2 \frac{C}{\log(n)}n} W'_i  \geq \gamma \cdot \delta(C) \log(n) \right) \label{sumoftwoterms}
\end{align}
where $\gamma=\frac{1}{\delta \sqrt{\log(1/\delta)}}$.\\
For the second part of (\ref{sumoftwoterms}), we upperbound using multiplicative Chernoff. Mutliplicative Chernoff states
\begin{align}
\p \left( \sum_{i=1}^{\delta(C)n} W_i \geq (1+\epsilon) \delta(C) \alpha \log(n) \right) \leq \left( \frac{1+\epsilon}{e}\right)^{-(1+\epsilon)\delta(C) \alpha \log(n)}
\end{align}
In our case $1+\epsilon=\frac{\gamma}{\alpha}$. To simplify notation we will write $\delta$ instead of $\delta(C)$ in the following.
\begin{align}
\p \left( \sum_{i=1}^{\delta \frac{n}{2}} W_i + \sum_{i=1}^{\delta \frac{n}{2}+2 \frac{C}{\log(n)}n} W'_i  \geq \gamma \cdot \delta \log(n) \right)  &\leq \p \left(\sum_{i=1}^{\delta n+2 \frac{C}{\log(n)}n} W'_i  \geq \gamma \cdot \delta \log(n) \right)\\
&\leq \p \left(\sum_{i=1}^{\delta n} W'_i  \geq \gamma \cdot \delta \log(n) \right)\\
&\leq \left( \frac{1}{\delta \sqrt{\log(1/\delta)} \cdot \alpha e} \right)^{-\frac{\log(n)}{\sqrt{\log(1/\delta)}}} \\
&\leq n^{-\sqrt{\log(1/\delta)}+ \frac{1}{\sqrt{\log(1/\delta)}} \cdot \left(\log\left(\sqrt{\log(1/\delta)}\right)+\log(\alpha)+ 1 \right)}\\
&\leq n^{-(1 +\Omega(1))}
\end{align}
for small enough $\delta$.

For the first part of (\ref{sumoftwoterms}), we adapt Lemma \ref{lemma:pnk_fromCLT_ITub}.
\begin{align}
\p \left( \sum_{i=1}^{\frac{n}{2}}Z_i -\sum_{i=1}^{\frac{n}{2}} W'_i \geq -\gamma \cdot \delta(C) \log(n) \right) &= T \left( \frac{n}{2},p,q, -\gamma \cdot \delta(C) \log(n) \right) \label{inequality4}\\
&= \p \left( -\gamma \cdot \delta(C) \log(n) \leq Z-W \leq \log(n)^2 \right) +\p \left(Z-W \geq \log(n)^2 \right) \label{inequality2}\\
\end{align}
As shown in  Lemma \ref{lemma:pnk_fromCLT_ITub} the second part of inequality (\ref{inequality2}) can be upperbounded in the following way
\begin{align}
\p \left( Z-W \geq \log(n)^2\right) \leq \exp \left( -\Omega(1) \frac{\log(n)^2}{\log(\log(n))}\right) \label{inequality3}
\end{align}
We now upperbound the first part of inequality (\ref{inequality2}) in a similar way to Lemma \ref{lemma:pnk_fromCLT_ITub}.
\begin{align}
 \p \left( -\gamma \cdot \delta(C) \log(n) \leq Z-W \leq \log(n)^2 \right) \leq \left(\log(n)^2 + \gamma \delta(C) \log(n)\right)^2 T^* \left( \frac{n}{2}, p, q, -\gamma \delta(C) \log(n) \right) \\ \nonumber
+ \log(n)^2 \exp \left( -\Omega(1) \frac{\log(n)^2}{\log(\log(n))}\right) \label{inequality3}
\end{align}
We group inequalities (\ref{inequality2}) and (\ref{inequality3}), we then take the log and using (\ref{lemma:result1}) we obtain
\begin{align}
- \log \left( T\left( \frac{n}{2}, p, q, -\gamma \delta(C) \log(n) \right) \right) \geq \log(n) g(\alpha, \beta, - \gamma \delta(C)) -o(\log(n))
\end{align}
We conclude using (\ref{inequality4})
\begin{align*}
\p \left( \sum_{i=1}^{\frac{n}{2}}Z_i -\sum_{i=1}^{\frac{n}{2}} W'_i \geq -\gamma \cdot \delta(C) \log(n) \right) \leq n^{-g(\alpha, \beta, -\gamma \delta(C)) +o(1)}
\end{align*}

\end{proof}

\subsection{Information Theoretic Lower Bound Proofs}

\begin{lemma}\label{lemma:delta910}
Recall the events defined in (\ref{def:events}). $\p\left( \Delta\right) \geq \frac9{10}$.
\end{lemma}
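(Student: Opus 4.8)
The plan is to bound the complementary event $\Delta^c$ (some node in $H$ has at least $\delta(n)$ neighbours inside $H$) by a union bound over the $|H| = n/\gamma(n)$ nodes of $H$, and to show that the per-node tail probability is small enough to beat this factor. Fix a node $i \in H$ and let $D_i$ denote the number of its neighbours lying in $H$. Since $H \subseteq A$ and any two nodes of $A$ are joined independently with probability $\A$, we have $D_i \sim \mathrm{Binomial}\!\left(|H|-1, \A\right)$. The events $\{D_i \ge \delta(n)\}$ for distinct $i$ are not independent (they share the edges internal to $H$), but the union bound requires no independence, so this is harmless.

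The key quantity to control is $\p(D_i \ge \delta(n))$. I would use the elementary tail bound $\p(\mathrm{Binomial}(N,\pi) \ge k) \le \binom{N}{k}\pi^k$ together with $\binom{N}{k} \le (Ne/k)^k$, which gives
\[
\p\!\left(D_i \ge \delta(n)\right) \le \left(\frac{(|H|-1)\,e\,\A}{\delta(n)}\right)^{\delta(n)}.
\]
Now $|H|\A = \frac{n}{\log^3 n}\cdot\frac{\alpha\log n}{n} = \frac{\alpha}{\log^2 n}$, so with $\delta(n) = \frac{\log n}{\log\log n}$ the base of the exponent is $\frac{e\alpha\log\log n}{\log^3 n}$. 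Taking logarithms,
\[
\log \p\!\left(D_i \ge \delta(n)\right) \le \frac{\log n}{\log\log n}\Bigl(\log(e\alpha) + \log\log\log n - 3\log\log n\Bigr) = -3\log n + o(\log n),
\]
since the $-3\log\log n$ term dominates the bracket. Hence $\p(D_i \ge \delta(n)) \le n^{-3+o(1)}$.

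Finally, applying the union bound over the $|H| = n/\log^3 n$ nodes,
\[
\p\!\left(\Delta^c\right) \le |H|\cdot\p\!\left(D_i \ge \delta(n)\right) \le \frac{n}{\log^3 n}\cdot n^{-3+o(1)} = n^{-2+o(1)} \xrightarrow{n\to\infty} 0,
\]
so $\p(\Delta) = 1 - \p(\Delta^c) \ge \tfrac{9}{10}$ for all sufficiently large $n$. I expect the only delicate point to be the logarithmic bookkeeping in the exponent: one must verify that the specific choices $\gamma(n) = \log^3 n$ and $\delta(n) = \log n/\log\log n$ conspire so that the per-node exponent falls strictly below $-1$ even after the factor $|H| \approx n$ is absorbed — which is exactly why the exponent $-3$ (rather than something closer to $-1$) is comfortable. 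All other steps are routine estimates, and no sharp (non-Chernoff) tail analysis of the kind needed for $\rho(n)$ is required here, since $\delta(n)$ grows much faster than the mean degree $|H|\A \to 0$ inside $H$.
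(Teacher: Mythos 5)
Your proof is correct and follows essentially the same route as the paper: a union bound over the $n/\log^3 n$ nodes of $H$ combined with a per-node tail estimate for a Binomial with mean $\mu=\alpha/\log^2 n$ at threshold $\delta(n)=\log n/\log\log n$, with the same logarithmic bookkeeping yielding $\p(\Delta^c)\le n^{-2+o(1)}$. The only (cosmetic) difference is that you bound the tail via $\binom{N}{k}\pi^k\le (Ne\pi/k)^k$ while the paper uses the multiplicative Chernoff bound $\left(e^{t-1}/t^t\right)^\mu$; since $N\pi=\mu$ these give the identical estimate $(e\mu/\delta(n))^{\delta(n)}$.
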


\begin{proof}
Recall that $\Delta$ is the event that in a graph with $n/\log^3(n)$ vertices where each pair of nodes is connected, independently, with probability $\frac{\alpha\log n}{n}$, every node has degree strictly less than $\frac{\log n}{\log \log n}$. 

Let $\Delta_i$ be the probability that the degree of node $i$ is smaller than $\frac{\log n}{\log \log n}$. Let $X_i$ be iid Bernoulli$\left(\frac{\alpha\log n}{n}\right)$ random variables, then
\[
\p\left( \Delta_i^c \right) = \p\left( \sum_{i=1}^{n/\log^3n-1}X_i \geq \frac{\log n}{\log \log n} \right) \leq \p\left( \sum_{i=1}^{n/\log^3n}X_i \geq \frac{\log n}{\log \log n} \right)
\]
If we set $\mu = \mathbb{E}\left[ \sum_{i=1}^{n/\log^3n}X_i \right] = \frac{n}{\log^3n}\frac{\alpha\log n}n = \alpha\frac1{\log^2n}$, The multiplicative Chernoff bound gives, for any $t>1$,
\[
\p\left( \sum_{i=1}^{n/\log^3n}X_i \geq t \mu \right) \leq \left( \frac{e^{t-1}}{t^t} \right)^\mu.
\]
We consider a slightly weaker version (since $\mu >0$)
\[
\p\left( \sum_{i=1}^{n/\log^3n}X_i \geq t \mu \right) \leq \left( \frac{e^{t-1}}{t^t} \right)^\mu \leq \left( \frac{e^{t}}{t^t} \right)^\mu = \left( \frac{t}e \right)^{-t\mu}.
\] 
This means that, by setting $t = \frac{\log^2 n}{\alpha}\frac{\log n}{\log \log n} =\frac{\log^3 n}{\alpha \log \log n}$ so that $t\mu = \frac{\log n}{\log \log n}$, we have
\[
\p\left( \sum_{i=1}^{n/\log^3n}X_i \geq \frac{\log n}{\log \log n} \right)  \leq  \left( \frac1{e}\left( \frac{\log^3 n}{\alpha \log \log n} \right) \right)^{-\frac{\log n}{\log \log n}}
\]
By union bound we have, for any vertex $i$,
\begin{eqnarray*}
1 - \p\left( \Delta \right) & \leq & \frac{n}{\log^3 n} \p\left( \Delta_i^c \right) \\
 & \leq & \frac{n}{\log^3 n} \left( \frac1{e}\left( \frac{\log^3 n}{\alpha \log \log n} \right) \right)^{-\frac{\log n}{\log \log n}}\\
  & = & \exp\left[ \log\left( \frac{n}{\log^3 n} \right) - \frac{\log n}{\log \log n} \log\left( \frac1{e\alpha}\left( \frac{\log^3 n}{ \log \log n} \right) \right)  \right]\\
  & = & \exp\left[ -2\log n - 3\log\log n + \frac{\log n \log(e\alpha)}{\log \log n}  +\frac{\log n\log\log\log n}{\log \log n}      \right]\\
  & = & \exp\left[ -\left(2 - \OOO\left( \frac{\log\log\log n}{\log\log n} \right) \right)\log n\right], 
\end{eqnarray*}
which proves the Lemma.

\end{proof}

\subsection{SDP Algorithm Proofs}

Recall that $\Gamma^S$ (resp. $C^S$) denotes the projection of $\Gamma$ (resp. C) onto the space S.

\begin{lemma} \label{lemma:lemma2SDP}
If $\p\left(\lambda_{\max}(\Gamma^\1) \geq n-2\beta\log(n)\right)<n^{-\epsilon}$ and $\p\left(\lambda_{\max}(\Gamma^{\perp \1}) \geq (\alpha-\beta) \log(n)\right)< n^{-\epsilon}$ for some $\epsilon>0$ then condition (\ref{SDP:condition2})
\begin{align*}
C-\Gamma \succeq 0 \text{ and } \lambda_{\min}\left( C^{\perp g}-\Gamma^{\perp g}\right)>0
\end{align*}
is verified w.h.p.\ .
\end{lemma}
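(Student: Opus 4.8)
The plan is to reduce condition (\ref{SDP:condition2}) to the two spectral hypotheses by explicitly diagonalizing the deterministic part $C$ and then isolating the single off-diagonal coupling that neither hypothesis controls directly. First I would compute the eigenstructure of $C$. Writing $a'=2\A-1$ and $b'=2\B-1$, and summing the elementary matrices $\Delta^+_{ij},\Delta^-_{ij}$ over same- and cross-community pairs, one obtains
\begin{align*}
C=\frac n2\left(a'-b'\right)I_n-\frac{a'+b'}{2}\1\1^T-\frac{a'-b'}{2}gg^T.
\end{align*}
Since $g\perp\1$, this shows that in the orthogonal decomposition $\mathbb{R}^n=\mathrm{span}(g)\oplus\mathrm{span}(\1)\oplus W$, with $W=\{g,\1\}^{\perp}$, the matrix $C$ is block diagonal, with $Cg=0$, $C\1=\big(n-2\beta\log(n)\big)\1$, and $C|_{W}=(\alpha-\beta)\log(n)\,I_{W}$. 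Because $\alpha>\beta$ and $n-2\beta\log(n)>0$ for large $n$, we get $C\succeq0$ with null space exactly $\mathrm{span}(g)$; note that the two eigenvalues $n-2\beta\log(n)$ and $(\alpha-\beta)\log(n)$ are precisely the quantities appearing in the two hypotheses.

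Second, I would observe that every $\Delta^{\pm}_{ij}$ annihilates $g$ (since $g_i-g_j=0$ on same-community pairs and $g_i+g_j=0$ on cross pairs), so $\Gamma g=0$ and hence $(C-\Gamma)g=0$. Therefore $g$ already spans a null direction of $M:=C-\Gamma$, and condition (\ref{SDP:condition2}) is equivalent to the single statement that $x^T(C-\Gamma)x>0$ for every nonzero $x\perp g$; in particular the global PSD requirement $C-\Gamma\succeq0$ then follows automatically.

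Third, for $x\perp g$ I would write $x=c\,\1/\sqrt n+y$ with $y\in W$. Using the block-diagonality of $C$,
\begin{align*}
x^T(C-\Gamma)x=\big(n-2\beta\log(n)\big)c^2+(\alpha-\beta)\log(n)\|y\|^2-\frac{c^2}{n}\1^T\Gamma\1-\frac{2c}{\sqrt n}\1^T\Gamma y-y^T\Gamma y.
\end{align*}
The two diagonal stochastic terms are exactly what the hypotheses kill: $\tfrac1n\1^T\Gamma\1\le\lambda_{\max}(\Gamma^{\1})<n-2\beta\log(n)$, and, since $y\in\1^{\perp}$, $y^T\Gamma y\le\lambda_{\max}(\Gamma^{\perp\1})\|y\|^2<(\alpha-\beta)\log(n)\|y\|^2$, both on events of probability at least $1-n^{-\epsilon}$. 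So on the intersection of these two events the diagonal part of the quadratic form is strictly positive.

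The main obstacle is the cross term $-\tfrac{2c}{\sqrt n}\1^T\Gamma y$, which is the $(\1,\1^{\perp})$ off-diagonal block of $\Gamma$ and is captured by neither $\Gamma^{\1}$ nor $\Gamma^{\perp\1}$. I would bound it by Cauchy--Schwarz, $|\1^T\Gamma y|\le\|\Gamma\1\|\,\|y\|$, together with a concentration estimate for $\Gamma\1=\sum_{i<j,\,j\notin S(i)}\big((2\B-1)-\am\big)(-2)(e_i+e_j)$: each of its $n$ coordinates is a centered sum of $n/2$ bounded independent variables of total variance $\Theta(\log n)$, so Bernstein and a union bound give $\|\Gamma\1\|=O(\sqrt{n\log n})$ w.h.p. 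Applying Young's inequality $\big|\tfrac{2c}{\sqrt n}\1^T\Gamma y\big|\le\theta c^2+\theta^{-1}\tfrac1n\|\Gamma\1\|^2\|y\|^2$ with $\theta=\sqrt n$, the induced penalty is $O(\sqrt n)$ on the $c^2$-coefficient and $\theta^{-1}\tfrac1n\|\Gamma\1\|^2=O(\log n/\sqrt n)=o(1)$ on the $\|y\|^2$-coefficient. These are of strictly lower order than the positive budgets $\Theta(n)$ and $\Theta(\log n)$ carried by $C$ in the $\1$- and $W$-directions respectively, so they are absorbed once the two diagonal bounds hold with the poly-logarithmic (indeed $\Theta(n)$ and $\Theta(\log n)$) slack that the Bernstein estimates of Lemmas~\ref{lemma:proj1} and \ref{lemma:projorth1} actually furnish. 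Intersecting the three high-probability events then yields $x^T(C-\Gamma)x>0$ for all $x\perp g$, which is condition (\ref{SDP:condition2}). The delicate point throughout is precisely this cross block: it must be shown negligible against a gap that is only $\Theta(\log n)$ on $W$, which is why the sharper concentration of $\Gamma\1$ rather than the bare threshold hypotheses is needed.
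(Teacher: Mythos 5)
Your proposal is correct and follows the same skeleton as the paper's proof: both diagonalize $C$ exactly (your closed form $C=\frac n2(a'-b')I_n-\frac{a'+b'}2\1\1^T-\frac{a'-b'}2gg^T$ reproduces the paper's three eigenvalues $0$, $n-2\beta\log n$ and $(\alpha-\beta)\log n$ on $g$, $\1$ and $\{g,\1\}^{\perp}$ respectively), both observe that every $\Delta^{\pm}_{ij}$ annihilates $g$ so that $(C-\Gamma)g=0$, and both reduce condition (\ref{SDP:condition2}) to positive definiteness of $C-\Gamma$ on $g^{\perp}$, compared blockwise along $\1$ and $\1^{\perp}$. Where you genuinely depart from the paper is the cross term. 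The paper's proof passes directly from the two compressions $\Gamma^{\1}=Q\Gamma Q$ and $\Gamma^{\perp\1}=P\Gamma P$ to the conclusion, silently discarding the off-diagonal block $P\Gamma Q$: for $x=c\,\1/\sqrt n+y$ the term $-\frac{2c}{\sqrt n}\1^T\Gamma y$ does not vanish, since $\1$ is not an eigenvector of $\Gamma$ (only the $\Delta^+$ part annihilates $\1$; the $\Delta^-$ part gives $\Gamma\1\neq 0$), so the two subspace comparisons alone do not formally imply positivity of the full quadratic form. Your Cauchy--Schwarz/Young treatment of this block, using a coordinatewise Bernstein bound on $\Gamma\1$ (which gives $\|\Gamma\1\|=O(\sqrt n\,\log n)$ after a union bound, slightly weaker than your stated $O(\sqrt{n\log n})$ but still yielding penalties $O(\sqrt n)\,c^2$ and $o(1)\|y\|^2$ that are swallowed by the $\Theta(n)$ and $\Theta(\log n)$ budgets), therefore repairs a real lacuna in the paper's argument rather than duplicating it. The honest price, which you correctly flag yourself, is that the lemma's bare hypotheses --- threshold bounds with no quantitative slack --- do not suffice for your absorption step: you need the $\Theta(\log n)$ slack on the $\1^{\perp}$ block that the Bernstein estimate of Lemma~\ref{lemma:projorth1} in fact furnishes whenever the strict inequality $(\alpha-\beta)^2>8(\alpha+\beta)+\frac83(\alpha-\beta)$ holds, together with the separate concentration of $\Gamma\1$. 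So as a proof of the lemma exactly as stated your argument imports slightly stronger inputs than the hypotheses allow, but as a proof of the step actually needed for Theorem~\ref{theorem:main_SDP_provenintheappendix} it is sound and strictly more careful than the paper's own.
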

 \begin{proof} C is the following deterministic symmetric matrix

\begin{equation*}
C=\begin{bmatrix} d & \text{ } &  a & \vrule & \text{ } & \text{ }  & \text{ } \\
\text{ } & \ddots & \text{ } & \vrule & \text{ }&  b & \text{ } \\
a & \text{ } & d & \vrule & \text{ } & \text{ } & \text{ } \\ \hline
 \text{ } & \text{ }  & \text{ } & \vrule & d & \text{ } & a \\
 \text{ }& b & \text{ }  & \vrule & \text{ } & \ddots & \text{ } \\
\text{ } & \text{ } & \text{ }& \vrule & a & \text{ } & d \\
\end{bmatrix}
\end{equation*}
where
\begin{align}
a&=-\frac{2\alpha \log(n)}{n}+1\\
b&=-\frac{2\beta \log(n)}{n}+1\\
d&=(\alpha-\beta)\log(n)-\frac{2\alpha \log(n)}{n}+1
\end{align}

Assuming $\alpha> \beta$ the eigenvalues of C take three distinct values
\begin{itemize}
\item $\lambda_1=n-2\beta \log(n)$ associated to the eigenvector $\1$
\item $\lambda_2=0$ associated to the eigenvector corresponding to the ground truth g
\item $\lambda_3=(\alpha-\beta)\log(n)$ associated to all other eigenvectors
\end{itemize}
g is also an eigenvector for $\Gamma$ corresponding to eigenvalue 0. As a consequence, condition (\ref{SDP:condition2}) is satisfied if the following holds on the orthogonal of g
\begin{align}
\p\left(\lambda_{\min}(C^\1)>\lambda_{\max}(\Gamma^\1)\right)\rightarrow 1 \text{ and } \p\left(\lambda_{\min}(C^{\perp \1})>\lambda_{\max}(\Gamma^{\perp \1})\right)\rightarrow 1 \text{ when } n \rightarrow \infty
\end{align}

This is achieved if
\begin{align}\label{SDP:condition3}
\p\left(n-2\beta \log(n) \leq \lambda_{\max}(\Gamma^{\1})\right)<n^{-\epsilon}  \text{ and } \p\left((\alpha-\beta)\log(n) \leq \lambda_{\max}(\Gamma^{\perp \1})\right)<n^{-\epsilon} \text{ for some } \epsilon>0
\end{align}
\end{proof}

\begin{theorem} \label{theorem:MatrixBernstein}
(Matrix Bernstein) Consider a finite sequence $\{X_k\}$ of independent, random, self adjoint matrices with dimension d. Assume that each random matrix satisfies
\begin{align}
\E X_k=0 \text{ and } \lambda_{\max}(X_k) \leq R \text{ almost surely }
\end{align}
Then, for all $t \geq 0$:
\begin{align}
\p \left( \lambda_{\max}\left(\sum_{k} X_k \right) \geq t \right) \leq d \cdot \exp\left( \frac{-t^2/2}{\sigma^2+Rt/3}\right) \text{ where } \sigma^2:=\left\| \sum_{k} \E X_k^2 \right\|
\end{align}
\end{theorem}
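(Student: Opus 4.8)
The plan is to prove this via the matrix Laplace transform method, following the Ahlswede--Winter and Tropp approach. The starting point is a matrix analogue of the Chernoff bound: for any $\theta>0$, since $x\mapsto e^{\theta x}$ is increasing and $e^{\theta\lambda_{\max}(Y)}=\lambda_{\max}(e^{\theta Y})\le \tr\, e^{\theta Y}$ for any self-adjoint $Y$ (as $e^{\theta Y}\succeq 0$, its top eigenvalue is dominated by its trace), Markov's inequality yields
\begin{equation}
\p\left(\lambda_{\max}\left(\sum_k X_k\right)\ge t\right)\le e^{-\theta t}\,\E\,\tr\exp\left(\theta\sum_k X_k\right).
\end{equation}
The entire problem then reduces to controlling the matrix moment generating function $\E\,\tr\exp(\theta\sum_k X_k)$ and optimizing over $\theta$ at the end.

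The key step, and what I expect to be the main obstacle, is that the summands do not commute, so the expectation of the trace-exponential does not factor as it would in the scalar case. This is resolved by the subadditivity of the matrix cumulant generating function, a consequence of Lieb's concavity theorem (the map $A\mapsto \tr\exp(H+\log A)$ is concave on positive-definite $A$). Applying Lieb's theorem together with Jensen's inequality and peeling off one independent summand at a time gives
\begin{equation}
\E\,\tr\exp\left(\theta\sum_k X_k\right)\le \tr\exp\left(\sum_k\log\E\,e^{\theta X_k}\right).
\end{equation}
This is the genuinely deep ingredient; everything after it is scalar calculus transferred to eigenvalues via monotonicity of operator and trace functions.

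Next I would bound each term $\log\E\,e^{\theta X_k}$. Using $\E X_k=0$ and the almost-sure bound $\lambda_{\max}(X_k)\le R$, the scalar inequality $e^x\le 1+x+\frac{e^{\theta R}-\theta R-1}{(\theta R)^2}\,x^2$, which holds for every $x\le\theta R$ (because $(e^x-x-1)/x^2$ is increasing), applied to the eigenvalues of $\theta X_k$ gives $\E\,e^{\theta X_k}\preceq I+g(\theta)\,\E X_k^2$ with $g(\theta)=\frac{e^{\theta R}-\theta R-1}{R^2}$. Since $\log(1+a)\le a$ and the logarithm is operator monotone, $\log\E\,e^{\theta X_k}\preceq g(\theta)\,\E X_k^2$. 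Summing and using monotonicity of $\tr\exp$ together with $\big\|\sum_k\E X_k^2\big\|=\sigma^2$,
\begin{equation}
\E\,\tr\exp\left(\theta\sum_k X_k\right)\le \tr\exp\left(g(\theta)\sum_k\E X_k^2\right)\le d\,\exp\left(g(\theta)\sigma^2\right),
\end{equation}
the last step bounding the $d$ eigenvalues of the exponential by its largest.

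Finally, combining the pieces gives $\p(\lambda_{\max}\ge t)\le d\exp\!\left(-\theta t+g(\theta)\sigma^2\right)$. Rather than minimizing the exponent exactly, which produces an unwieldy closed form, I would use the elementary bound $g(\theta)\le \frac{\theta^2/2}{1-R\theta/3}$ valid for $0<\theta<3/R$, and then select $\theta=\frac{t}{\sigma^2+Rt/3}$. This choice produces precisely the clean Bernstein exponent $\frac{-t^2/2}{\sigma^2+Rt/3}$ and completes the proof. To summarize, only the Lieb concavity step is nontrivial; the remainder is bookkeeping with operator-monotone and trace-monotone functions.
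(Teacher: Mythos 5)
Your proof is correct: the matrix Laplace transform step, the Lieb-concavity subadditivity bound $\E\,\tr\exp\left(\theta\sum_k X_k\right)\le\tr\exp\left(\sum_k\log\E\,e^{\theta X_k}\right)$, the moment bound $\log\E\,e^{\theta X_k}\preceq g(\theta)\,\E X_k^2$ (which correctly uses only the one-sided hypothesis $\lambda_{\max}(X_k)\le R$, since $(e^x-x-1)/x^2$ is increasing on all of $\mathbb{R}$), and the final choice $\theta=t/(\sigma^2+Rt/3)$ with $g(\theta)\le\frac{\theta^2/2}{1-R\theta/3}$ all check out and yield exactly the stated exponent. The paper itself gives no proof of this theorem --- it imports it verbatim from Tropp \cite{Tropp:TailBoundsRM} --- and your argument is precisely the proof from that cited reference, so you have accurately reconstructed the standard route.
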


This particular formulation of the Theorem can be found in \cite{Tropp:TailBoundsRM}.\\

\begin{lemma} \label{lemma:proj1}
For n big enough, $\p\left(n-2\beta \log(n) \leq \lambda_{\max}(\Gamma^{\1})\right)<n^{-\epsilon}$ for some $\epsilon>0$.
\end{lemma}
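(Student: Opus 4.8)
The plan is to exploit the fact that $\Gamma^{\1}$ is a rank-one matrix, so that its largest eigenvalue is (up to a maximum with zero) a single scalar sum of independent bounded random variables, on which Theorem~\ref{theorem:MatrixBernstein} yields an extremely sharp bound. First I would write $\Gamma$ in the form $\sum\bigl((2\A-1)-\ap\bigr)\Delta^+_{ij}+\sum\bigl((2\B-1)-\am\bigr)\Delta^-_{ij}$ and project onto $\1$ with $\Pi=\tfrac{\1\1^T}{n}$. Since $\Pi(e_i-e_j)=0$, every $\Delta^+_{ij}$ term dies, while $\Pi(e_i+e_j)=\tfrac2n\1$ gives $\Pi\Delta^-_{ij}\Pi=-\tfrac{4}{n^2}\1\1^T$. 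Hence
\begin{equation}
\Gamma^{\1}=\sum_{i<j,\,j\notin S(i)}X_{ij},\qquad X_{ij}:=-\frac{4}{n^2}\bigl((2\B-1)-\am\bigr)\,\1\1^T,
\end{equation}
a sum of $\tfrac{n^2}4$ independent, self-adjoint, mean-zero rank-one matrices (the mean vanishes because $\E\,\am=2\B-1$).

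Next I would compute the two Bernstein parameters. The single nonzero eigenvalue of $X_{ij}$ equals $-\tfrac{4}{n}\bigl((2\B-1)-\am\bigr)$, which is at most $\tfrac{8(1-\B)}{n}\le\tfrac8n$, so $\lambda_{\max}(X_{ij})\le R:=\tfrac8n$ almost surely. For the variance proxy, using $(\1\1^T)^2=n\,\1\1^T$ together with $\E\bigl((2\B-1)-\am\bigr)^2=\Var(\am)=4\B(1-\B)$, one finds $\E X_{ij}^2=\tfrac{64\B(1-\B)}{n^3}\1\1^T$, so summing the $\tfrac{n^2}4$ cross terms and taking the spectral norm (recall $\|\1\1^T\|=n$) gives
\begin{equation}
\sigma^2=\Bigl\|\sum_{i<j,\,j\notin S(i)}\E X_{ij}^2\Bigr\|=\Bigl\|\frac{16\B(1-\B)}{n}\1\1^T\Bigr\|=16\B(1-\B)=O\!\left(\frac{\log n}{n}\right).
\end{equation}

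Finally, I would apply Theorem~\ref{theorem:MatrixBernstein} with $d=n$ and $t=n-2\beta\log(n)$. The point is a clash of scales: $t=\Theta(n)$ while $R=\Theta(1/n)$ and $\sigma^2=\Theta(\log(n)/n)$, so the denominator $\sigma^2+Rt/3$ tends to the constant $\tfrac83$, whereas the numerator $t^2/2$ grows like $n^2/2$. Thus
\begin{equation}
\p\left(\lambda_{\max}(\Gamma^{\1})\ge n-2\beta\log(n)\right)\le n\exp\!\left(\frac{-t^2/2}{\sigma^2+Rt/3}\right)\le n\exp\!\left(-\bigl(\tfrac{3}{16}+o(1)\bigr)n^2\right),
\end{equation}
which is far below $n^{-\epsilon}$ for any fixed $\epsilon>0$ once $n$ is large. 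There is no genuine analytic obstacle here: the deviation being controlled is enormous (it asks the $\tfrac{n^2}4$ cross-community edge indicators, of total mean $\tfrac{n\beta\log n}4$, to instead number on the order of $n^2$), so the only care required is the bookkeeping of the projection and the exact constants $R$ and $\sigma^2$. The truly delicate projection estimate is rather the complementary one, $\lambda_{\max}(\Gamma^{\perp\1})$, handled in the next lemma, where $t$ is only of order $\log(n)$ and the inequality $(\alpha-\beta)^2>8(\alpha+\beta)+\tfrac83(\alpha-\beta)$ becomes binding.
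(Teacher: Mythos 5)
Your proof is correct and takes essentially the same route as the paper's: you project with $\1\1^T/n$, observe that the $\Delta^+_{ij}$ terms are annihilated while each $\Delta^-_{ij}$ projects to a rank-one multiple of $\1\1^T$, and apply Theorem~\ref{theorem:MatrixBernstein} with the same parameters ($\sigma^2 = 16\B\left(1-\B\right)$ exactly as in the paper, and $R = 8/n$ a slightly looser but equally valid version of the paper's $\frac{4}{n}\left(2-\frac{2\beta\log(n)}{n}\right)$), reaching the same $n\exp\left(-\Theta(n^2)\right)$ bound. Your explicit identification of the limiting constant $\frac{3}{16}$ and of the scale clash ($t=\Theta(n)$ against $\sigma^2+Rt/3\to\frac{8}{3}$) is, if anything, a bit more transparent than the paper's presentation, but the argument is the same.
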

  \begin{proof} Let $\epsilon>0$. Let $Q=\frac{\1\1^T}{n}$ be the projection matrix onto the $\1$ space. Then:
\begin{align}
\Gamma^\1 &= Q^T \Gamma Q\\
&= Q^T \left( \sum_{i<j, j \in S(i)}\left(2\A-1- \alpha_{ij}^+\right)\Delta^+_{ij}+\sum_{i<j, j \notin S(i)} \left(2\B-1 -\alpha_{ij}^-\right) \Delta^-_{ij} \right) Q \\
&= \sum_{i<j, j \notin S(i)}  -\frac{4}{n}\cdot \left(2\B-1 -\alpha_{ij}^-\right)Q
\end{align}
using the fact that $\Delta_{ij}^+Q=0_{n}$ and the fact that $Q^T \Delta_{ij}^- Q =-\frac{4}{n}Q$. \\

We have
\begin{align}
\lambda_{\max}\left(-\frac{4}{n}\left(2\B-1 -\alpha_{ij}^-\right)Q\right) \leq \frac{4}{n}\left(2-\frac{2\beta \log(n)}{n}\right) =:R
\end{align}
and
\begin{align} \sigma^2 &= \left\| \sum_{i<j, j \notin S(i)} \mathbb{E}\left[\left(-\frac{4}{n}\left(2\B-1-\alpha_{ij}^-\right)Q\right)^2\right] \right\| \\
&= \left\|\sum_{i<j, j \notin S(i)} \frac{16}{n^2} \cdot 4 \cdot \B\left(1-\B\right)Q \right\| \\
&= 16 \cdot \B\left(1-\B\right)
\end{align}

We then apply Theorem (\ref{theorem:MatrixBernstein})
\begin{align}
& \p\left(\lambda_{\max}(\Gamma^\1) \geq n - 2\beta \log(n)\right) \\
& \leq n\cdot \exp \left( -n^2 \cdot \frac{\left(1-\frac{2\beta\log(n)}{n}\right)^2/2}{16\B\left(1-\B\right)+\frac{4}{3}\left(2-2\B\right)\left(1-2\B\right)}\right)\\
& \leq n^{-\epsilon}
\end{align}
For big n this is clearly verified as $e^{-n^2}=o\left(n^{-(1+\epsilon)}\right)$.
\end{proof}

\begin{lemma} \label{lemma:projorth1}
If $(\alpha-\beta)^2>8(\alpha+\beta)+\frac{8}{3}(\alpha-\beta)$ then $\p\left(\lambda_{\max}(\Gamma^{\perp \1}) \geq (\alpha -\beta) \log(n)\right)< n^{-\epsilon}$ for some $\epsilon>0$.
\end{lemma}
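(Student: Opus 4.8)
The plan is to apply the Matrix Bernstein inequality (Theorem~\ref{theorem:MatrixBernstein}) to $\Gamma^{\perp\1}=P\Gamma P$, where $P=I_n-\frac1n\1\1^T$ is the orthogonal projection onto $\1^\perp$. Writing $\Gamma$ as in the proof of Theorem~\ref{theorem:main_SDP_provenintheappendix}, I would decompose
\[
\Gamma^{\perp\1}=\sum_{i<j,\,j\in S(i)}\left((2\A-1)-\ap\right)P\Delta^+_{ij}P+\sum_{i<j,\,j\notin S(i)}\left((2\B-1)-\am\right)P\Delta^-_{ij}P,
\]
a sum of independent, self-adjoint, zero-mean random matrices (since $\E\ap=2\A-1$ and $\E\am=2\B-1$). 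The first simplification is that $e_i-e_j\perp\1$, so $P\Delta^+_{ij}P=\Delta^+_{ij}$, whereas for $v=e_i+e_j$ one has $P\Delta^-_{ij}P=-(Pv)(Pv)^T$ with $\|Pv\|^2=2-\frac4n$. A quick check shows that both $\1$ and $g$ lie in the kernel of every summand, so the inequality effectively runs on $(\mathrm{span}\{\1,g\})^\perp$, consistently with $\Gamma g=0$.

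For the uniform bound $R$, note each coefficient lies in $\{2\A-2,2\A\}$ (resp.\ $\{2\B-2,2\B\}$) and is at most $2$ in absolute value. Since $\Delta^+_{ij}$ is positive semidefinite with top eigenvalue $2$ and $-(Pv)(Pv)^T$ is negative semidefinite, the positive part of the spectrum is controlled by the ``$-$'' edges: the maximal eigenvalue of each summand is at most $(2-2\B)(2-\frac4n)=4(1+o(1))$, while the ``$+$'' contribution is only $O(\A)$. Hence I would take $R=4(1+o(1))$.

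The crux is computing $\sigma^2=\bigl\|\sum\E X^2\bigr\|$. Using $\E[((2\A-1)-\ap)^2]=4\A(1-\A)$ and $(\Delta^+_{ij})^2=2\Delta^+_{ij}$ gives $\sum_{+}\E(X^+)^2=8\A(1-\A)\sum_{+}\Delta^+_{ij}$, and the two-clique Laplacian identity $\sum_{+}\Delta^+_{ij}=\frac n2 I_n-\frac12(\1\1^T+gg^T)$ shows this equals $4n\A(1-\A)\,\Pi$, where $\Pi$ projects onto $(\mathrm{span}\{\1,g\})^\perp$. Likewise $\sum_{-}\E(X^-)^2=4\B(1-\B)(2-\frac4n)\sum_{-}(Pv)(Pv)^T$, and the cross-community identity $\1_A\1_B^T+\1_B\1_A^T=\frac12(\1\1^T-gg^T)$ yields $\sum_{-}(Pv)(Pv)^T=\frac n2\Pi$ as well. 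The key point is that \emph{both} second-moment sums are multiples of the \emph{same} projector $\Pi$, so they add as scalars on its range, giving $\sigma^2=4(\alpha+\beta)\log(n)\,(1+o(1))$. I expect verifying these two spectral identities (and that their ranges coincide) to be the main obstacle, since it is exactly where the community structure of the model enters.

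Finally, I would apply Theorem~\ref{theorem:MatrixBernstein} with $d=n$, $t=(\alpha-\beta)\log(n)$, $\sigma^2=4(\alpha+\beta)\log(n)(1+o(1))$ and $R=4(1+o(1))$, obtaining
\[
\p\bigl(\lambda_{\max}(\Gamma^{\perp\1})\geq(\alpha-\beta)\log(n)\bigr)\leq n\exp\!\left(-\frac{(\alpha-\beta)^2\log(n)/2}{4(\alpha+\beta)+\frac43(\alpha-\beta)}(1+o(1))\right).
\]
This is $n^{-\epsilon}$ for some $\epsilon>0$ precisely when $\frac{(\alpha-\beta)^2/2}{4(\alpha+\beta)+\frac43(\alpha-\beta)}>1$, i.e.\ $(\alpha-\beta)^2>8(\alpha+\beta)+\frac83(\alpha-\beta)$, which is the hypothesis; the factor $\frac43$ in the denominator is exactly what produces the $\frac83(\alpha-\beta)$ term, via the $Rt/3$ contribution in Bernstein's bound.
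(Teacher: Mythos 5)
Your proposal is correct and follows essentially the same route as the paper's proof: the same decomposition of $\Gamma^{\perp\1}$ into the unprojected within-community terms ($P\Delta^+_{ij}P=\Delta^+_{ij}$) and the projected cross-community terms, the same bounds $R=4$ and $\sigma^2=4(\alpha+\beta)\log(n)(1+o(1))$, and the same application of Matrix Bernstein with $t=(\alpha-\beta)\log(n)$ yielding the condition $(\alpha-\beta)^2>8(\alpha+\beta)+\frac83(\alpha-\beta)$. Your only (correct) refinement is observing that both second-moment sums are exact multiples of the projector onto $(\mathrm{span}\{\1,g\})^\perp$, where the paper instead bounds the two operator norms separately (via the matrix $M$, which is that same projector scaled by $n/2$) and adds them, arriving at the identical $\sigma^2$.
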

\begin{proof} Let $P=I_{n}-\frac{\1 \1^T}{n}$ be the projection matrix onto the $\perp \1$ space. We have:
\begin{align}
\Gamma^{\perp \1} &= P^T \Gamma P\\
&=\Gamma_1+\Gamma_2^{\perp \1}\\
\text{where } \Gamma_1 &= \sum_{i<j, j \in S(i)}\left(2\A-1 - \alpha^{+}_{ij}\right) \cdot \Delta^+_{ij} \\
\Gamma_2^{\perp \1}&=\sum_{i<j, j \notin S(i)} \left( \frac{2\beta \log(n)}{n} - 1 -\alpha_{ij}^- \right) P^T \Delta_{ij}^- P 
\end{align}

$R=\max(R_1,R_2)$ where $R_1$ corresponds to $\Gamma_1$ and $R_2$ corresponds to $\Gamma_2^{\perp \1}$.
\begin{align}
\lambda_{\max}\left(\left(2\A-1 - \alpha^{+}_{ij}\right) \cdot \Delta^+_{ij} \right) &\leq \frac{4 \alpha \log(n)}{n} =:R_1 
\end{align}
\begin{align}
\lambda_{\max}\left(\left(2\B-1 - \alpha^{-}_{ij}\right) \cdot P^T \Delta^-_{ij}P  \right) & \leq   \lambda_{\max}\left(\left(2\B-1 - \alpha^{-}_{ij}\right) \cdot \Delta^-_{ij} \right) \leq 4 =: R_2
\end{align}
Hence we take $R:=4$.\\

$\sigma^2=\sigma_1^2+\sigma_2^2$ where $\sigma_1^2$ corresponds to $\Gamma_1$ and $\sigma_2^2$ corresponds to $\Gamma_2^{\perp \1}$.\\
\begin{align}
\sigma_1^2 &=\left\| \sum_{i<j, j \in S(i)}\E\left(2\A-1 - \alpha^{+}_{ij}\right)^2 \cdot (\Delta^+_{ij})^2  \right\| \\
&=\left\| \sum_{i<j, j \in S(i)}\frac{4\alpha \log(n)}{n}\left(1-\frac{\alpha \log(n)}{n}\right) \cdot 2\Delta^+_{ij} \right\| \\
&= \left\| \frac{4\alpha \log(n)}{n}\left(1-\frac{\alpha \log(n)}{n}\right) \cdot 2M  \right\| \\
&= 4\alpha\log(n)\left(1-\frac{\alpha \log(n)}{n} \right)
\end{align}
where
\begin{equation*}
 M=\begin{bmatrix} \frac{n}{2}-1 & \text{ } & -1 & \vrule & \text{ } & \text{ }  & \text{ } \\
\text{ } & \ddots & \text{ } & \vrule & \text{ }& 0 & \text{ } \\
-1 & \text{ } & \frac{n}{2}-1 & \vrule & \text{ } & \text{ } & \text{ } \\ \hline
 \text{ } & \text{ }  & \text{ } & \vrule & \frac{n}{2}-1 & \text{ } & -1  \\
 \text{ }& 0 & \text{ }  & \vrule & \text{ } & \ddots & \text{ } \\
\text{ } & \text{ } & \text{ }& \vrule & -1 & \text{ } & \frac{n}{2}-1\\
\end{bmatrix}\\
\end{equation*}

\begin{align}
\sigma_2^2 &= \left\| \sum_{i<j, j \notin S(i)}\E\left(2\B-1 - \alpha^{-}_{ij}\right)^2 \cdot \left(P^T \Delta^-_{ij} P\right)^2 \right\| \\
& \leq  \left\| \sum_{i<j, j \notin S(i)} 4\B\left(1-\B\right) \cdot P^T \left(-2\Delta^-_{ij}\right) P \right\| \\
&=  \left\| 4\B\left(1-\B\right) \cdot (-2M)\right\| \\
&= 4\beta \log(n) \left(1-\B \right)
\end{align}

We deduce
\begin{align}
\sigma^2= 4\alpha\log(n)\left(1-\frac{\alpha \log(n)}{n} \right) + 4\beta \log(n) \left(1-\B \right)
\end{align}
We then apply Theorem (\ref{theorem:MatrixBernstein}) using the values found previously and obtain
\begin{align}
&\p\left(\lambda_{\max}(\Gamma^{\perp \1}_2) \geq (\alpha-\beta)\log(n)\right)\\
& \leq n \cdot \exp \left(-\frac{(\alpha-\beta)^2\log(n)}{8\alpha\left(1-\frac{\alpha \log(n)}{n} \right) + 8\beta\left(1-\B \right) +\frac83(\alpha-\beta)} \right)\\
& \leq n^{-\epsilon}
\end{align}
This is equivalent to
\begin{align}
(\alpha-\beta)^2 > 8(\alpha+\beta) + \frac83(\alpha-\beta).
\end{align} 
\end{proof}

\subsection{Full Recovery Algorithm Proof}

\begin{lemma} \label{prop:degH1}
With high probability, the degree of any node in $H_1$ is at most $2 \frac{C}{\log(n)}n$.
\end{lemma}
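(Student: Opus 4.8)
The plan is to observe that the degree of any fixed node $v$ in $H_1$ is a sum of $n-1$ i.i.d.\ Bernoulli$(C/\log n)$ random variables, since $H_1$ is an Erd\H{o}s--R\'enyi graph on $n$ nodes with edge probability $C/\log(n)$ generated independently of the observed graph $\G$. The target $2\frac{C}{\log(n)}n$ is (slightly more than) twice the expected degree, so the statement is a ``degree at most twice its mean'' concentration bound, which I would establish with the same multiplicative Chernoff bound already used in the proof of Lemma~\ref{lemma:delta910}, followed by a union bound over the $n$ nodes.

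First I would fix a node $v$ and write its $H_1$-degree as $\deg_{H_1}(v) = \sum_{i=1}^{n-1} X_i$ with $X_i$ i.i.d.\ Bernoulli$(C/\log n)$, so that
\[
\mu := \E\left[ \deg_{H_1}(v) \right] = (n-1)\frac{C}{\log n}.
\]
Since $2\frac{C}{\log(n)}n \geq 2\mu$, it suffices to bound $\p\left( \deg_{H_1}(v) \geq 2\mu \right)$. Applying the multiplicative Chernoff bound $\p\left( \sum_i X_i \geq t\mu \right) \leq \left( e^{t-1}/t^t \right)^\mu$ with $t=2$ gives
\[
\p\left( \deg_{H_1}(v) \geq 2\mu \right) \leq \left( \frac{e}{4} \right)^\mu = \exp\left( -(\log 4 - 1)\,\mu \right).
\]

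Since $\mu = \Theta(n/\log n)$ and $\log 4 - 1 > 0$, this per-node tail probability is $\exp\left( -\Omega(n/\log n) \right)$. A union bound over all $n$ vertices then yields
\[
\p\left( \exists\, v : \deg_{H_1}(v) \geq 2\frac{C}{\log(n)}n \right) \leq n\, \exp\left( -\Omega\!\left( \frac{n}{\log n} \right) \right),
\]
which tends to $0$ as $n\to\infty$. There is no real obstacle here; the only point worth checking is that the union-bound factor $n = \exp(\log n)$ is negligible against the per-node failure probability, and indeed $n/\log n$ dominates $\log n$ comfortably, so the bound survives the union over all nodes.
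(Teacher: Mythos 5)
Your proposal is correct and follows essentially the same route as the paper's proof: a multiplicative Chernoff bound on the binomial degree of a fixed node, giving a per-node failure probability of $\exp\left(-\Omega\left(n/\log n\right)\right)$, followed by a union bound over the $n$ vertices. The only (immaterial) differences are that the paper pads the degree sum from $n-1$ to $n$ Bernoulli terms for convenience rather than comparing $2\frac{C}{\log(n)}n$ with twice the exact mean, and it states the exponent with the constant $\frac14$ instead of your $\log 4 - 1$.
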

\begin{proof}
Let $(Y_i)_{i=1..n}$ be a sequence of iid Bernouilli random variables of parameter $\frac{C}{\log(n)}$. Consider a node v in $H_1$. $H_1$ being an Erdos-Renyi graph on n vertices, we have that $\deg(v)= \sum_{i=1}^{n-1} Y_i$. Define $Y=\sum_{i=1}^n Y_i$. We have $Y \geq \deg(v)$ hence if $\p\left( Y \geq 2 \frac{C}{\log(n)}n \right) \rightarrow 0$ when $n \rightarrow \infty$, then $\p(\deg(v) \geq 2 \frac{C}{\log(n)}n) \rightarrow 0$ when $n \rightarrow \infty$ and we will have proved the result.\\

As $Y_i \in [0,1] \forall$ $i$ and $\E Y=\frac{C}{\log(n)}n$, using a Chernoff bound we get
\begin{align}
\p \left( Y \geq 2 \frac{C}{\log(n)}n \right) \leq \exp \left( - \frac{1}{4} \cdot \frac{C}{\log(n)}n \right)
\end{align}
where the right hand side goes to 0 when $n \rightarrow \infty$ as C is fixed. Hence using a union bound on all nodes
\begin{align}
\p\left( \exists \text{ a node s. t. its degree is more than } 2\frac{C}{\log(n)}n\right) &\leq n \p \left( Y \geq 2 \frac{C}{\log(n)}n \right) \\
&\leq n \cdot  \exp \left( - \frac{1}{4} \cdot \frac{C}{\log(n)}n \right) \rightarrow 0
\end{align}
when $n \rightarrow \infty$. 
\end{proof}

\end{document}